  \newcommand{\XXXcomment}[1]{}
  \newcommand{\XXXcomment}[1]{\marginpar{\color{blue}{\footnotesize #1}}}
\newcommand{\mystretch}{\renewcommand{\arraystretch}{1.5}}
\newcommand{\normalstretch}{\renewcommand{\arraystretch}{1}}
\newlength{\mysep}
\newlength{\mysepCloser}
\newlength{\mysepFarther}
\newlength{\mysepFarthest}
\newlength{\myMinsize}
\newlength{\myBigger}
\newlength{\myNodeDistance}
\newenvironment{mytikz}[1][0em]{
\begin{tikzpicture}[>=latex,auto,node distance=\mysep,
  baseline={([yshift=#1]current bounding box.east)}]

  \normalstretch{}

  \tikzstyle{w}=[draw,circle,thick,minimum size=\myMinsize]

  \tikzstyle{e}=[draw,minimum size=\myMinsize,node distance=\myNodeDistance]
  
  \tikzstyle{every edge}=[draw,thick,font=\footnotesize]
  
  \tikzstyle{every label}=[font=\footnotesize]
  
  \tikzstyle{ev}=[anchor=west,node distance=\myNodeDistance]

  \tikzstyle{bigger}=[minimum size=\myBigger]

  \tikzstyle{closer}=[node distance=\mysepCloser]  

  \tikzstyle{farther}=[node distance=\mysepFarther]

  \tikzstyle{farthest}=[node distance=\mysepFarthest]

  \tikzstyle{l}=[node distance=\myNodeDistance]
}{\mystretch{}\end{tikzpicture}}
\theoremstyle{definition}
\newtheorem{theorem}{Theorem}[section]
\newtheorem{corollary}[theorem]{Corollary}
\newtheorem{definition}[theorem]{Definition}
\newtheorem{lemma}[theorem]{Lemma}
\newtheorem{example}[theorem]{Example}
\newtheorem{remark}[theorem]{Remark}
\newcommand{\Rat}{\mathbb{Q}}  
\newcommand{\Int}{\mathbb{Z}}  
\newcommand{\M}{{\cal M}}      
\newcommand{\N}{{\cal N}}      
\newcommand{\Prop}{{\bf P}}    
\newcommand{\Lang}{{\cal L}}   
\newcommand{\KB}{{\mathsf{KB}}}                 
\newcommand{\KBeq}{{\mathsf{KB.5}}}             
\newcommand{\KBeqm}{{\mathsf{KB.5}}^{-}}        
\newcommand{\MCl}{\mathsf{MCl}}                   
\newcommand{\Ceq}{{\mathcal{C}^{\mathsf{.5}}}}  
\newcommand{\sem}[1]{\llbracket{#1}\rrbracket}               
\newcommand{\modelsn}{\models_{\mathsf{n}}}                  
\newcommand{\semn}[1]{\llbracket{#1}\rrbracket_{\mathsf{n}}} 
\newcommand{\modelsp}{\models_{\mathsf{p}}}                  
\newcommand{\semp}[1]{\llbracket{#1}\rrbracket_{\mathsf{p}}} 
\newcommand{\ourtitle}{Belief as Willingness to Bet}
\newcommand{\jan}{Jan van Eijck}
\newcommand{\janAffiliation}{CWI \&\ ILLC, Amsterdam}
\newcommand{\bryan}{Bryan Renne}
\newcommand{\bryanAffiliation}{ILLC, University of Amsterdam}
\newcommand{\bryanFunding}{Funded by an Innovational Research
  Incentives Scheme Veni grant from the Netherlands Organisation for
  Scientific Research (NWO).}
\title{\ourtitle{}} 
\author{\jan{}\\{}{\small\janAffiliation{}} \and
  \bryan{}\footnote{\bryanFunding{}}\\{}{\small\bryanAffiliation{}}}
\date{}
\begin{document}

\maketitle

\begin{abstract}
  We investigate modal logics of high probability having two unary
  modal operators: an operator $K$ expressing probabilistic certainty
  and an operator $B$ expressing probability exceeding a fixed
  rational threshold $c\geq\frac 12$.  Identifying knowledge with the
  former and belief with the latter, we may think of $c$ as the
  agent's betting threshold, which leads to the motto ``belief is
  willingness to bet.''  The logic $\KBeq$ for $c=\frac 12$ has an
  $\mathsf{S5}$ $K$ modality along with a sub-normal $B$ modality that
  extends the minimal modal logic $\mathsf{EMND45}$ by way of four
  schemes relating $K$ and $B$, one of which is a complex scheme
  arising out of a theorem due to Scott.  Lenzen was the first to use
  Scott's theorem to show that a version of this logic is sound and
  complete for the probability interpretation.  We reformulate
  Lenzen's results and present them here in a modern and accessible
  form.  In addition, we introduce a new epistemic neighborhood
  semantics that will be more familiar to modern modal logicians.
  Using Scott's theorem, we provide the Lenzen-derivative properties
  that must be imposed on finite epistemic neighborhood models so as
  to guarantee the existence of a probability measure respecting the
  neighborhood function in the appropriate way for threshold
  $c=\frac 12$.  This yields a link between probabilistic and modal
  neighborhood semantics that we hope will be of use in future work on
  modal logics of qualitative probability. We leave open the question
  of which properties must be imposed on finite epistemic neighborhood
  models so as to guarantee existence of an appropriate probability
  measure for thresholds $c\neq\frac 12$.
\end{abstract}

\section{Introduction}

De Finetti \cite{deFinetti51,deFinetti37} proposed the following
axiomatization of qualitative probabilistic comparison (presented here
based on \cite{Sco64:JMP}): for sets $X$, $Y$, and $Z$ coming from the
powerset $\wp(W)$ of a nonempty finite set $W$, we have
\begin{enumerate}
\item $W\npreceq\emptyset$,

\item $\emptyset\preceq X$,
  
\item $X \preceq Y$ or $Y \preceq X$,

\item $X \preceq Y\preceq Z$ implies $X \preceq Z$, and
  
\item $X \preceq Y$ if and only if $X \cup Z \preceq Y \cup Z$ for $Z$
  disjoint from $X$ and $Y$.
\end{enumerate}
De Finetti conjectured that any binary relation $\preceq$ on $\wp(W)$
that satisfies these conditions is \emph{realizable by a probability
  measure $P$} on $\wp(W)$, which means that we have $X\preceq Y$ if
and only if $P(X)\leq P(Y)$.  While every probability measure
realizing a binary relation $\preceq$ on $\wp(W)$ satisfies de
Finetti's conditions, these conditions do not in general guarantee the
existence of a realizing probability measure: it was shown by Kraft,
Pratt, and Seidenberg \cite{KraPraSei59:AMS} (presented here as in
\cite{Segerberg1971:qpiams}) that for $W=\{a,b,c,d,e\}$, the relations
\begin{eqnarray*}
\{c\}\prec\{a,b\},
&
\{b,d\}\prec\{a,c\},
\\
\{a,e\}\prec\{b,c\},
&
\{a,b,c\}\prec\{d,e\}
\end{eqnarray*}
may be extended to a binary relation over $\wp(W)$ that satisfies de
Finetti's conditions and yet has no realizing probability measure.
Kraft, Pratt, and Seidenberg (``KPS'') also determined what was
missing from de Finetti's axiomatization; Scott \cite{Sco64:JMP} later
presented the KPS conditions in a linear algebraic form.

\begin{theorem}[{\cite[Theorem 4.1]{Sco64:JMP}} reformulated with
  $\wp(W)$ instead of a general Boolean algebra]
  Let $W$ be a nonempty finite set. Given $X\in\wp(W)$, let
  $\iota:W\to\{0,1\}$ be the characteristic function of $X$ (i.e.,
  $\iota(X)(w)=1$ if $w\in X$, and $\iota(X)(w)=0$ if $w\notin X$).
  Construe functions $x:W\to\mathbb{R}$ as vectors: $x(w)$ indicates
  the real-number value of vector $x$ at coordinate $w$. Addition and
  negation of these vectors is taken component-wise:
  $(x+y)(w):=x(w)+y(w)$ and $(-x)(w):=-(x(w))$.  A binary relation
  $\preceq$ on $\wp(W)$ is realizable by a probability measure if and
  only if it satisfies each of the following: for each
  $m\in\mathbb{Z}^+$ and $X,Y,X_1,\dots,X_m,Y_1,\dots,Y_m\in\wp(W)$,
  we have
  \begin{enumerate}
  \item $\emptyset\prec W$;

  \item $\emptyset\preceq X$;

  \item $X\preceq Y$ or $Y\preceq X$; and

  \item if $X_i\preceq Y_i$ for each $i\leq m$ and
    $\sum_{i=1}^m \iota(X_i)=\sum_{j=1}^m\iota(Y_j)$, then
    $Y_j\preceq X_j$ for each $j\leq m$.
  \end{enumerate}
\end{theorem}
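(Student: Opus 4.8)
The plan is to prove both implications; the ``only if'' direction is routine, so essentially all of the work lies in the ``if'' direction.

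\emph{Only if.} Suppose $P$ realizes $\preceq$, i.e.\ $X\preceq Y$ iff $P(X)\le P(Y)$. Items 1--3 are immediate from $P(\emptyset)=0<1=P(W)$ and the totality of $\le$ on $\mathbb{R}$. For item 4 the point is that $P$ is \emph{linear in characteristic vectors}: as $X$ is the disjoint union of its singletons, $P(X)=\sum_{w\in W}P(\{w\})\,\iota(X)(w)$. Hence, if $X_i\preceq Y_i$ for all $i\le m$ and $\sum_{i}\iota(X_i)=\sum_{j}\iota(Y_j)$, then summing over $i$ gives $\sum_i P(X_i)=\sum_j P(Y_j)$, while $P(X_i)\le P(Y_i)$ for each $i$; a sum of the nonnegative reals $P(Y_i)-P(X_i)$ that vanishes must vanish termwise, so $P(X_j)=P(Y_j)$, whence $Y_j\preceq X_j$, for every $j$.

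\emph{If.} Assume 1--4, and write $X\sim Y$ for ``$X\preceq Y$ and $Y\preceq X$'', so $X\preceq Y$ iff ($X\prec Y$ or $X\sim Y$). I would first recast realizability as the solvability over $\mathbb{R}^W$ of a finite system of linear (in)equalities. Put $S:=\{\iota(Y)-\iota(X):X\prec Y\}$ and $Z:=\{\iota(Y)-\iota(X):X\sim Y\}$; note $\iota(W)\in S$ by item 1, and $Z=-Z$ since $\sim$ is symmetric. It suffices to find $p\in\mathbb{R}^W$ with $\langle p,s\rangle>0$ for every $s\in S$ and $\langle p,z\rangle\ge 0$ for every $z\in Z$. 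From such a $p$: each $p(w)=\langle p,\iota(\{w\})\rangle\ge 0$ by item 2, and $\sum_w p(w)=\langle p,\iota(W)\rangle>0$ by item 1, so $P(X):=\langle p,\iota(X)\rangle/\langle p,\iota(W)\rangle$ defines a probability measure; moreover $P(X)\le P(Y)$ iff $\langle p,\iota(Y)-\iota(X)\rangle\ge 0$, which holds when $X\prec Y$ (strictly, since $\iota(Y)-\iota(X)\in S$) and when $X\sim Y$ (with equality, since $Z=-Z$ forces $\langle p,\cdot\rangle=0$ on $Z$), hence whenever $X\preceq Y$; the converse implication $P(X)\le P(Y)\Rightarrow X\preceq Y$ follows from totality (item 3).

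It remains to solve the system, and this is where duality enters. By Motzkin's transposition theorem (a theorem of the alternative), the system has \emph{no} solution precisely when there exist nonnegative reals $\mu_s$ $(s\in S)$, not all zero, and $\nu_z$ $(z\in Z)$ with $\sum_{s\in S}\mu_s s+\sum_{z\in Z}\nu_z z=0$. Every vector in $S\cup Z$ has entries in $\{-1,0,1\}$, so the set of such $(\mu,\nu)$ is a rational polyhedral cone; hence if a certificate exists, one exists with all multipliers nonnegative \emph{integers} and still with some $\mu_s\ge 1$. Reading off multiplicities, such an integer certificate is precisely a list of pairs $A_1\preceq B_1,\dots,A_m\preceq B_m$ --- consisting of $\mu_s$ copies of a chosen witnessing pair of each $s\in S$ and $\nu_z$ copies of one for each $z\in Z$ --- that contains at least one strict pair and satisfies $\sum_i\iota(A_i)=\sum_i\iota(B_i)$. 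But then item 4 forces $B_i\preceq A_i$ for every $i$, contradicting the strictness of the strict pair. So no such certificate exists, the system is solvable, and $\preceq$ is realizable. The only step that is more than bookkeeping is the passage from a real certificate to an integer one: it is this that matches the linear-algebraic dual condition against the combinatorial shape of Scott's axiom 4, and it relies on the rationality of the data (all difference vectors lie in $\{-1,0,1\}^W$) together with the fact that a rational polyhedral cone is generated by rational rays.
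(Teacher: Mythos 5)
Your proof is correct. Note first that the paper itself does not prove this theorem: it is stated in the introduction purely as a citation of Scott's Theorem~4.1, and the only Scott result the paper actually proves anything \emph{with} is the linear-functional realization theorem (Theorem~\ref{theorem:scott}, Scott's Theorem~1.2), which it applies in the proof of Theorem~\ref{theorem:lenzen}. Your argument is in essence Scott's own route, with Motzkin's transposition theorem substituted for Scott's Theorem~1.1/1.2 as the separation engine: encode the order data as $\{-1,0,1\}$-valued difference vectors of characteristic functions, reduce realizability to feasibility of a homogeneous system with strict inequalities on $S$ and weak ones on the symmetric set $Z$, and then show that a dual certificate of infeasibility would, after conversion to integer multiplicities, be exactly a violation of axiom~4. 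All the load-bearing steps check out: the reduction to feasibility is right (including $p(w)\geq 0$ from axiom~2 and $\langle p,\cdot\rangle\equiv 0$ on $Z$ from $Z=-Z$), the converse direction via totality is right, and the one genuinely delicate step---passing from a real Motzkin certificate to an integer one while preserving some $\mu_s\geq 1$---is justified correctly by the rationality of the cone of certificates together with the fact that its generators lie in the nonnegative orthant, so some generator must carry the positive $\mu_{s_0}$-coordinate. One small remark: you do not need to worry separately about a vector lying in both $S$ and $Z$ (which would make the primal system trivially infeasible), since your certificate-to-axiom-4 translation already shows that axioms 1--4 rule this out; it might be worth a sentence making that explicit. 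Compared with routing the argument through the paper's Theorem~\ref{theorem:scott}, your direct appeal to a transposition theorem is marginally more self-contained but proves essentially the same duality fact; the structure of the combinatorial bookkeeping (multiplicities of pairs versus coordinatewise sums of characteristic vectors) is identical to what the paper does in the proof of Theorem~\ref{theorem:lenzen}.
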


Scott's fourth condition is the most difficult. The algebraic
component
\begin{equation}
  \textstyle
  \sum_{i=1}^m \iota(X_i)=\sum_{j=1}^m\iota(Y_j)
  \label{eq:Scott4th}
\end{equation}
of this condition says: for each coordinate $w\in W$, the number of
$X_i$'s that contain $w$ is equal to the number of $Y_j$'s that
contain $w$.  Intuitively, Scott's forth condition tells us that if
two length-$m$ sequences of coordinate sets are related component-wise
by the relation ``is no more probable than'' and the occurrence
multiplicity of any given world is the same in each of the sequences,
then the sets are also related component-wise by the relation ``has
the same probability.''

Using Scott's theorem to prove completeness, Segerberg
\cite{Segerberg1971:qpiams} studied a modal logic of qualitative
probability.  Segerberg's logic has a binary operator $\preceq$
expressing qualitative probabilistic comparison and a unary operator
$\Box$ expressing necessity.  G{\"a}rdenfors \cite{Gardenfors75}
considered a simplified version of Segerberg's logic that, among other
differences, eliminated the necessity operator in lieu of the
abbreviation $\Box\varphi:=(1\preceq\varphi)$, which has the semantic
meaning that $\varphi$ has probability $1$ (and implies that $\varphi$
is true at all outcomes with nonzero probability).  Both
G{\"a}rdenfors and Segerberg express the algebraic component
\eqref{eq:Scott4th} of Scott's fourth condition using Segerberg's
notation
\[
(\varphi_1,\dots,\varphi_m\mathbb{E}\psi_1,\dots,\psi_m)\enspace,
\]
which we sometimes shorten to
$(\varphi_i\mathbb{E}\psi_i)_{i=1}^m$. This expression abbreviates the
formula
\[
\Box(F_0\lor\cdots\lor F_m)\enspace,
\]
where each $F_i$ is the disjunction of all conjunctions
\[
d_1\varphi_1\land\cdots\land d_m\varphi_m\land
e_1\psi_1\land\cdots\land e_m\psi_m\land
\]
satisfying the property that exactly $i$ of the $d_k$'s are the empty
string, exactly $i$ of the $e_k$'s are the empty string, and the rest
of the $d_k$'s and $e_k$'s are the negation sign $\lnot$.
Intuitively, $F_i$ says that $i$ of the $\varphi_k$'s are true and $i$
of the $\psi_k$'s are true; $F_0\lor\cdots\lor F_m$ says that the
number of true $\varphi_k$'s is the same as the number of true
$\psi_k$'s; and
$(\varphi_i\mathbb{E}\psi_i)_{i=1}^m:=\Box(F_0\lor\cdots\lor F_m)$
says that at every outcome with nonzero probability, the number of
true $\varphi_k$'s is the same as the number of true $\psi_k$'s.
Using this notation, it is possible to express the fourth condition of
Scott's theorem and thereby obtain completeness for the probabilistic
interpretation.

In the present paper, we follow this tradition of studying probability
from a qualitative (i.e., non-numerical) point of view using modal
logic.  However, our focus shall not be on the binary relation
$\preceq$ of qualitative probabilistic comparison but instead on the
unary notions of certainty (i.e., having probability $1$) and ``high''
probability (i.e., having a probability greater than some fixed
rational-number threshold $c\geq\frac 12$). That is, our interest is
in \emph{unary modal logics of high probability}.

For convenience in this study, we shall identify epistemic notions
with probabilistic assignment, which suggests a connection with
subjective probability \cite{Jeffrey2004:sptrt}.  In particular, we
identify knowledge with probabilistic certainty (i.e., probability
$1$) and belief with probability greater than some fixed
rational-number threshold $c\geq\frac 12$.  Therefore, instead of the
unary operator $\Box$, we shall use the unary operator $K$ and assign
this operator an epistemic reading: $K\varphi$ says that the agent
knows $\varphi$, which means she assigns $\varphi$ subjective
probability $1$.  We shall use the unary modal operator $B$ to express
belief: $B\varphi$ says that the agent believes $\varphi$, which means
she assigns $\varphi$ a subjective probability exceeding the threshold
$c$ (which will always be a fixed value within a given context or
theory).  Though our readings of these formulas are epistemic and
doxastic, we stress that our technical results are independent of this
reading, so someone who disagrees with subjective probability or our
epistemic/doxastic readings is encouraged to think of our work purely
in terms of high probability: $K\varphi$ says $P(\varphi)=1$, and
$B\varphi$ says $P(\varphi)>c$ for some fixed
$c\in[\frac 12,1)\cap\Rat$. That is, the technical results of our work
are in no way dependent on our use of epistemic/doxastic notions or on
the philosophy of subjective probability.

Lenzen \cite{Lenzen2003:kbasp,Lenzen1980:gwuw} is to our knowledge the
first to consider a modal logic of high probability for the threshold
$c=\frac 12$.  Actually, his perspective is slightly different than
the one we adopt here.  First, his reading of formulas is different
(though not in any deep way): he identifies ``the agent is convinced
of $\varphi$'' with $P(\varphi)=1$ and ``$\psi$ is believed'' by
$P(\psi)>\frac 12$.  More substantially, Lenzen's conviction (German:
\emph{\"{U}berzeuging}) does not imply truth.  Technically, this
amounts to permitting the possibility that there are outcomes having
probability zero.  For reasons of personal preference, we forbid this
in our study here, though this difference is non-essential, as it is
completely trivial from the technical perspective to change our
setting to allow zero-probability outcomes or to change Lenzen's
setting to forbid them. Therefore, we credit Lenzen's work
\cite{Lenzen1980:gwuw} as the first to provide a proof of
probabilistic completeness for $c=\frac 12$.  As with Segerberg's and
G{\"a}rdenfors' probabilistic completeness results, Lenzen's proof
made crucial use of Scott's work.

In more recent work, Herzig \cite{Herzig2003:mpbaa} considered a logic
of belief and action in which belief in $\varphi$ is identified with
$P(\varphi)>P(\lnot \varphi)$. This is equivalent to Lenzen's notion,
though Herzig does not study completeness.  Another recent work by
Kyberg and Teng \cite{KyburgTeng2012:tlorkr} investigated a notion of
``acceptance'' in which $\varphi$ is accepted whenever the probability
of $\lnot\varphi$ is at most some small $\epsilon$.  This gives rise
to the minimal modal logic $\mathsf{EMN}$, which is different than
Lenzen's logic.

We herein consider belief \`{a} la Lenzen not only for the case
$c=\frac 12$ but also for the case $c>\frac 12$.  As it turns out, the
logics for these cases are different, though our focus will be on the
logic for $c=\frac 12$ because this is the only threshold for which a
probability completeness result is known.  In particular, probability
completeness for $c>\frac 12$ is still open.  Thresholds $c<\frac 12$
permit simultaneous belief of $\varphi$ and $\lnot\varphi$ while
avoiding belief of any self-contradictory sentence such as the
propositional constant $\bot$ for falsehood. This might suggest some
connection with paraconsistent logic. However, we leave these logics
of low probability for future work, though we shall say a few words
more about them later in this paper.

In Section~\ref{Section:EPL}, we identify a Kripke-style semantics for
probability logic similar to
\cite{EijckSchwarzentruber2014:epls,Halpern2003:rau} (and no doubt to
many others).  We require that all worlds are probabilistically
possible but not necessarily epistemically so, and we provide some
examples of how this semantics works.  In particular, we demonstrate
that our requirement is not problematic: world $v$ can be made to have
probability zero relative to world $w$ if we cut the epistemic
accessibility relation between these worlds.

In Section~\ref{Section:CB}, we define our modal notions of certain
knowledge and of belief exceeding threshold $c$, explain the motto
``belief is willingness to bet,'' and prove a number of properties of
certain knowledge and this ``betting'' belief.  For instance, we show
that knowledge is $\mathsf{S5}$ and belief is not normal.  We show a
number of other threshold-specific properties of betting belief as
well.  In particular, we see that the belief modality extends the
minimal modal logic $\mathsf{EMND45}+\lnot B\bot$ by way of certain
schemes relating knowledge and belief.

We then introduce a formal modal language in
Section~\ref{Section:ENM}, relate this language to the probabilistic
notions of belief and knowledge, and introduce an epistemic
neighborhood semantics for the language.  We study the relationship
between the neighborhood and probabilistic semantics. In particular,
we introduce a notion of ``agreement'' between epistemic probability
models and epistemic neighborhood models, the key component of which
is this: an event $X$ is a neighborhood of a world $w$ if and only if
the probability measure $P_w$ at $w$ satisfies $P_w(X)>c$.  We use one
of Scott's theorems to prove that epistemic neighborhood models
satisfying certain properties give rise to agreeing epistemic
probability models for the threshold $c=\frac 12$.  This result we
credit to Lenzen; however, we prove this result anew in a modern,
streamlined form that we hope will make it more accessible.  The main
remaining open problem is to prove the analogous result for thresholds
$c\neq\frac 12$ (i.e., find the additional sufficient conditions on
epistemic neighborhood models we need to impose so as to guarantee the
existence of an agreeing epistemic probability model for threshold
$c\neq\frac 12$).  Finally, we prove that epistemic probability models
always give rise to agreeing epistemic neighborhood models.

In Section~\ref{Section:Calculi}, we introduce a basic modal theory
$\KB$ that is probabilistically sound.  We adapt an example due to
Walley and Fine \cite{WalleyFine1979:vomacp} that shows $\KB$ is
probabilistically incomplete.  This leads us to add additional
principles to $\KB$, thereby producing the modal theory $\KBeq$, our
name for our modern reformulation of Lenzen's modal theory of
knowledge and belief (or, in Lenzen's terminology, his theory of
``acceptance'' and belief).  Using the results from
Section~\ref{Section:ENM}, we prove that this logic is sound and
complete for epistemic probability models using threshold
$c=\frac 12$. Regarding the semantics based on our epistemic
neighborhood models, we prove that $\KB$ is sound and complete for the
full class of these models and that $\KBeq$ is sound and complete for
the smaller class that satisfies the additional Lenzen-derivative
properties needed to guarantee the existence of an agreeing
probability measure for threshold $c=\frac 12$.

Stated in an analogy: $\KB$ is to de Finetti's axiomatization as
$\KBeq$ is to the KPS/Scott axiomatization.  However, do not be
misled: de Finetti, KPS, and Scott considered qualitative
probabilistic comparison, which is a binary notion based on a binary
operator $\preceq$.  See also \cite{HollidayIcard2013:msaqs} for a
revival of this tradition. We, on the other hand, consider high
probability, which is a unary notion based on unary operators we
denote as $K$ and $B$.

Another version of our main open question can be restated in the
following syntactic form: given a threshold $c\neq\frac 12$, find the
additional principles that must be added to our probabilistically
sound but incomplete base logic $\KB$ in order to obtain a
probabilistically sound and complete logic for threshold $c$.  In our
conclusion, we present some additional sound principles that might
come up in this work, but we have not been able to find the
probabilistically sound and complete axiomatization for thresholds
$c\neq\frac 12$.

Given the link between epistemic neighborhood models and epistemic
probability models, our results may be viewed as a contribution to the
study connecting two schools of rational decision making: the
probabilist (e.g., \cite{koerner2008naive}) and the AI-based (e.g.,
\cite{KyburgTeng2012:tlorkr}).  We also hope that it will be of some
use in future work on qualitative probability.

\section{Epistemic Probability Models}
\label{Section:EPL} 

\begin{definition}
  \label{definition:epistemic-probability-model}
  We fix a set $\Prop$ of propositional letters.  An \emph{epistemic
    probability model} is a structure $\M=(W,R,V,P)$ satisfying the
  following.
  \begin{itemize} 
  \item $(W,R,V)$ is a finite single-agent $\mathsf{S5}$ Kripke model:
    \begin{itemize}
    \item $W$ is a finite nonempty set of ``worlds'' or ``outcomes.''
      An \emph{event} is a set $X\subseteq W$ of worlds.  When
      convenient, we identify a world $w$ with the singleton event
      $\{w\}$.
      
    \item $R\subseteq W\times W$ is an equivalence relation $R$ on
      $W$.  We let
      \[
      [w]:=\{v\in W\mid wRv\}
      \]
      denote the equivalence class of world $w$.  This is the set of
      worlds that agent cannot distinguish from $w$.

    \item $V:W\to\wp(\Prop)$ assigns a set $V(w)$ of propositional
      letters to each world $w\in W$.
    \end{itemize}

  \item $P:\wp(W)\to[0,1]$ is a probability measure over the finite
    algebra $\wp(W)$ satisfying the property of \emph{full support\/}:
    $P(w)\neq0$ for each $w\in W$.
  \end{itemize}
  A \emph{pointed epistemic probability model} is a pair $(\M,w)$
  consisting of an epistemic probability model $\M=(W,R,V,P)$ and
  world $w\in W$ called the \emph{point}.
\end{definition}

The agent's uncertainty as to which world is the actual world is given
by the equivalence relation $R$.  If $w$ is the actual world, then the
probability the agent assigns to an event $X$ at $w$ is given by
\begin{equation}
  P_w(X) := \frac{P(X\cap[w])}{P([w])}\enspace.
  \label{eq:probability}
\end{equation}
In words: the probability the agent assigns to event $X$ at world $w$
is the probability she assigns to $X$ conditional on her knowledge at
$w$. Slogan: subjective probability is always conditioned, and the
most general condition is given by the knowledge of the agent. This
makes sense because the right side of \eqref{eq:probability} is just
$P(X|[w])$, the probability of $X$ conditional on $[w]$.  Note that
$P_w(X)$ is always well-defined: we have $w\in[w]$ by the reflexivity
of $R$ and hence $0<P(w)\leq P([w])$ by full support, so the
denominator on the right side of \eqref{eq:probability} is nonzero.

\begin{example}[Horse racing] 
  \label{ExampleHorseRacing}
  Three horses compete in a race.  For each $i\in\{1,2,3\}$, horse
  $h_i$ wins the race in world $w_i$.  The agent can distinguish
  between these three possibilities, and she assigns the horses
  winning chances of $3{:}2{:}1$.  We represent this situation in the
  form of an epistemic probability model
  $\M_{\ref{ExampleHorseRacing}}$ pictured as follows:
  \begin{center}
    \begin{mytikz}
      \node[w,label={below:$w_1$}] (w1) {$h_1$};

      \node[w,right of=w1,label={below:$w_2$}] (w2) {$h_2$};

      \node[w,right of=w2,label={below:$w_3$}] (w3) {$h_3$};

      \path (w1) edge[<->] node{} (w2);

      \path (w2) edge[<->] node{} (w3);
    \end{mytikz}
    \[
      P = \textstyle\{w_1:\frac 36, w_2:\frac 26, w_3:\frac 16\}
    \]

    $\M_{\ref{ExampleHorseRacing}}$
  \end{center}
  When we picture epistemic probability models, the arrows of the
  agent are to be closed under reflexivity and transitivity.  With
  this convention in place, it is not difficult to verify that
  $P_{w_1} (\{w_1,w_3\}) = \frac 23$; that is, at $w_1$, the assigns
  probability $\frac 23$ to the event that the winner is horse $1$ or
  horse $3$.
\end{example} 

The property of full support says that each world is probabilistically
possible.  Therefore, in order to represent a situation in which the
agent is certain that horse $3$ can never win, we simply make the
$h_3$-worlds inaccessible via $R$.

\begin{example}[Certainty of impossibility]
  \label{ExampleHorseRacing2}
  We modify Example~\ref{ExampleHorseRacing} by eliminating the arrow
  between worlds $w_2$ and $w_3$.
  \begin{center}
    \begin{mytikz}
      \node[w,label={below:$w_1$}] (w1) {$h_1$};

      \node[w,right of=w1,label={below:$w_2$}] (w2) {$h_2$};

      \node[w,right of=w2,label={below:$w_3$}] (w3) {$h_3$};

      \path (w1) edge[<->] node{} (w2);
    \end{mytikz}
    \[
      P = \textstyle\{w_1:\frac 36, w_2:\frac 26, w_3:\frac 16\}
    \]

    $\M_{\ref{ExampleHorseRacing2}}$
  \end{center}
  At world $w_1$ in this picture, there is no accessible world at
  which horse $3$ wins.  Therefore, at world $w_1$, the agent assigns
  probability $0$ to the event that horse $3$ wins: $P_{w_1}(w_3)=0$.
\end{example}

We define a language $\Lang$ for reasoning about epistemic probability
models.

\begin{definition}
  The language $\Lang$ of \emph{(single-agent) probability logic} is
  defined by the following grammar.
  \begin{eqnarray*}
    \varphi & ::= & 
    \top \mid p \mid \neg\varphi \mid \varphi\land\varphi \mid
    t\geq 0
    \\
    t   & ::= & q \mid q \cdot P(\varphi) \mid t+t
    \\
    &&
    \text{\footnotesize 
      $p\in\Prop$,
      $q\in\Rat$
    }
  \end{eqnarray*}
  We adopt the usual abbreviations for Boolean connectives.  We define
  the relational symbols $\leq$, $>$, $<$, and $=$ in terms of $\geq$
  as usual.  For example, $t=s$ abbreviates $(t\geq s)\land(s\geq t)$.
  We also use the obvious abbreviations for writing linear
  inequalities.  For example, $P(p)\leq 1-q$ abbreviates
  $1+(-q)+(-1)\cdot P(p)\geq 0$.
\end{definition}

\begin{definition} 
  Let $\M=(W,R,V,P)$ be an epistemic probability model.  We define
  a binary truth relation $\modelsp$ between a pointed epistemic
  probability model $(\M,w)$ and $\Lang$-formulas as follows.
  \[
  \renewcommand{\arraystretch}{1.3}
  \begin{array}{lcl}
    \M,w\modelsp\top 
    \\
    \M,w\modelsp p & \text{iff} & 
    p \in V(w) 
    \\
    \M,w\modelsp\neg\varphi & \text{iff} &
    \M,w\not\modelsp\varphi
    \\
    \M,w\modelsp\varphi\land\psi & \text{iff} &
    \M,w\modelsp\varphi \text{ and } \M,w\modelsp\psi
    \\
    \M,w\modelsp t\geq 0 & \text{iff} &
    \sem{t}_w\geq 0
  \end{array}
  \]
  \begin{eqnarray*}
    \semp{\varphi} & := &
    \{ u \in W\mid \M,u\modelsp\varphi \}
    \\
    P_w(X) & := & 
    \displaystyle\frac{P(X\cap[w])}{P([w])}
    \\
    \sem{q}_w & := & q
    \\
    \sem{q\cdot P(\varphi)}_w & := & 
    q\cdot P_w(\semp{\varphi}) 
    \\
    \sem{t+t'}_w & := &
    \sem{t}_w + \sem{t'}_w
  \end{eqnarray*}
  Validity of $\varphi\in\Lang$ in epistemic probability model $\M$,
  written $\M\modelsp\varphi$, means that $\M,w\modelsp\varphi$ for each
  world $w\in W$.  Validity of $\varphi\in\Lang$, written $\modelsp\varphi$,
  means that $\M\modelsp\varphi$ for each epistemic probability model
  $\M$.
\end{definition} 

\section{Certainty and Belief} 
\label{Section:CB} 

\cite{Eijck2013:lap} formulates and proves a ``certainty theorem''
relating certainty in epistemic probability models to knowledge in a
version of these models in which the probabilistic information is
removed.  This motivates the following definition.

\begin{definition}[Knowledge as Certainty]
  We adopt the following abbreviations.
  \begin{itemize}
  \item $K\varphi$ abbreviates $P(\varphi)=1$. 

    We read $K\varphi$ as ``the agent knows $\varphi$.''

  \item $\check K\varphi$ abbreviates $\lnot K\lnot\varphi$.

    We read $\check K\varphi$ as ``$\varphi$ is consistent with the agent's
    knowledge.''
  \end{itemize}
\end{definition}

\begin{theorem}[\cite{Eijck2013:lap}]
  \label{theorem:knowledge}
  $K$ is an $\mathsf{S5}$ modal operator:
  \begin{enumerate}
  \item $\modelsp \varphi$ for each $\Lang$-instance $\varphi$ of a scheme
    of classical propositional logic.

    Axioms of classical propositional logic are valid.

  \item $\modelsp K(\varphi\to\psi)\to(K\varphi\to K\psi)$
    
    Knowledge is closed under logical consequence.

  \item $\modelsp K\varphi\to \varphi$

    Knowledge is veridical.
    
  \item $\modelsp K\varphi\to KK\varphi$

    Knowledge is positive introspective:  it is known what is known.
    
  \item $\modelsp \lnot K\varphi\to K\lnot K\varphi$

    Knowledge is negative introspective: it is known what is not
    known.
    
  \item $\modelsp\varphi$ implies $\modelsp K\varphi$

    All validities are known.

  \item $\modelsp\varphi\to\psi$ and $\modelsp\varphi$ together imply
    $\modelsp\psi$.

    Validities are closed under the rule of Modus Ponens.
  \end{enumerate}
\end{theorem}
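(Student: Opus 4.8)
The plan is to reduce the operator $K$ to the ordinary box modality associated with the equivalence relation $R$ of the model, after which all seven clauses become the textbook soundness facts for $\mathsf{S5}$ over equivalence‑relation frames. The whole argument hinges on a single semantic lemma: for every pointed epistemic probability model $(\M,w)$ and every $\Lang$‑formula $\varphi$,
\[
\M,w\modelsp K\varphi
\iff [w]\subseteq\semp{\varphi}
\iff \text{for all }v\in W\text{ with }wRv,\ \M,v\modelsp\varphi.
\]
To prove the first equivalence I would unwind the abbreviation: $\M,w\modelsp K\varphi$ says $P_w(\semp{\varphi})=1$, i.e.\ $P(\semp{\varphi}\cap[w])=P([w])$, i.e.\ $P([w]\setminus\semp{\varphi})=0$. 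This is the one place where the standing hypotheses on $\M$ are used in an essential way: since $P$ has full support, $\emptyset$ is the only event of $P$‑measure zero, so $P([w]\setminus\semp{\varphi})=0$ holds if and only if $[w]\setminus\semp{\varphi}=\emptyset$, i.e.\ $[w]\subseteq\semp{\varphi}$. The second equivalence is immediate from the definition $[w]=\{v\mid wRv\}$.

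Granting the lemma, each clause follows by a short argument. Clauses~1 and~7 are immediate, since the Boolean cases of $\modelsp$ agree pointwise with ordinary propositional semantics. For clause~2 (the $K$‑scheme) one uses $\semp{\varphi\to\psi}\cap\semp{\varphi}\subseteq\semp{\psi}$, so $[w]\subseteq\semp{\varphi\to\psi}$ and $[w]\subseteq\semp{\varphi}$ together give $[w]\subseteq\semp{\psi}$. For clause~3, reflexivity of $R$ gives $w\in[w]$, so $[w]\subseteq\semp{\varphi}$ forces $w\in\semp{\varphi}$. For clauses~4 and~5 the key observation is that $R$ being an equivalence relation makes the equivalence classes a partition: $v\in[w]$ implies $[v]=[w]$. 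Hence $[w]\subseteq\semp{\varphi}$ implies $[v]\subseteq\semp{\varphi}$ for every $v\in[w]$, which by the lemma says $\M,v\modelsp K\varphi$ for every such $v$, i.e.\ $\M,w\modelsp KK\varphi$; dually, $[w]\not\subseteq\semp{\varphi}$ implies $[v]\not\subseteq\semp{\varphi}$ for every $v\in[w]$, i.e.\ $\M,w\modelsp K\lnot K\varphi$. Finally, for clause~6, if $\modelsp\varphi$ then $\semp{\varphi}=W$ in every model, so $[w]\subseteq\semp{\varphi}$ holds trivially for every $w$, giving $\modelsp K\varphi$.

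I do not expect a genuine obstacle here: essentially all the content sits in the semantic lemma, and within it in the single observation that full support upgrades ``$P$‑null'' to ``empty.'' Everything downstream is the standard correspondence between the axioms $\mathsf{T}$, $\mathsf{4}$, $\mathsf{5}$ and reflexive, transitive, symmetric accessibility, which I would spell out in the few lines sketched above rather than merely cite, so the section remains self‑contained. (One mild point to keep in mind: the lemma must be stated for arbitrary $\varphi\in\Lang$, including formulas whose subterms mention $P$, but this causes no difficulty since $\semp{\cdot}$ is defined uniformly on all of $\Lang$.)
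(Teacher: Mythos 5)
Your proof is correct, and your key lemma ($\M,w\modelsp K\varphi$ iff $[w]\subseteq\semp{\varphi}$, with full support upgrading ``$P$-null'' to ``empty'') is exactly the ``certainty theorem'' of \cite{Eijck2013:lap} that the paper invokes instead of giving a proof, after which the seven clauses are the standard $\mathsf{S5}$ soundness facts for equivalence-relation frames, just as you spell out. So you have filled in, correctly and along the intended lines, a proof the paper delegates to the cited reference.
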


We define belief in a proposition $\varphi$ as willingness to take bets
on $\varphi$ with the odds being better than some rational number
$c\in(0,1)\cap\Rat$.  This leads to a number of degrees of belief, one
for each threshold $c$.

\begin{definition}[Belief as Willingness to Bet]
  \label{definition:belief}
  Fix a threshold $c\in(0,1)\cap\Rat$.
  \begin{itemize}
  \item $B^c\varphi$ abbreviates $P(\varphi)>c$.

    We read $B^c\varphi$ as ``the agent believes $\varphi$ with threshold
    $c$.''

  \item $\check B^c\varphi$ abbreviates $\lnot B^c\lnot\varphi$.

    We read $\check B\varphi$ as ``$\varphi$ is consistent with the agent's
    threshold-$c$ beliefs.''
  \end{itemize}
  If the threshold $c$ is omitted (either in the notations $B^c\varphi$
  and $\check B^c\varphi$ or in the informal readings of these
  notations), it is assumed that $c=\frac 12$.
\end{definition}

This notion of belief comes from subjective probability
\cite{Jeffrey2004:sptrt}.  In particular, fix a threshold
$c=p/q\in(0,1)\cap\Rat$.  Suppose that the agent believes $\varphi$
with threshold $c=p/q$; that is, $P(\varphi)>p/q$.  If the agent
wagers $p$ dollars for a chance to win $q$ dollars on a bet that
$\varphi$ is true, then she expects her net winnings to be
\[
[(q-p)\cdot P(\varphi)] - [p\cdot(1-P(\varphi))] = q\cdot P(\varphi) - p
\]
dollars on this bet.  This is a positive number of dollars if and only
if $q\cdot P(\varphi)>p$.  But notice that the latter is guaranteed by
the assumption $P(\varphi)>p/q$.  Therefore, it is rational for the
agent to take this bet.  Said in the parlance of the subjective
probability literature: ``If the agent stakes $p$ to win $q$ in a bet
on $\varphi$, then her winning expectation is positive in case she
believes $\varphi$ with threshold $c= p/q$.''  Or in a short motto:
``Belief is willingness to bet.''

\begin{remark}
  \label{remark:low-threshold}
  Belief based on threshold $c=0$ or $c=1$ is trivial to express in
  terms of negation, $K$, and falsehood $\bot$.  So we do not consider
  these thresholds here.  Beliefs based on low-thresholds
  $c\in(0,\frac 12)\cap\Rat$ have unintuitive and unusual features.
  First, low-threshold beliefs unintuitively permit inconsistency of
  the kind that an agent can believe both $\varphi$ and $\lnot\varphi$
  while avoiding inconsistency of the kind that the agent can believe
  a self-contradictory formula such as $\bot$. (This suggests some
  connection with paraconsistent logic.)  Second, the dual of a
  low-threshold belief implies the belief at that threshold (i.e.,
  $\check B^c\varphi\to B^c\varphi$), which is unusual if we assign
  the usual ``consistency'' reading to dual operators (i.e.,
  ``$\varphi$ is consistent with the agent's beliefs implies $\varphi$
  is believed'' is unusual).  Since low-threshold
  $c\in(0,\frac12)\cap\Rat$ beliefs have these unintuitive and unusual
  features, we leave their study for future work, focusing instead on
  thresholds $c\in[\frac 12,1)\cap\Rat$.
\end{remark}

The following lemma provides a useful characterization of the dual
$\check B^c\varphi$.

\begin{lemma}
  \label{lemma:dual}
  Let $\M=(W,R,V,P)$ be an epistemic probability model.
  \begin{enumerate}
  \item \label{item:dual} $\M,w\modelsp\check B^c\varphi$ iff
    $\M,w\modelsp P(\varphi)\geq 1-c$.

  \item \label{item:dual-half} $\M,w\modelsp\check B^{\frac 12}\varphi$ iff
    $\M,w\modelsp P(\varphi)\geq \frac 12$.

  \end{enumerate}
\end{lemma}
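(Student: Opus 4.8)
The plan is to trace the abbreviations. By definition $\check B^c\varphi := \lnot B^c\lnot\varphi$ and $B^c\psi := (P(\psi) > c)$, so $\check B^c\varphi$ unfolds to $\lnot(P(\lnot\varphi) > c)$; since $>$ is itself defined ``as usual'' from $\geq$ by a negation, and the truth relation $\modelsp$ is classical, the two negations collapse and $\M,w\modelsp\check B^c\varphi$ becomes equivalent to $\M,w\modelsp P(\lnot\varphi) \leq c$, i.e.\ to $P_w(\semp{\lnot\varphi}) \leq c$. This first piece of bookkeeping is the only one that warrants any care.

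Next I would record the complementation identity $\semp{\lnot\varphi} = W\setminus\semp{\varphi}$, which is immediate from the truth clause for $\lnot$, and combine it with finite additivity of $P$ on $\wp(W)$: writing $X = \semp{\varphi}$, we get $P_w(W\setminus X) = P([w]\setminus X)/P([w]) = \bigl(P([w]) - P([w]\cap X)\bigr)/P([w]) = 1 - P_w(X)$, the denominator being nonzero by full support, as already noted after \eqref{eq:probability}. Hence $P_w(\semp{\lnot\varphi}) = 1 - P_w(\semp{\varphi})$.

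Chaining the two observations gives part~\ref{item:dual}: $\M,w\modelsp\check B^c\varphi$ iff $1 - P_w(\semp{\varphi}) \leq c$ iff $P_w(\semp{\varphi}) \geq 1-c$ iff $\M,w\modelsp P(\varphi) \geq 1-c$. Part~\ref{item:dual-half} is then just the instance $c = \frac 12$, for which $1 - c = \frac 12$.

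I do not anticipate any genuine obstacle: the argument is pure unwinding of definitions together with finite additivity of the conditional measure $P_w$. The only place to stay attentive is the first paragraph, where the chain $\check B^c := \lnot B^c\lnot$, $B^c := (P(\cdot) > c)$, and the reading of $>$ via $\geq$ must be followed carefully so that the negations land correctly as a single $\leq$ rather than a strict inequality.
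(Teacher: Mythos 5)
Your proposal is correct and follows essentially the same route as the paper: unfold $\check B^c := \lnot B^c\lnot$, turn $\lnot(P_w(\semp{\lnot\varphi})>c)$ into $P_w(\semp{\lnot\varphi})\leq c$ by totality of the order, and pass to $P_w(\semp{\varphi})\geq 1-c$ via $\semp{\lnot\varphi}=W-\semp{\varphi}$, with Item~\ref{item:dual-half} as the instance $c=\frac 12$. The only difference is that you spell out the additivity computation $P_w(W\setminus X)=1-P_w(X)$ explicitly, which the paper leaves implicit.
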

\begin{proof}
  For Item~\ref{item:dual}, we have the following:
  \[
  \renewcommand{\arraystretch}{1.3}
  \begin{array}{lll}
    &
    \M,w\modelsp\check B^c\varphi
    \\
    \text{iff} &
    \M,w\modelsp\lnot B^c\lnot\varphi 
    & \text{by definition of $\check B^c\varphi$}
    \\
    \text{iff} &
    P_w(\semp{\lnot\varphi})\not>c
    & \text{by definition of $B^c\varphi$ and $\modelsp$}
    \\
    \text{iff} &
    P_w(\semp{\lnot\varphi})\leq c
    & \text{since $\Rat$ is totally ordered}
    \\
    \text{iff} &
    P_w(\semp{\varphi})\geq 1-c
    & \text{since $\semp{\lnot\varphi}=W-\semp{\varphi}$}
  \end{array}
  \]
  For Item~\ref{item:dual-half}, apply Item~\ref{item:dual} 
  with $c =\frac 12$. 
\end{proof}


We now consider a simple example.

\begin{example}[Non-normality]
  \label{example:non-normality}
  In this variation, all horses have equal chances of winning and the
  agent knows this.
  \begin{center}
    \begin{mytikz}
      \node[w,label={below:$w_1$}] (w1) {$h_1$};

      \node[w,right of=w1,label={below:$w_2$}] (w2) {$h_2$};

      \node[w,right of=w2,label={below:$w_3$}] (w3) {$h_3$};

      \path (w1) edge[<->] node{} (w2);

      \path (w2) edge[<->] node{} (w3);
    \end{mytikz}

    $P=\{w_1:\frac 13, w_2:\frac 13, w_3:\frac 13$\}

    \bigskip
    $\M_{\ref{example:non-normality}}$
  \end{center}
  Recalling that an omitted threshold $c$ is implicitly assumed to be
  $\frac 12$, the following are readily verified.
  \begin{enumerate}
  \item
    $\M_{\ref{example:non-normality}}\modelsp B(h_1\lor h_2\lor h_3)$.

    The agent believes the winning horse is among the three.

    (The agent is willing to bet that the winning horse is among the
    three.)

  \item
    $\M_{\ref{example:non-normality}}\modelsp B(h_1\lor h_2)\land
    B(h_1\lor h_3)\land B(h_2\lor h_3)$.

    The agent believes the winning horse is among any two.

    (The agent is willing to bet that the winning horse is among any
    two.)

  \item \label{item:conjuncts}
    $\M_{\ref{example:non-normality}}\modelsp B_a\lnot h_1\land
    B_a\lnot h_2\land B_a\lnot h_3$.

    The agent believes the winning horse is not any particular one.

    (The agent is willing to bet that the winning horse is not any
    particular one.)

  \item \label{item:conjunction}
    $\M_{\ref{example:non-normality}}\modelsp \lnot B(\lnot h_1\land \lnot
    h_2)$.

    The agent does not believe that both horses $1$ and $2$ do not
    win.

    (The agent is not willing to bet that both horses $1$ and $2$ do
    not win.)
  \end{enumerate}
\end{example} 

It follows from Items~\ref{item:conjuncts} and \ref{item:conjunction}
of Example~\ref{example:non-normality} that the present notion of
belief is not closed under conjunction.  This is discussed as part of
the literature on the ``Lottery Paradox''
\cite{Kyburg1961:patlorb}.\footnote{The usual formulation of the
  Lottery Paradox: it is paradoxical for an agent to believe that one
  of $n$ lottery tickets will be a winner (i.e., ``some ticket is a
  winner'') without believing of any particular ticket that it is the
  winner (i.e., ``for each $i\in\{1,\dots,n\}$, ticket $i$ is not a
  winner'').}  However, there is no reason in general that it is
paradoxical to assign a conjunction $\varphi \land \psi$ a lower
probability than either of its conjunctions.  Indeed, if $\varphi$ and
$\psi$ are independent, then the probability of their conjunction
equals the product of their probabilities, so unless one of $\varphi$ or
$\psi$ is certain or impossible, the probability of $\varphi \land \psi$
will be less than the probability of $\varphi$ and less than the
probability of $\psi$.

We set aside philosophical arguments for or against closure of belief
under conjunction and instead turn our attention to the study of the
properties of the present notion of belief.  One of these is a
complicated but useful property due to Scott \cite{Sco64:JMP} that
makes use of notation due to Segerberg \cite{Segerberg1971:qpiams}.

\begin{definition}[Segerberg notation; \cite{Segerberg1971:qpiams}]
  \label{definition:segerberg-notation}
  Fix a positive integer $m\in\Int^+$ and formulas
  $\varphi_1,\dots,\varphi_m$ and $\psi_1,\dots,\psi_m$.  The expression
  \begin{equation}
    (\varphi_1,\dots,\varphi_m\mathbb{I}\psi_1,\dots,\psi_m)
    \label{eq:segerberg}
  \end{equation}
  abbreviates the formula
  \[
  K(F_0\lor F_1\lor F_2 \lor \cdots \lor F_m)\enspace,
  \]
  where $F_i$ is the disjunction of all conjunctions
  \[
  d_1\varphi_1\land\cdots\land d_m\varphi_m \land
  e_1\psi_1\land\cdots\land e_m\psi_m
  \]
  satisfying the property that \emph{exactly} $i$ of the $d_k$'s are
  the empty string, \emph{at least} $i$ of the $e_k$'s are the empty
  string, and the rest of the $d_k$'s and $e_k$'s are the negation
  sign $\lnot$.  We may write $(\varphi_i\mathbb{I}\psi_i)_{i=1}^m$ as
  an abbreviation for \eqref{eq:segerberg}.  Finally, let
  \[
  (\varphi_i\mathbb{E}\psi_i)_{i=1}^m
  \quad\text{abbreviate}\quad
  (\varphi_i\mathbb{I}\psi_i)_{i=1}^m\land(\psi_i\mathbb{I}\varphi_i)_{i=1}^m
  \enspace.
  \]
  We also allow the use of $\mathbb{E}$ in a notation similar to
  \eqref{eq:segerberg}.
\end{definition}

The formula $(\varphi_i\mathbb{I}\psi_i)_{i=1}^m$ says that the agent
knows that the number of true $\varphi_i$'s is less than or equal to the
number of true $\psi_i$'s.  Put another way,
$(\varphi_i\mathbb{I}\psi_i)_{i=1}^m$ is true if and only if every one of
the agent's epistemically accessible worlds satisfies at least as many
$\psi_i$'s as $\varphi_i$'s.  The formula
$(\varphi_i\mathbb{E}\psi_i)_{i=1}^m$ says that every one of the agent's
epistemically accessible worlds satisfies exactly as many $\psi_i$'s
as $\varphi_i$'s.

\begin{definition}[Scott scheme; \cite{Sco64:JMP}]
  \label{definition:scott-schemes}
  We define the following scheme:
  \begin{equation}\tag{Scott}
    \textstyle [
    (\varphi_i\mathbb{I}\psi_i)_{i=1}^m
    \land B^c \varphi_1 \land \bigwedge_{i=2}^m \check B^c \varphi_i] \to
    \bigvee_{i=1}^m B^c\psi_i
  \end{equation}
  If $m=1$, then $\bigwedge_{i=2}^m \check B^c\varphi_i$ is $\top$.  Note
  that (Scott) is meant to encompass the indicated scheme for each
  positive integer $m\in\Int^+$.
\end{definition}

(Scott) says that if the agent knows the number of true $\varphi_i$'s is
less than or equal to the number of true $\psi_i$'s, she believes
$\varphi_1$ with threshold $c$, and the remaining $\varphi_i$'s are each
consistent with her threshold-$c$ beliefs, then she believes one of
the $\psi_i$'s with threshold $c$.  Adapting a proof of Segerberg
\cite{Segerberg1971:qpiams}, we show that belief with threshold
$c=\frac12$ satisfies (Scott).

We report this result along with a number of other properties
in the following proposition.

\begin{theorem}[Properties of Belief]
  \label{theorem:belief}
  For $c\in(0,1)\cap\Rat$, we have:
  \begin{enumerate}
  \item \label{item:B-not-normal}
    $\not\modelsp B^c(\varphi\to\psi)\to(B^c\varphi\to B^c\psi)$.

    Belief is not closed under logical consequence.

    (So $B^c$ is not a normal modal operator.)

  \item \label{item:B-not-T} $\not\modelsp B^c\varphi\to\varphi$.

    Belief is not veridical.

  \item \label{item:B-C} $\modelsp K\varphi\to B^c\varphi$.

    What is known is believed.

  \item \label{item:B-B} $\modelsp\lnot B^c\bot$.

    The propositional constant $\bot$ for falsehood is not believed.

  \item \label{item:B-N} $\modelsp B^c\top$.

    The propositional constant $\top$ for truth is believed.

  \item \label{item:B-Ap} $\modelsp B^c\varphi\to KB^c\varphi$.

    What is believed is known to be believed.

  \item \label{item:B-An} $\modelsp \lnot B^c\varphi\to K\lnot B^c\varphi$.

    What is not believed is known to be not believed.

  \item \label{item:B-M}
    $\modelsp K(\varphi\to\psi)\to(B^c\varphi\to B^c\psi)$.

    Belief is closed under known logical consequence.

  \item \label{item:B-D} If $c\in[\frac 12,1)$, then
    $\modelsp B^c\varphi\to \check B^c\varphi$.

    High-threshold belief is consistent: belief in $\varphi$ implies
    disbelief in $\lnot\varphi$.

  \item \label{item:B-SC}
    $\modelsp \check{B}^{\frac 12} \varphi \land \check{K}(\neg \varphi
    \land \psi) \rightarrow B^{\frac 12} (\varphi \lor \psi)$.
    
    For mid-threshold belief, if $\varphi$ is consistent with the agent's
    beliefs and $\lnot\varphi\land\psi$ is consistent with her knowledge,
    then she believes $\varphi\lor\psi$.

  \item \label{item:B-Len}
    $ \textstyle \modelsp [(\varphi_i\mathbb{I}\psi_i)_{i=1}^m \land
    B^{\frac 12}\varphi_1 \land \bigwedge_{i=2}^m \check B^{\frac
      12}\varphi_i] \to \bigvee_{i=1}^m B^{\frac 12}\psi_i $.

    Mid-threshold belief satisfies (Scott).
  \end{enumerate}
\end{theorem}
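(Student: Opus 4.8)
The plan is to verify each item by translating the modal abbreviations into statements about the conditional probability $P_w$ and then arguing semantically at an arbitrary pointed model $(\M,w)$. Items~\ref{item:B-not-normal} and \ref{item:B-not-T} are non-validities, so it suffices to exhibit a single counterexample; the model $\M_{\ref{example:non-normality}}$ already does the job, since there $P_{w_1}(\{w_1,w_2\})=P_{w_1}(\{w_1,w_3\})=\frac 23>\frac 12$ but $P_{w_1}(\{w_1\})=\frac 13\not>\frac 12$, which refutes both closure under (known-free) consequence and veridicality at threshold $\frac 12$, and analogous equiprobable models handle other $c$. Items~\ref{item:B-C}, \ref{item:B-B}, \ref{item:B-N} are immediate from the definitions: $P_w(\sem{\varphi})=1>c$ whenever $K\varphi$ holds, $P_w(\emptyset)=0\not>c$, and $P_w(W)\cap[w])/P([w])=1>c$.

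Items~\ref{item:B-Ap}, \ref{item:B-An}, \ref{item:B-M}, \ref{item:B-D}, \ref{item:B-SC} are the ``relational'' properties. For \ref{item:B-Ap} and \ref{item:B-An}, the key observation is that $P_w$ depends only on the equivalence class $[w]$, so for every $v\in[w]$ we have $P_v=P_w$; hence the truth value of $B^c\varphi$ (equivalently $P_w(\sem\varphi)>c$) is constant on each $R$-class, which gives both $B^c\varphi\to KB^c\varphi$ and its contrapositive-style negative counterpart. For \ref{item:B-M}, if $\M,w\modelsp K(\varphi\to\psi)$ then $\sem\varphi\cap[w]\subseteq\sem\psi\cap[w]$, so monotonicity of $P$ yields $P_w(\sem\varphi)\leq P_w(\sem\psi)$ and thus $B^c\varphi$ forces $B^c\psi$. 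For \ref{item:B-D}, when $c\geq\frac 12$, $B^c\varphi$ gives $P_w(\sem\varphi)>c\geq\frac12$, hence $P_w(\sem{\lnot\varphi})=1-P_w(\sem\varphi)<\frac12\leq c$, so $\lnot B^c\lnot\varphi$, i.e. $\check B^c\varphi$; note this uses $c\geq\frac 12$ essentially, consistent with the stated hypothesis. For \ref{item:B-SC} at $c=\frac12$: by Lemma~\ref{lemma:dual}(\ref{item:dual-half}), $\check B^{1/2}\varphi$ says $P_w(\sem\varphi)\geq\frac12$; $\check K(\lnot\varphi\land\psi)$ says $\sem{\lnot\varphi}\cap\sem\psi\cap[w]\neq\emptyset$, so by full support $P_w(\sem{\lnot\varphi}\cap\sem\psi)>0$; then $P_w(\sem{\varphi\lor\psi})=P_w(\sem\varphi)+P_w(\sem{\lnot\varphi}\cap\sem\psi)\geq\frac12+{}$(something strictly positive)${}>\frac12$, giving $B^{1/2}(\varphi\lor\psi)$.

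The main obstacle is item~\ref{item:B-Len}, the (Scott) scheme at $c=\frac12$, and I would save it for last. The approach is the one the excerpt attributes to Segerberg: reformulate $(\varphi_i\mathbb{I}\psi_i)_{i=1}^m$ semantically as ``at every $v\in[w]$, the number of indices $i$ with $v\in\sem{\varphi_i}$ is at most the number with $v\in\sem{\psi_i}$,'' i.e.\ pointwise $\sum_i \iota(\sem{\varphi_i})(v)\leq\sum_i\iota(\sem{\psi_i})(v)$ on $[w]$. Integrating this inequality against the conditional measure $P_w$ gives $\sum_{i=1}^m P_w(\sem{\varphi_i})\leq\sum_{i=1}^m P_w(\sem{\psi_i})$. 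Now suppose for contradiction that the conclusion fails, so $P_w(\sem{\psi_i})\leq\frac12$ for every $i$, whence the right-hand sum is $\leq m/2$. On the other hand $B^{1/2}\varphi_1$ gives $P_w(\sem{\varphi_1})>\frac12$, and by Lemma~\ref{lemma:dual}(\ref{item:dual-half}) each $\check B^{1/2}\varphi_i$ for $i\geq2$ gives $P_w(\sem{\varphi_i})\geq\frac12$; summing, the left-hand side is $>m/2$, contradicting $\sum_i P_w(\sem{\varphi_i})\leq\sum_i P_w(\sem{\psi_i})\leq m/2$. The only delicate point is getting the pointwise-inequality reading of the Segerberg operator exactly right from Definition~\ref{definition:segerberg-notation} (the ``exactly $i$ / at least $i$'' bookkeeping in the $F_i$'s), and checking that the strict inequality from the single $B^{1/2}\varphi_1$ conjunct is what breaks the tie at $m/2$; everything else is linearity of expectation.
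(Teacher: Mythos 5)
Your proposal is correct and follows essentially the same route as the paper: counterexample models for the two non-validities, direct computation with $P_w$ for the simple validities, constancy of $P_w$ on $R$-classes for (Ap)/(An), full support for (SC), and the Segerberg ``integrate the pointwise count inequality against $P_w$'' argument for (Scott), with your proof by contradiction at $m/2$ being just a rephrasing of the paper's direct ``the sum exceeds $mc$, so some summand exceeds $c$'' step. The only place the paper is more explicit is Items~\ref{item:B-not-normal} and \ref{item:B-not-T} for arbitrary $c=p/q$, where it gives the uniform three-world counterexample with weights $\frac{q-p}{2q},\frac pq,\frac{q-p}{2q}$ rather than appealing to ``analogous equiprobable models''; your sketch there is completable but would need that small amount of arithmetic spelled out.
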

\begin{proof}
  We consider each item in turn.
  \begin{enumerate}
  \item Given $c\in(0,1)\cap\Rat$ and integers $p$ and $q$ such that
    $p/q=c$, we define $\M$ as the modification of the model
    $\M_{\ref{example:non-normality}}$ of
    Example~\ref{example:non-normality} obtained by changing $P$ as
    follows:
    \[
    P := \left\{ w_1:\frac{q-p}{2q},\; w_2:\frac pq,\;
      w_3:\frac{q-p}{2q} \right\}\enspace.
    \]
    Since $0<p<q$, it follows that
    \begin{eqnarray*}
      P_{w_1}(\semp{\lnot h_1\to h_2}) &=&
      P_{w_1}(\{w_1,w_2\}) =
      \frac{q+p}{2q}>\frac {2p}{2q}=\frac pq,
      \\
      P_{w_1}(\semp{\lnot h_1}) &=&
      P_{w_1}(\{w_2,w_3\}) = \frac{q+p}{2q}>\frac {2p}{2q}
                                    =\frac pq,\text{ and}
      \\
      P_{w_1}(\semp{h_2})&=& 
      P_{w_1}(w_2) = \frac pq\enspace.
    \end{eqnarray*}
    Therefore, we have
    \[
    \M,w_1\modelsp B^c(\lnot h_1\to h_2)\land B^c\lnot h_1\land \lnot
    B^ch_2\enspace.
    \]
    
  \item For $\M$ defined in the proof of Item~\ref{item:B-not-normal},
    we have
    \[
    \M,w_1\modelsp h_1\land B^c\lnot h_1\enspace.
    \]
    
  \item $\M,w\modelsp K\varphi$ implies $P_w(\semp{\varphi})=1 > c$.
    Hence $\M,w\modelsp B^c\varphi$.

  \item $P_w(\semp{\bot})=0<c$.  Hence
    $\M,w\models\lnot B^c\bot$.

  \item $P_w(\semp{\top})=1>c$. Hence $\M,w\modelsp B^c\top$.

  \item $\M,w\modelsp B^c\varphi$ implies $P_w(\semp{\varphi})>c$.  To show
    that $\M,w\modelsp KB^c\varphi$, we must prove that
    \[
    P_w(\semp{B^c\varphi})=
    \frac{P(\semp{B^c\varphi}\cap[w])}{P([w])} =1\enspace.
    \]
    To show this, we prove that $\semp{B^c\varphi}\cap[w]=[w]$.  So
    choose $u\in[w]$.  Since $R$ is an equivalence relation, we have
    \begin{equation*}
      P_u(\semp{\varphi})=
      \frac{P(\semp{\varphi}\cap[u])}{P([u])}=
      \frac{P(\semp{\varphi}\cap[w])}{P([w])}=
      P_w(\semp{\varphi})>c\enspace,
    \end{equation*}
    which implies $u\in\semp{B^c\varphi}$.  The result follows.
    
  \item The argument is similar to that for Item~\ref{item:B-Ap},
    though we note that $\M,w\modelsp\lnot B^c\varphi$ implies
    $P_w(\semp{\varphi})\leq c$.
    
  \item We assume that $\M,w\modelsp K(\varphi\to\psi)$ and
    $\M,w\modelsp B^c\varphi$.  This means that
    $P_w(\semp{\varphi\to\psi})=1$ and $P_w(\semp{\varphi})>c$.  But then it
    follows that $P_w(\semp{\psi})>c$ as well, which is what it means
    to have $\M,w\modelsp B^c\psi$.


  \item Assume $c\in[\frac 12,1)\cap\Rat$ and $\M,w\modelsp B^c\varphi$.
    Then $P_w(\semp{\varphi})>c \geq 1 - c$. So
    $P_w(\semp{\varphi})\geq 1-c$.  The result therefore follows by
    Lemma~\ref{lemma:dual}.

  \item We prove something more general.  Assume
    $c\in(0,\frac 12]\cap\Rat$ and $\M,w\modelsp\check B^c\varphi$.  By
    Lemma~\ref{lemma:dual}, it follows that $P_w(\semp{\varphi})\geq c$.
    Let us assume further that
    $\M,w\modelsp\check K(\lnot\varphi\land\psi)$.  This means
    \[
    1\neq P_w(\semp{\lnot(\lnot\varphi\land\psi)})=
    \frac{P(\semp{\lnot(\lnot\varphi\land\psi)}\cap[w])}{P([w])}
    \enspace,
    \]
    which implies there exists
    $v\in \semp{\lnot\varphi\land\psi}\cap[w]$.  Since $P(v)>0$ by
    full support, it follows that
    \begin{eqnarray*}
    P_w(\semp{\varphi\lor\psi}) &=&
    \frac{ P(\semp{\varphi\lor\psi}\cap [w]) }{ P([w]) }
    \\
    &=&
    \frac{ P(\semp{\varphi}\cap [w]) }{ P([w]) } +
    \frac{ P(\semp{\lnot\varphi\land\psi}\cap [w]) }{ P([w]) }
    \\
    &\geq&
    \frac{ P(\semp{\varphi}\cap[w]) }{P([w])} + \frac{ P(v) }{P([w])}
    \\
    &=&
    P_w(\semp{\varphi})+\frac{ P(v) }{P([w])}
    \\
    &\geq&
    c + \frac{ P(v) }{P([w])}
    > c\enspace.
    \end{eqnarray*}
    That is,
    $\M,w\modelsp B^c(\varphi\lor\psi)$.

  \item Again, we prove something more general.  We assume
    $c\in(0,\frac 12]\cap\Rat$ plus the following:
    \begin{eqnarray}
      &&
      \M,w\modelsp (\varphi_i\mathbb{I}\psi_i)_{i=1}^m
      \label{eq:S}
      \\
      &&
      \M,w\modelsp B^c\varphi_1
      \label{eq:Bphi1}
      \\
      &&
      \textstyle \M,w\modelsp \bigwedge_{i=2}^m\check B^c\varphi_i
      \label{eq:checkB}
    \end{eqnarray}
    We recall the meaning of \eqref{eq:S}: for each $v\in[w]$, the
    number of $\varphi_i$'s true at $v$ is less than or equal to the
    number of $\psi_k$'s true at $v$.  It therefore follows from
    \eqref{eq:S} that
    \begin{equation}
      P_w(\semp{\varphi_1})+\cdots+P_w(\semp{\varphi_m})\leq
      P_w(\semp{\psi_1})+\cdots+P_w(\semp{\psi_m})\enspace.
      \label{eq:sum}
    \end{equation}
    Outlining an argument due to Segerberg
    \cite[pp.~344--346]{Segerberg1971:qpiams}, the reason for this is
    as follows: we think of each world $v\in[w]$ as being assigned a
    ``weight'' $P_w(v)$.  A member $P_w(\semp{\varphi_i})$ of the sum on
    the left of \eqref{eq:sum} is just a total of the weight of every
    $v\in[w]$ that satisfies $\varphi_i$; that is,
    \[
    P_w(\semp{\varphi_i})=\sum\{P_w(v)\mid
    v\in\semp{\varphi_i}\cap[w]\}\enspace.
    \]
    Assumption \eqref{eq:S} tells us that for each $v\in[w]$, the
    number of totals $P_w(\semp{\varphi_i})$ on the left of
    \eqref{eq:sum} to which $v$ contributes its weight is less than or
    equal to the number of totals $P_w(\semp{\psi_k})$ on the right of
    \eqref{eq:sum} to which $v$ contributes its weight.  But then the
    sum of totals on the left must be less than or equal to the sum of
    totals on the right.  Hence \eqref{eq:sum} follows.

    Having established \eqref{eq:sum}, we now proceed further with the
    overall proof.  By \eqref{eq:Bphi1}, we have
    $P_w(\semp{\varphi_1})>c$.  Applying \eqref{eq:checkB} and
    Lemma~\ref{lemma:dual}, we have $P_w(\varphi_i)\geq c$ for each
    $i\in\{2,\dots,m\}$.  Hence
    \[
    P_w(\semp{\psi_1})+\cdots+P_w(\semp{\psi_m})\geq
    P_w(\semp{\varphi_1})+\cdots+P_w(\semp{\varphi_m})> mc\enspace.
    \]
    That is, the sum of the $P_w(\semp{\psi_k})$'s must exceed $mc$.
    Since each member of this $m$-member sum is non-negative, it
    follows that at least one member must exceed $c$.  That is, there
    exists $j\in\{1,\dots,m\}$ such that $P_w(\semp{\psi_j})>c$.
    Hence $\M,w\modelsp\bigvee_{j=1}^mB^c\psi_j$. \qedhere
  \end{enumerate}
\end{proof}

\section{Epistemic Neighborhood Models}
\label{Section:ENM}

The modal formulas $K\varphi$ and $B^c\varphi$ were taken as abbreviations
in the language $\Lang$ of probability logic.  We wish to consider a
propositional modal language that has knowledge and belief operators
as primitives.

\begin{definition}
  The language $\Lang_\KB$ of \emph{(single-agent) knowledge and
    belief} is defined by the following grammar.
  \begin{eqnarray*}
    \varphi & ::= & 
    \top \mid p \mid \neg\varphi \mid \varphi\land\varphi \mid
    K\varphi \mid B\varphi
    \\
    &&
    \text{\footnotesize 
      $p\in\Prop$
    }
  \end{eqnarray*}
  We adopt the usual abbreviations for other Boolean connectives and
  define the dual operators $\check K:=\lnot K\lnot$ and
  $\check B:=\lnot B\lnot$.  Finally, the $\Lang_\KB$-formula
  \[
  (\varphi_1,\dots,\varphi_m\mathbb{I}\psi_1,\dots,\psi_m)
  \]
  and its abbreviation $(\varphi_i\mathbb{I}\psi_i)_{i=1}^m$ are given as
  in Definition~\ref{definition:segerberg-notation} except that all
  formulas are taken from the language $\Lang_\KB$.
\end{definition}

Our goal will be to develop a possible worlds semantics for
$\Lang_\KB$ that links with the probabilistic setting by making the
following translation truth-preserving.

\begin{definition}[Translation]
  \label{definition:translation}
  For $c\in(0,1)\cap\Rat$, we define $c: \Lang_\KB \to \Lang$ as
  follows.
  \[
  \renewcommand{\arraystretch}{1.2}
  \begin{array}{ccl@{\qquad}l}
    \top^c & := & \top
    \\
    p^c & := & p 
    \\
    (\neg \varphi)^c & := & \neg \varphi^c 
    \\
    (\varphi \land \psi)^c & := & \varphi^c \land \psi^c 
    \\
    (K\varphi)^c & := & P(\varphi^c) = 1
    & (= K\varphi^c \text{ in } \Lang)
    \\
    (B\varphi)^c & := & \textstyle P(\varphi^c) > c
    & (= B^c\varphi^c \text{ in } \Lang)
  \end{array}
  \]
\end{definition} 

Since we have seen that the probabilistic belief operator $B^c$ is not
a normal modal operator
(Theorem~\ref{theorem:belief}\eqref{item:B-not-normal}), we opt for a
neighborhood semantics for $\Lang_\KB$ \cite[Ch.~7]{Chellas:ml} with
an epistemic twist.

\begin{definition} 
  An \emph{epistemic neighborhood model} is a structure
  \[
  \M=(W,R,V,N)
  \]
  satisfying the following.
  \begin{itemize}
  \item $(W,R,V)$ is a finite single-agent $\mathsf{S5}$ Kripke model
    (as in Definition~\ref{definition:epistemic-probability-model}).
    As before, we let
    \[
    [w]:=\{v\in W\mid wRv\}
    \]
    denote the equivalence class of world $w$.  This is the set of
    worlds the agent cannot distinguish from $w$.

  \item $N : W \to \wp(\wp(W))$ is a \emph{neighborhood function} that
    assigns to each world $w\in W$ a collection $N(w)$ of sets of
    worlds---each such set called a \emph{neighborhood} of
    $w$---subject to the following conditions.
    \begin{description}
    \item[(kbc)] $\forall X \in N(w) : X \subseteq [w]$.

    \item[(kbf)] $\emptyset\notin N(w)$.
      
    \item[(n)] $[w]\in N(w)$.
      
    \item[(a)] $\forall v \in [w] : N(v) = N(w)$.

    \item[(kbm)] $\forall X \subseteq Y \subseteq [w] : 
      \text{ if } X \in N(w) \text{, then } Y \in N(w)$.
   \end{description}
  \end{itemize}
  A \emph{pointed epistemic neighborhood model} is a pair $(\M,w)$
  consisting of an epistemic neighborhood model $\M$ and a world $w$
  in $M$.
\end{definition}

An epistemic neighborhood model is a variation of a neighborhood model
that includes an epistemic component $R$.  Intuitively, $[w]$ is the
set of worlds the agent knows to be possible at $w$ and each
$X\in N(w)$ represents a proposition that the agent believes at $w$.
The condition that $R$ be an equivalence relation ensures that
knowledge is closed under logical consequence, veridical (i.e., only
true things can be known), positive introspective (i.e., the agent
knows what she knows), and negative introspective (i.e., the agent
knows what she does not know).

Property (kbc) ensures that the agent does not believe a proposition
$X\subseteq W$ that she knows to be false: if $X$ contains a world in
$w'\in(W-[w])$ that the agent knows is not possible with respect to
the actual world $w$, then she knows that $X$ cannot be the case and
hence she does not believe $X$.  Property (kbf) ensures that no
logical falsehood is believed, while Property (n) ensures that every
logical truth is believed.  Property (a) ensures that $X$ is believed
if and only if it is known that $X$ is believed. Property (kbm) says
that belief is monotonic: if an agent believes $X$, then she believes
all propositions $Y\supseteq X$ that follow from $X$.

We now turn to the definition of truth for the language $\Lang_\KB$.

\begin{definition} 
  Let $\M = (W,R,V,N)$ be an epistemic neighborhood model.  We define
  a binary truth relation $\modelsn$ between a pointed epistemic
  neighborhood model $(\M,w)$ and $\Lang_\KB$-formulas and a function
  $\semn{\cdot}^\M:\Lang_\KB\to \wp(W)$ as follows.
  \begin{eqnarray*} 
    \semn{\varphi}^\M & := & \{v\in W\mid \M,v\modelsn\varphi\}
    \\
    \M, w \modelsn p & \text{ iff } & p \in V(w) 
    \\
    \M, w \modelsn \neg \varphi & \text{ iff } & \M, w \not\modelsn \varphi 
    \\
    \M, w \modelsn \varphi\land\psi  & \text{ iff } 
    & \M, w \modelsn \varphi \text{ and } \M, w \modelsn \psi
    \\
    \M, w \modelsn K \varphi  & \text{ iff } & 
    [w]\subseteq\semn{\varphi}^\M
    \\
    \M, w \modelsn B \varphi  & \text{ iff } &
    [w]\cap \semn{\varphi}^\M \in N(w)
  \end{eqnarray*}
  Validity of $\varphi\in\Lang_\KB$ in an epistemic neighborhood model
  $\M$, written $\M\modelsn\varphi$, means that $\M,w\modelsn\varphi$ for
  each world $w\in W$.  Validity of $\varphi\in\Lang_\KB$, written
  $\modelsn\varphi$, means that $\M\modelsn\varphi$ for each epistemic
  neighborhood model $\M$.  For a class $\mathcal{C}$ of epistemic
  neighborhood models, we write $\mathcal{C}\modelsn\varphi$ to mean that
  $\M\modelsn\varphi$ for each $\M\in\mathcal{C}$.
\end{definition}

Intuitively, $K\varphi$ is true at $w$ iff $\varphi$ holds at all worlds
epistemically possible with respect to $w$, and $B\varphi$ holds at $w$
iff the epistemically possible $\varphi$-worlds make up a neighborhood of
$w$.  Note that it follows from this definition that the dual for
belief $\check{B} \varphi$ is true at $w$ iff
$[w]\cap\semn{\neg\varphi}^\M\notin N(w)$.  The latter says that the
epistemically possible $\lnot\varphi$-worlds do not make up a
neighborhood of $w$.

\subsection{Neighborhood and Probability Model Agreement}

Epistemic neighborhood models describe agent knowledge and belief.
Epistemic probability models can be used for the same purpose along
the lines we have discussed above once we establish a belief threshold
$c\in(0,1)\cap\Rat$.  This gives rise to a natural question: is there
some sense in which these two models for knowledge and belief can be
seen to agree?

\begin{definition}[Model Agreement]
  Let $\M=(W,R,V,N)$ be an epistemic neighborhood model. For a
  threshold $c\in(0,1)\cap\Rat$, to say that a probability measure
  $P:\wp(W)\to[0,1]$ \emph{agrees with $\M$ for threshold $c$} means
  we have the following:
  \begin{itemize}
  \item $P$ satisfies full support (i.e., $P(w)\neq0$ for each
    $w\in W$); and

  \item for each $w\in W$ and $X\subseteq[w]$, we have
    \[
    X\in N(w) \quad\text{iff}\quad
    P_w(X):=P(X|[w])>c\enspace.
    \]
  \end{itemize}
  To say that an epistemic probability model $\M'=(W',R',V',P')$
  \emph{agrees with $\M$ for threshold $c$} means that
  $(W',R',V')=(W,R,V)$ and $P'$ agrees with $\M$ for threshold $c$.
  If the threshold $c$ is not mentioned, it is assumed that
  $c=\frac 12$.
\end{definition}

Agreement for threshold $c$ between an epistemic neighborhood model
and an epistemic probability model makes the translation
$c:\Lang_\KB\to\Lang$ (Definition~\ref{definition:translation})
truth-preserving.

\begin{theorem}[Agreement]
  \label{theorem:agreement}\label{BettingTheorem}
  Fix $c\in(0,1)\cap\Rat$, an epistemic neighborhood model $\M$, and
  an epistemic probability model $\M'$. If $\M$ and $\M'$ agree for
  threshold $c$, then we have for each $\varphi \in \Lang_\KB$ that
  \[
  \M,w\modelsn\varphi \quad\text{iff}\quad
  \M',w\modelsp\varphi^c\enspace.
  \]
\end{theorem}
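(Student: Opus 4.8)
The plan is a routine structural induction on $\varphi\in\Lang_\KB$, carried out simultaneously for every world $w\in W$. Since $\M$ and $\M'$ share the same underlying frame $(W,R,V)$ by the definition of agreement, the base cases $\varphi=\top$ and $\varphi=p$ are immediate from the two truth definitions and the clauses $\top^c=\top$, $p^c=p$. The Boolean cases $\varphi=\neg\psi$ and $\varphi=\psi_1\land\psi_2$ follow from the induction hypothesis together with $(\neg\psi)^c=\neg\psi^c$ and $(\psi_1\land\psi_2)^c=\psi_1^c\land\psi_2^c$, because negation and conjunction are interpreted classically in both $\modelsn$ and $\modelsp$. (The Segerberg operator $(\varphi_i\mathbb{I}\psi_i)_{i=1}^m$ is a defined abbreviation built from $K$, $\neg$, $\land$, $\lor$, so it requires no separate case.) Throughout the modal cases I will use the induction hypothesis in the equivalent form $\semn{\psi}^\M=\semp{\psi^c}$ as subsets of $W$ (the latter truth set computed in $\M'$), which is just the pointwise biconditional restated via truth sets.

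For $\varphi=K\psi$: by definition $\M,w\modelsn K\psi$ iff $[w]\subseteq\semn{\psi}^\M$, which by the induction hypothesis is $[w]\subseteq\semp{\psi^c}$. On the probability side, $(K\psi)^c$ is the $\Lang$-formula $P(\psi^c)=1$, so $\M',w\modelsp(K\psi)^c$ iff $P_w(\semp{\psi^c})=1$, i.e.\ iff $P(\semp{\psi^c}\cap[w])=P([w])$, i.e.\ iff $P([w]\setminus\semp{\psi^c})=0$. Here full support of $P$ (guaranteed since $\M'$ agrees with $\M$) does the work: a subset of $[w]$ has probability $0$ only when it is empty, so the last condition is equivalent to $[w]\subseteq\semp{\psi^c}$, matching the neighborhood side.

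For $\varphi=B\psi$: by definition $\M,w\modelsn B\psi$ iff $[w]\cap\semn{\psi}^\M\in N(w)$, which by the induction hypothesis is $[w]\cap\semp{\psi^c}\in N(w)$. Since $[w]\cap\semp{\psi^c}\subseteq[w]$, the agreement condition applies directly and yields the equivalent statement $P_w([w]\cap\semp{\psi^c})>c$. Unfolding the definition of $P_w$ shows the extra intersection is harmless: $P_w([w]\cap\semp{\psi^c}) = P(([w]\cap\semp{\psi^c})\cap[w])/P([w]) = P(\semp{\psi^c}\cap[w])/P([w]) = P_w(\semp{\psi^c})$. Hence the condition is $P_w(\semp{\psi^c})>c$, which is precisely $\M',w\modelsp(B\psi)^c$ since $(B\psi)^c$ is the formula $P(\psi^c)>c$. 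This closes the induction.

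There is no deep obstacle: the statement is essentially a compatibility check between two sets of definitions. The two places that deserve explicit mention are (i) that the agreement condition is postulated only for events $X\subseteq[w]$, so in the belief step one must first intersect the truth set with $[w]$ and then observe that $P_w$ is invariant under that intersection; and (ii) that the equivalence in the knowledge step between ``$\psi^c$ has conditional probability $1$ at $w$'' and ``$[w]$ is contained in the truth set of $\psi^c$'' genuinely relies on full support — without it, zero-probability worlds in $[w]$ could falsify $\psi^c$ while $K\psi$ still held probabilistically.
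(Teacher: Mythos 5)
Your proposal is correct and follows essentially the same route as the paper's own proof: a structural induction in which the $K$ case reduces to $[w]\subseteq\semp{\psi^c}$ via full support and the $B$ case reduces to $P_w(\semp{\psi^c})>c$ via the agreement condition. The only difference is that you spell out two small steps the paper leaves implicit (why full support forces the zero-probability set to be empty, and why intersecting with $[w]$ does not change $P_w$), which is harmless.
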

\begin{proof}
  Induction on the structure of $\varphi\in\Lang_\KB$. The non-modal
  cases are obvious.

  We first consider knowledge formulas. Assume $\M,w \modelsn
  K\psi$.  This means $[w]\subseteq\semn{\psi}^{\M}$. Applying the
  induction hypothesis, this is equivalent to
  $[w]\subseteq\semp{\psi^c}^{\M'}$.  By full support, the latter
  holds if and only if
  \[
  P_w(\semp{\psi^c}^{\M'}) =
  \frac{P(\semp{\psi^c}^{\M'}\cap[w])}{P([w])}=1 \enspace,
  \]
  which is what it means to have $\M',w\modelsp P(\psi^c)=1$.  Since
  $P(\psi^c)=1$ is what is abbreviated by $(K\psi)^c$, the result
  follows.

  Now we move to belief formulas. Assume $\M,w \modelsn B\psi$.
  This means that $[w]\cap\semn{\psi}^{\M}\in N(w)$.  Since $\M'$
  agrees with $\M$, the latter holds iff
  $P_w([w]\cap\semp{\psi^c}^{\M'})>c$.  But this is equivalent
  to $P_w(\semp{\psi^c}^{\M'})>c$, which is what it means to have
  $\M',w \modelsp P(\psi^c)>c$.  Since $P(\psi^c)>c$ is what is
  abbreviated by $(B\psi)^c$, the result follows.
\end{proof}

\subsection{Probability Measures on Epistemic Neighborhood Models}

In this subsection, we take up the question of agreement between
epistemic probability models and epistemic neighborhood models from
the point of view of the latter: given an epistemic neighborhood model
and a threshold $c$, can we find an agreeing epistemic probability
model for this threshold? As we will see, we have a full answer only
for the case $c=\frac 12$.  The case for $c\neq\frac 12$ is open,
though we will have some comments on this in the conclusion of the
paper.

To begin, we adapt an example due to Walley and Fine
\cite{WalleyFine1979:vomacp} to show that not every epistemic
neighborhood model gives rise to an agreeing probability measure.

\begin{theorem}[\cite{WalleyFine1979:vomacp}]
  \label{theorem:walleyfine}
  There exists an epistemic neighborhood model $\M$ that has no
  agreeing probability measure for any threshold $c\in(0,1)\cap\Rat$.
\end{theorem}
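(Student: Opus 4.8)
The plan is to construct a small epistemic neighborhood model whose neighborhood function encodes a qualitative-comparison pattern that no probability measure can realize. The natural source is the Kraft--Pratt--Seidenberg counterexample already quoted in the introduction, where $W=\{a,b,c,d,e\}$ and the strict comparisons
\[
\{c\}\prec\{a,b\},\quad \{b,d\}\prec\{a,c\},\quad \{a,e\}\prec\{b,c\},\quad \{a,b,c\}\prec\{d,e\}
\]
extend to a de~Finetti--consistent preorder with no realizing probability measure. Walley and Fine's contribution is to package such a pattern in a way that speaks directly about a unary ``more-than-half'' style notion rather than the binary $\preceq$. Concretely, I would take a single equivalence class $[w]=W$ (so $R=W\times W$ and condition (a) is automatic, with $N$ constant across $W$), choose the propositional valuation $V$ so that each relevant event $\{a,b\}$, $\{a,c\}$, $\{b,c\}$, $\{d,e\}$, etc.\ is definable, and then \emph{define} $N(w)$ to be the upward closure (within $\wp(W)$) of a carefully chosen antichain of ``believed'' events. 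The antichain is picked so that: membership in $N(w)$ is forced for the larger set of each KPS pair and forbidden for the smaller set, while (kbc), (kbf), (n), (kbm) are respected --- i.e.\ $\emptyset\notin N(w)$, $W\in N(w)$, every member is a subset of $W=[w]$, and $N(w)$ is closed upward.

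**The core argument** is then a reduction: suppose for contradiction that some probability measure $P$ with full support agrees with $\M$ for a threshold $c\in(0,1)\cap\Rat$. Agreement says $X\in N(w)$ iff $P_w(X)=P(X\mid W)=P(X)>c$ for each $X\subseteq W$. From the forced memberships and non-memberships I would read off, for each KPS pair $Z_1\prec Z_2$, that $P(Z_2)>c\geq P(Z_1)$, hence $P(Z_1)<P(Z_2)$; combining the four inequalities (together with the additivity of $P$ on the disjoint decompositions that appear, e.g.\ $P(\{a,b,c\})+P(\{d,e\})=1$) yields the same linear infeasibility that KPS/Scott established --- summing the four strict inequalities after aligning characteristic-vector multiplicities produces $0<0$. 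This is exactly the violation of Scott's fourth condition. So no such $P$ exists, for \emph{any} $c$, giving the ``for any threshold'' strength of the statement. The one subtlety to handle carefully is that $c$ is arbitrary and possibly far from $\tfrac12$: the construction must make the believed/disbelieved events sit on \emph{both} sides of $c$ in a way that is forced purely by the $N$-memberships, not by any numeric assumption; this is why one works with the \emph{strict} comparisons $Z_1\prec Z_2$ and derives $P(Z_1)\le c<P(Z_2)$ rather than any absolute bounds. (Walley--Fine's original example was designed precisely to be threshold-robust, so adapting it amounts to checking that the monotone closure used to define $N(w)$ does not accidentally force an extra membership that would rescue feasibility.)

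**The main obstacle** I anticipate is bookkeeping rather than conceptual: one must verify that the upward closure defining $N(w)$ is consistent with (kbf) --- that no pair of the ``disbelieved'' small sets has a union or superset relation forcing it into $N(w)$ --- and that enough events are simultaneously ``believed'' to drive the linear contradiction while $\emptyset$ stays out. Equivalently, one needs the chosen antichain of believed events to have, for each KPS pair, the larger set above it in the subset order and the smaller set incomparable-or-below, and one must confirm that the resulting $N(w)$ together with $R=W\times W$ and a suitable $V$ genuinely satisfies all five neighborhood axioms. Once the model is exhibited and these five checks are done, the impossibility is immediate from agreement plus Scott's theorem (Theorem~1.1): agreement would make $\preceq$ defined by ``$X\preceq Y$ iff ($Y\in N(w)$ or $X\notin N(w)$)'' --- or more simply the inequalities read off directly --- realizable, contradicting the KPS pattern. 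I would close by noting that this model is the neighborhood-semantic witness to the incompleteness of the base logic $\KB$, foreshadowing the role of the (Scott) scheme in $\KBeq$.
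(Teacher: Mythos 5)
There is a genuine gap, and it is precisely at the point you set aside as ``bookkeeping'': the Kraft--Pratt--Seidenberg configuration cannot be packaged as an epistemic neighborhood model because it is incompatible with the monotonicity axiom (kbm). Your plan requires $\{a,b\}\in N(w)$ (the larger set of the pair $\{c\}\prec\{a,b\}$) and simultaneously $\{a,b,c\}\notin N(w)$ (the smaller set of the pair $\{a,b,c\}\prec\{d,e\}$). But $\{a,b\}\subseteq\{a,b,c\}\subseteq[w]$, so (kbm) forces $\{a,b,c\}\in N(w)$ as soon as $\{a,b\}\in N(w)$. The check you defer --- that no ``disbelieved'' set is a superset of a ``believed'' one --- is exactly the check that fails here, and it fails for a structural reason: the KPS pattern obstructs realizability of a \emph{binary} comparison relation and makes essential use of comparisons between nested sets, whereas any threshold notion $P(\cdot)>c$ is automatically monotone under inclusion, so the believed family of a legal neighborhood model must be an up-set. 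No relabeling or re-pairing of the KPS sets escapes this, since $\{a,b,c\}$ must land on the ``at most $c$'' side while one of its proper subsets lands on the ``greater than $c$'' side.

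The paper's proof therefore uses a different combinatorial design, due to Walley and Fine: $W$ is a seven-element set, $\mathcal{X}$ consists of the seven lines of (a labeling of) the Fano plane, $\mathcal{Y}$ consists of their complements, and $N(w)$ is the upward closure of $\mathcal{X}$ with $R=W\times W$. Since any two Fano lines meet, no member of $\mathcal{X}$ is contained in any member of $\mathcal{Y}$, so $\mathcal{Y}\cap N(w)=\emptyset$ and all five neighborhood axioms hold. The contradiction is then obtained not from matched pairwise inequalities with equal multiplicities (Scott's fourth condition), but from a counting argument whose multiplicities go the ``wrong way'': each point lies in exactly three members of $\mathcal{X}$ and four members of $\mathcal{Y}$, so additivity gives $\sum_{Y\in\mathcal{Y}}P(Y)=\tfrac{4}{3}\sum_{X\in\mathcal{X}}P(X)>\sum_{X\in\mathcal{X}}P(X)$, while agreement for any threshold $c$ gives $P(Y)\le c<P(X)$ for every pair and hence the opposite strict inequality between the sums. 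To salvage your approach you would have to replace KPS by a design of this kind --- believed and disbelieved families that are antichains, the up-closure of the former missing the latter, and the point-multiplicities of the disbelieved family strictly dominating those of the believed family; the reduction to Scott's fourth condition as you state it does not go through.
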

\begin{proof}
  We adapt Example~2 from \cite[pp.~344-345]{WalleyFine1979:vomacp} to
  the present setting.  Fix $c\in(0,1)\cap\Rat$.  Let
  $\Prop:=\{a,b,c,d,e,f,g\}$. Define:
  \begin{gather*}
    \mathcal{X} := \{efg, abg, adf, bde, ace, cdg, bcf\}\enspace,
    \\
    \mathcal{Y} := \{abcd, cdef, bceg, acfg, bdfg, abef, adeg \}\enspace.
  \end{gather*}
  Notation: in the above sets, $xyz$ denotes $\{x,y,z\}$, and $wxyz$
  denotes $\{w,x,y,z\}$.  Now define
  \[
  \mathcal{N} := \{ X' \mid \exists X \in \mathcal{X}: X \subseteq X'
  \subseteq\Prop \}\enspace.
  \]
  Let $\M:=(W,R,V,N)$ be defined by $W:=\Prop$, $R:=W\times W$,
  $V(w):=\{w\}$ for each $w\in\Prop$, and $N(w) := \mathcal{N}$ for
  each $w \in W$.  It is straightforward to verify that $\M$ is an
  epistemic neighborhood model and that
  $\mathcal{Y} \cap \mathcal{N} = \emptyset$.

  Toward a contradiction, suppose there exists a probability measure
  $P$ that agrees with $\M$. Since each letter $p\in W$ occurs in
  exactly three of the seven members of $\mathcal{X}$, we have:
  \[
  \sum_{X \in \mathcal{X}} P(X) = \sum_{p\in W}3\cdot P(\{p\})\enspace.
  \]
  Since each letter $p\in W$ occurs in exactly four of the seven
  members of $\mathcal{Y}$, we have:
  \[
  \sum_{Y \in \mathcal{Y}} P(Y) = \sum_{p\in W} 4\cdot P(\{p\})
  > \sum_{X \in \mathcal{X}} P(X)\enspace.
  \]
  On the other hand, since $\mathcal{Y}\cap\mathcal{N}=\emptyset$, no
  member of $\mathcal{Y}$ is a neighborhood of $\M$ and therefore it
  follows by the agreement of $P$ with $\M$ that we have
  $P(Y)\leq c<P(X)$ for each $Y\in\mathcal{Y}$ and $X\in\mathcal{X}$.
  But then
  \[
  \sum_{Y\in\mathcal{Y}}P(Y) < \sum_{X\in\mathcal{X}}P(X)\enspace,
  \]
  and we have reached a contradiction. Conclusion: no such $P$ exists.
\end{proof}

Question: what are the additional restrictions on the neighborhood
function that one must impose in order to guarantee the existence of
an agreeing probability measure for a given threshold
$c\in(0,1)\cap\Rat$?  For $c=\frac 12$, the restrictions are known.
For thresholds $c\neq\frac 12$, the question is open.

The restrictions needed for $c=\frac 12$ were studied first in the
form of a purely probabilistic semantics (i.e., something like
epistemic probability models and not something like our epistemic
neighborhood models).  To our knowledge, Lenzen's
\cite{Lenzen1980:gwuw} is the first complete study of the restrictions
needed in such a purely probabilistic framework over a unary modal
language similar to $\Lang_\KB$. The conditions Lenzen proposed are
targeted to satisfy the conditions of a theorem due to Scott, which is
the key result that gives rise to a probability measure in the
completeness proof for Lenzen's logic.  Here we state the required
restrictions in the language of our epistemic neighborhood models.
Later we will make the link with Lenzen's axiomatic system when we
consider axiomatic theories in the language $\Lang_\KB$ targeted to
our epistemic neighborhood models.

\begin{definition}[Extra Properties for ``Mid-Threshold'' Models]
  \label{definition:extra-properties}
  Let $\M=(W,R,V,N)$ be an epistemic neighborhood model.  For
  $m\in\Int^+$ and sets of worlds $X_1,\dots,X_m$ and $Y_1,\dots,Y_m$,
  we write
  \begin{equation}
    X_1,\dots,X_m\mathbb{I}Y_1,\dots,Y_m
    \label{eq:semantic-lenzen}
  \end{equation}
  to mean that for each $v\in W$, the number of $X_i$'s containing $v$
  is less than or equal to the number of $Y_i$'s containing $v$. This
  is the semantic counterpart of the formula from
  Definition~\ref{definition:segerberg-notation}.  We may write
  $(X_i\mathbb{I}Y_i)_{i=1}^m$ as an abbreviation for
  \eqref{eq:semantic-lenzen}.  Also, we write
  $(X_i\mathbb{E}Y_i)_{i=1}^m$ to mean that both
  $(X_i\mathbb{I}Y_i)_{i=1}^m$ and $(Y_i\mathbb{I}X_i)_{i=1}^m$
  hold, and we allow the notation with $\mathbb{E}$ to be used in a
  form as in \eqref{eq:semantic-lenzen}.  The following is a list of
  properties that $\M$ may satisfy.
  \begin{description}
  \item[(d)] $\forall X \in N(w): [w] - X \notin  N(w)$.

 
  \item[(sc)] $\forall X,Y\subseteq[w]$: if $[w]-X\notin N(w)$
    and $X\subsetneq Y$, then $Y\in N(w)$.

  \item[(scott)] $\forall m\in\Int^+,\forall
    X_1,\dots,X_m,Y_1,\dots,Y_m\subseteq[w]:$
    \[
    \renewcommand{\arraystretch}{1.3}
    \begin{array}{ll}
      \text{if }
      &
      \begin{array}[t]{l}
        X_1,\dots,X_m\mathbb{I}Y_1,\dots,Y_m\quad\text{and}
        \\
        X_1\in N(w)\quad\text{and}
        \\
        \forall i\in\{2,\dots,m\}:
        [w]-X_i\notin N(w) \enspace\text{,}
      \end{array}
      \\
      \text{then }
      &
      \exists j\in\{1,\dots,m\}: Y_j\in N(w)\enspace\text{.}
    \end{array}
    \]
  \end{description}
  To say an epistemic neighborhood model is \emph{mid-threshold} means
  it satisfies (d), (sc), and (scott).  We may drop the word
  ``epistemic'' in referring to mid-threshold epistemic neighborhood
  models.  Pointed versions of mid-threshold neighborhood models are
  defined in the obvious way.
\end{definition}

Property (d) ensures that beliefs are consistent in the sense that the
agent does not believe both $X$ and its complement $[w]-X$.
Property (sc) is a form of ``strong commitment'': if the agent does
not believe the complement $[w]-X$, then she must believe any
strictly weaker $Y$ implied by $X$.  Property (scott) is a version of
the syntactic scheme (Scott) from
Definition~\ref{definition:scott-schemes}.

Let us return to the model $\M$ from the proof of
Theorem~\ref{theorem:KB-probability-incompleteness}.  It is easy to
see that for no $X_i\in\mathcal{X}$ do we have
$W-X_i\in N(a)=\mathcal{N}$.  So, numbering the members of
$\mathcal{X}$ as $X_1,\dots,X_7$ and the members of $\mathcal{Y}$ as
$Y_1,\dots,Y_7$, we see that $\M$ satisfies
\[
(X_i\mathbb{I}Y_i)_{i=1}^7,\enspace X_1\in N(a)\text{, and}\enspace
\forall i\in\{2,\dots,7\}:W-X_i\notin N(a)\enspace,
\]
which is the antecedent of property (scott) from
Definition~\ref{definition:extra-properties}.  However, $\M$ does not
satisfy
\[
\exists j\in\{1,\dots,7\}:Y_j\in N(a)\enspace,
\]
which is the corresponding consequent of the indicated instance of
(scott).  So we see that if we were to restrict ourselves to the class
of epistemic neighborhood models satisfying this property, we would no
longer be able to use $\M$ as a counterexample to the claim that not
every epistemic neighborhood model gives rise to an agreeing
probability measure.  Of course ruling out $\M$ as a counterexample to
this claim does not prove the claim.  However, utilizing (scott) in
conjunction with (d) and (sc), we are able to prove the claim.  This
proof makes crucial use of a theorem due to Scott that is closely 
related to \cite[Theorem 4.1]{Sco64:JMP}. 

In preparation for the statement of Scott's theorem, we recall some
well-known notions from linear algebra.  For a nonempty set $S$, let
$L(S)$ denote the $S$-dimensional real vector space whose vectors
consist of functions $x:S\to\mathbb{R}$ and whose operations of vector
addition and scalar multiplication are defined coordinate-wise: given
vectors $x,y:S\to\mathbb{R}$ and a scalar real $r\in\mathbb{R}$, the
vector $(x+y):S\to\mathbb{R}$ is defined by $(x+y)(s):=x(s)+y(s)$ for
each coordinate $s\in S$ and the vector $(r\cdot x):S\to\mathbb{R}$ is
defined by $(r\cdot x)(s):=r\cdot x(s)$ for each coordinate $s\in S$.
Note that we have just used the usual notational overloading wherein the $+$
or $\cdot$ symbol on one side of an equation refers to the vector
operation, and yet the same symbol on the other side of the same
equation refers to the operation in $\mathbb{R}$. Other common
notational abbreviations such as omission of $\cdot$'s and writing
$-x$ for $(-1)\cdot x$ will be used. To say that a vector
$x:S\to\mathbb{R}$ is \emph{rational} means that all of its
coordinates (i.e., values) are rational numbers.  To say a set
$X\subseteq L(S)$ of vectors is rational means that every vector in
$X$ is rational, and to say that $X$ is \emph{symmetric} means that
$X=-X:=\{-x\mid x\in X\}$.  A \emph{linear functional on $L(S)$} is a
function $f:L(S)\to\mathbb{R}$ satisfying the following property of
\emph{linearity\/}: for each $r_1,r_2\in\mathbb{R}$ and $x,y\in L(S)$,
we have $f(r_1x+r_2y)=r_1\cdot f(x)+r_2\cdot f(y)$.

\begin{theorem}[{\cite[Theorem 1.2]{Sco64:JMP}}]
  \label{theorem:scott}
  Let $S$ be a finite nonempty set and $X$ be a finite, rational,
  symmetric subset of $L(S)$. For each $N\subseteq X$, there exists a
  linear functional $f$ on $L(S)$ that \emph{realizes $N$}, meaning
  \[
  N = \{x\in X\mid f(x)\geq 0\}\enspace,
  \]
  if and only if the following conditions are satisfied:
  \begin{itemize}
  \item for each $x\in X$, we have $x\in N$ or $-x\in N$; and

  \item for each integer $n\geq 0$ and $x_0,\dots,x_n\in N$, we have
    \[
    \sum_{i=0}^n x_i = 0 \quad\Rightarrow\quad -x_0\in N\enspace.
    \]
  \end{itemize}
\end{theorem}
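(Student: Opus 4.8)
The plan is to prove the two implications separately. \emph{Necessity} of the two displayed conditions is immediate and I would dispatch it first: if a linear functional $f$ realizes $N$, then for $x\in X$ either $f(x)\geq 0$, so $x\in N$, or $f(x)<0$, so $f(-x)>0$ and (using $X=-X$) $-x\in N$; and if $x_0,\dots,x_n\in N$ with $\sum_{i=0}^n x_i=0$, then $\sum_{i=0}^n f(x_i)=f(0)=0$ is a sum of nonnegative reals, hence each $f(x_i)=0$, so in particular $f(-x_0)=0\geq 0$ and $-x_0\in N$.

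For \emph{sufficiency}, assume both conditions. I would set $N^-:=\{x\in N\mid -x\notin N\}$ and observe, using the first condition, that the complement $X\setminus N$ is exactly $\{-x\mid x\in N^-\}$. Hence it suffices to construct a linear functional $f$ on $L(S)$ with $f\geq 0$ on $N$ and $f(x)>0$ for each of the finitely many $x\in N^-$: then $f<0$ on $X\setminus N$, while any $x\in N$ with $-x\in N$ is forced to have $f(x)=0\geq 0$, so that $N=\{x\in X\mid f(x)\geq 0\}$. Let $C$ be the convex cone generated by $N$. Since $N$ is finite, $C$ is finitely generated, hence a closed convex cone, so $C$ equals its double dual $C^{\ast\ast}=\{y\in L(S)\mid g(y)\geq 0\text{ for every linear }g\text{ with }g\geq 0\text{ on }C\}$ by Minkowski--Weyl/Farkas. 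The key claim is that $-z\notin C$ for every $z\in N^-$. Granting this, $-z\notin C^{\ast\ast}$ yields for each $z\in N^-$ a linear functional $g_z$ with $g_z\geq 0$ on $C$ (hence on $N$) and $g_z(-z)<0$, i.e. $g_z(z)>0$; then $f:=\sum_{z\in N^-}g_z$ (the zero functional if $N^-=\emptyset$) is $\geq 0$ on $N$ and satisfies $f(x)\geq g_x(x)>0$ for $x\in N^-$, which is what is needed.

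It remains to prove the claim, and this is where the second hypothesis does its work. Suppose for contradiction that $-z=\sum_{i=1}^k\lambda_i x_i$ with reals $\lambda_i>0$ and $x_i\in N$. Then $-z$ lies in the real cone over the rational vectors $x_1,\dots,x_k$ and is itself rational, so by the standard fact that a rational vector lying in the real cone of finitely many rational vectors already lies in their rational cone (the coefficient set is a nonempty rational polyhedron, hence has a rational point), I may take the coefficients rational: $-z=\sum_{i=1}^k\mu_i x_i$ with $\mu_i\in\Rat$, $\mu_i\geq 0$. Clearing denominators gives a positive integer $d$ and nonnegative integers $n_1,\dots,n_k$ with $dz+\sum_{i=1}^k n_i x_i=0$. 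This is a sum, taken with repetitions, of members of $N$ that equals $0$ and in which $z$ occurs ($d\geq 1$ times); applying the second hypothesis with one copy of $z$ as ``$x_0$'' gives $-z\in N$, contradicting $z\in N^-$.

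I expect the main obstacle to be precisely this last bridge: passing from the essentially combinatorial second hypothesis, which constrains only $0$-sums of elements of $N$ taken with unit multiplicities, to the geometric assertion $-z\notin C$ about arbitrary nonnegative \emph{real} combinations. The passage through rational polyhedra and the appeal to $C=C^{\ast\ast}$ for finitely generated cones are both classical; the care required is in the bookkeeping of multiplicities (ensuring $z$ genuinely appears so that it can play the role of $x_0$) and in checking the degenerate cases $N=X$ (where $f=0$ works) and $0\in X$ (where the hypotheses force $0\in N$) in the final assembly.
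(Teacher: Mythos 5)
The paper does not actually prove this statement: it is imported as Theorem~1.2 of Scott's 1964 paper and used as a black box in the proof of Theorem~\ref{theorem:lenzen}, so there is no in-paper argument to compare yours against. On its own merits, your proof is correct and is essentially the classical derivation. Necessity is fine (noting that $-x_0\in X$ by symmetry, which you use implicitly in the second bullet). For sufficiency, the reduction is valid: condition (1) gives $X\setminus N=\{-x\mid x\in N^{-}\}$, so a linear $f$ with $f\geq 0$ on $N$ and $f>0$ on $N^{-}$ indeed realizes $N$; the construction of $f$ as a sum of separating functionals $g_z$, obtained from $-z\notin C=C^{**}$ for the finitely generated (hence closed) cone $C$ spanned by $N$, is sound, as is the degenerate case $N^{-}=\emptyset$ (which forces $N=X$, handled by $f=0$). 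The one step that genuinely carries the weight --- passing from a nonnegative \emph{real} combination $-z=\sum_i\lambda_i x_i$ to an integer relation $dz+\sum_i n_i x_i=0$ so that condition (2), which quantifies over finite lists of elements of $N$ with repetitions permitted, can be applied with $x_0=z$ --- you handle correctly via the standard fact that a nonempty rational polyhedron contains a rational point; this is exactly where the rationality hypothesis on $X$ is consumed, and the multiplicity bookkeeping (at least one copy of $z$ present, total list nonempty since $z\neq 0$ for $z\in N^{-}$) checks out. I see no gap.
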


We use this theorem to show that mid-threshold models always give rise
to an agreeing probability measure.  That is, the neighborhood
function of mid-threshold models picks out exactly those neighborhoods
that may be assigned a probability exceeding $\frac 12$.  Many of the
key ideas of the proof of the following result are due to Lenzen
\cite{Lenzen1980:gwuw}. However, the argument we present here has been
rewritten in a streamlined, modern form and in the language of our
epistemic neighborhood models. Despite this difference (and the
necessary work we had to undertake to translate these results into
this modern form), we are happy to credit Professor Lenzen for the
following result.

\begin{theorem}[\cite{Lenzen1980:gwuw}]
  \label{theorem:lenzen}
  Let $\M=(W,R,V,N)$ be a mid-threshold epistemic neighborhood model.
  There exists a probability measure $P:\wp(W)\to[0,1]$ agreeing with
  $\M$ for threshold $\frac 12$; that is,
  \begin{itemize}
  \item $P$ satisfies full support (i.e., $P(w)\neq0$ for each
    $w\in W$); and

  \item for each $w\in W$ and $X\subseteq[w]$, we have
    \[
    \textstyle X\in N(w) \quad\text{iff}\quad
    P_w(X):=P(X|[w])>\frac 12\enspace.
    \]
  \end{itemize}
\end{theorem}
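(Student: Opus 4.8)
The plan is to work one $R$-equivalence class at a time, encode ``$P(X)>\tfrac12$'' by means of \emph{balance vectors}, invoke Scott's Theorem~\ref{theorem:scott} to produce a signed measure agreeing with the neighborhood family, and then upgrade this to an honest full-support probability measure. For the reduction: by (a) the neighborhood family is constant on each class $[w]$, by (kbc) every neighborhood of $w$ is contained in $[w]$, and $N(w)$ and $P_w(\cdot)$ only ever refer to $[w]$; so it suffices to show that for each nonempty finite $W$ carrying an upward-closed $\mathcal N\subseteq\wp(W)$ with $W\in\mathcal N$, $\emptyset\notin\mathcal N$, and properties (d), (sc), (scott), there is a full-support probability measure $P$ on $\wp(W)$ with $X\in\mathcal N\iff P(X)>\tfrac12$. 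One then glues the per-class measures with arbitrary positive weights and checks that conditioning on $[w]$ recovers the relevant class measure.

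For the Scott application, fix such a $W,\mathcal N$, put $S:=W$, and for $A\subseteq W$ let $\bar A:=2\iota(A)-\iota(W)\in L(W)$ be the vector that is $+1$ on $A$ and $-1$ off $A$. Then $X:=\{\bar A\mid A\subseteq W\}$ is a finite rational symmetric subset of $L(W)$, since $-\bar A=\overline{W\setminus A}$, and we take $N:=\{\bar A\mid W\setminus A\notin\mathcal N\}$. Scott's first condition, $x\in N$ or $-x\in N$ for every $x\in X$, unwinds precisely to ``$A$ and $W\setminus A$ are not both in $\mathcal N$''---this is property (d). For Scott's second condition, suppose $x_0,\dots,x_n\in N$ with $\sum_i x_i=0$; writing $x_i=\bar{A_i}$ (so $W\setminus A_i\notin\mathcal N$) and reading off each coordinate, $n+1=2m$ is even and every world belongs to exactly $m$ of the $A_i$. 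We want $-x_0=\overline{W\setminus A_0}\in N$, i.e.\ $A_0\notin\mathcal N$; if instead $A_0\in\mathcal N$, apply (scott) with scheme length $2m$ to the sequences $X_i:=A_{i-1}$ and $Y_i:=W\setminus A_{i-1}$: then $X_1=A_0\in\mathcal N$, $W\setminus X_i\notin\mathcal N$ for $i\ge2$, and $(X_i\mathbb{I}Y_i)_{i=1}^{2m}$ holds because each world lies in exactly $m$ of the $X_i$ and in exactly $m$ of the $Y_i$; the conclusion yields some $W\setminus A_{j-1}\in\mathcal N$, a contradiction. (The case $n=1$ is immediate from $x_1=-x_0$, and $n=0$ is vacuous since $0\notin X$.) So Scott's theorem hands us a linear functional $f$ on $L(W)$ with $N=\{x\in X\mid f(x)\ge0\}$.

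Setting $\mu(w):=f(\delta_w)$ and extending additively, $f(\bar A)=2\mu(A)-\mu(W)$, so by the definition of $N$ we get $W\setminus A\notin\mathcal N\iff\mu(A)\ge\tfrac12\mu(W)$, i.e.\ (substituting $A=W\setminus X$) $X\in\mathcal N\iff\mu(X)>\tfrac12\mu(W)$. Taking $A=\emptyset$ and using $W\in\mathcal N$ shows $\overline\emptyset=-\bar W\notin N$, hence $f(-\bar W)=-\mu(W)<0$ and $\mu(W)>0$; so $P:=\mu/\mu(W)$ is a normalized additive set function with $X\in\mathcal N\iff P(X)>\tfrac12$, and it is a genuine probability measure as soon as every $\mu(w)\ge0$.

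The remaining---and most delicate---point is to secure \emph{full} support, $\mu(w)>0$ for all $w$; this is where (sc) and (kbm) are used, via a split on whether $\mathcal N$ has any \emph{balanced} set (meaning $X$ with $X,W\setminus X\notin\mathcal N$, so necessarily $P(X)=\tfrac12$). If some balanced $B$ exists, then because $B$ and $W\setminus B$ partition $W$ every world lies in a balanced set, and for $w\in B$ one has $B\setminus\{w\}\notin\mathcal N$ by (kbm) while $(W\setminus B)\cup\{w\}\in\mathcal N$ by (sc) applied with $X=W\setminus B$; hence $\mu(B)=\tfrac12\mu(W)>\mu(B\setminus\{w\})$ and $\mu(w)=\mu(B)-\mu(B\setminus\{w\})>0$. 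If there is no balanced set, re-run the Scott construction with the (still rational symmetric) vector set $X\cup\{\pm\delta_w\mid w\in W\}$ and realized set $N\cup\{\delta_w\mid w\in W\}$ (omitting the $-\delta_w$): the first condition is still just (d), and the second is now \emph{vacuous}, since any vanishing combination of realized vectors would---by exactly the (scott) argument above---force its $\bar A$-part to consist of balanced sets, hence to be empty, leaving a vanishing sum of $\delta_w$'s, which must be the empty sum; so no such combination exists, the realizing $f$ has $f(\delta_w)>0$ for every $w$, and $P$ is full-support. In either case we obtain the required $P$, and the gluing from the first paragraph completes the proof. I expect this full-support case analysis---particularly the interplay of (sc) and (kbm) in the balanced case and the correctness of the vacuity claim in the non-balanced case---to be the part demanding the most care.
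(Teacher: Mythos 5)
Your proof is correct, and while it follows the same skeleton as the paper's---reduce to a single $R$-class, encode the threshold condition with $\pm1$ ``balance'' vectors, verify the two hypotheses of Scott's Theorem~\ref{theorem:scott} via (d) and (scott), normalize, and glue the per-class functionals---it diverges genuinely in how nonnegativity and full support are secured. The paper places the characteristic vectors $\iota(X)$ of \emph{all} subsets into the realized set alongside the balance vectors; this makes the verification of Scott's second condition heavier (a vanishing sum can mix the two kinds of vectors, and one must first invoke (scott) together with (sc) to show the characteristic-vector part vanishes), but in return the realizing functional is automatically nonnegative and full support falls out of a short coordinate-pattern argument (a vector with both a $-1$ and a $0$ coordinate cannot lie in the realized family). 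You realize only the balance vectors, which reduces the Scott verification to a single clean application of (scott), but leaves you a priori with only a signed measure; you then recover positivity by a case split on whether a ``balanced'' set (one forced to have measure exactly $\tfrac12$) exists, using (sc) directly in the balanced case and a second application of Scott's theorem with the singleton indicators $\delta_w$ adjoined in the unbalanced case, where the cancellation hypothesis is indeed vacuous by the argument you give. Both routes are sound; yours isolates more sharply which axiom does what ((d) for totality, (scott) for cancellation, (sc) only for positivity) at the cost of the extra case analysis, whereas the paper's single invocation of Scott's theorem is more monolithic. One small remark: in your balanced case the strict inequality $\mu(B\setminus\{w\})<\tfrac12\mu(W)$ comes entirely from $(W\setminus B)\cup\{w\}\in\mathcal{N}$ via (sc); the (kbm) observation $B\setminus\{w\}\notin\mathcal{N}$ yields only the non-strict bound and is not actually needed.
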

\begin{proof}
  We credit Lenzen \cite{Lenzen1980:gwuw} for this proof, though we
  herein provide an original reformulation of his work within the
  setting of the epistemic neighborhood models introduced in this
  paper.  Proceeding, for $w\in W$, define $S_w:=[w]$.  For each
  $X\subseteq S_w$, define the relative complement $X':=S_w-X$ and let
  $\iota(X):S_w\to\{0,1\}$ be the characteristic function of $X$:
  \[
  \iota(X)(s):=\begin{cases}
    1 & \text{if } s\in X, \\
    0 & \text{otherwise.}
  \end{cases}
  \]
  We consider the following finite subsets of $L(S_w)$:
  \begin{eqnarray*}
    \mathcal{A}_w &:=& 
    \{\iota(X)\mid X\subseteq S_w\}\enspace, 
    \\
    \mathcal{B}_w &:=& 
    \{\iota(X)-\iota(X')\mid X\subseteq S_w 
    \text{ \& } X'\notin N(w)\}\enspace,
    \\
    \mathcal{N}_w &:=&
    \mathcal{A}_w\cup\mathcal{B}_w\enspace,
    \\
    \mathcal{X}_w &:=&
    \mathcal{N}_w\cup(-\mathcal{N}_w)\enspace.
  \end{eqnarray*}
  It is easy to see that $\mathcal{N}_w\subseteq\mathcal{X}_w$
  and that $\mathcal{X}_w$ is a finite, rational, and symmetric
  subset of $L(S_w)$.  We wish to show that $\mathcal{N}_w$
  and $\mathcal{X}_w$ satisfy the conditions of
  Theorem~\ref{theorem:scott}. First, we note that
  $x\in\mathcal{X}_w$ implies $x\in\mathcal{N}_w$ or
  $-x\in\mathcal{N}_w$ by the definition of $\mathcal{X}_w$.

  For the second condition of Theorem~\ref{theorem:scott}, suppose we
  are given an integer $n\geq 0$ such that
  $x_0,\dots,x_n\in\mathcal{N}_w$ and $\sum_{i=0}^n x_i=0$.  We
  wish to show that $-x_0\in\mathcal{N}_w$.  Proceeding, there
  exists an integer $\ell$ satisfying:
  \begin{eqnarray*}
    0\leq i\leq\ell & \text{implies} &
    x_i=\iota(X_i)-\iota(X_i')\in\mathcal{B}_w 
    \enspace,\text{and}
    \\
    \ell<i\leq n & \text{implies} &
    x_i=\iota(X_i)\in\mathcal{A}_w \enspace.
  \end{eqnarray*}
  Toward a contradiction, assume there exists $i>\ell$ with $x_i\neq
  0$.  Then for $x^*:=\sum_{i=\ell+1}^n x_i$, we have $x^*(s)\geq 0$
  for all $s\in S_w$, and there exists $s^*\in S_w$ with
  $x^*(s^*)>0$. Hence
  \[
  \textstyle \sum_{i=0}^\ell x_i=
  \sum_{i=0}^\ell\bigl(\iota(X_i)-\iota(X'_i)\bigr)=-x^*\enspace,
  \]
  where $-x^*(s^*)<0$ and $-x^*(s)\leq 0$ for all $s\in S_w$.  So for
  each $s\in S_w$, the number of the sets in the list
  $X_0',\dots,X_\ell'$ containing $s$ is greater than or equal to the
  number of the sets in the list $X_0,\dots,X_\ell$ containing $s$.
  Further, $s^*$ is a member of strictly more sets in the former list
  than those in the latter.  By renumbering, we may assume that
  $s^*\in X_0'-X_0$.  Then we have
  \[
  X_0\cup\{s^*\},X_1,\dots,X_\ell \mathbb{I}
  X_0',X_1',\dots,X_\ell'\enspace.
  \]
  Since $X_0',\dots,X_\ell'\notin N(w)$, it follows by (scott) that
  $X_0\cup\{s^*\}\notin N(w)$.  But $X_0\subsetneq
  X_0\cup\{s^*\}\notin N(w)$ and $X_0'\notin N(w)$, which violates
  (sc).  Conclusion: $i>\ell$ implies $x_i=0$.  But then we have
  $\sum_{i=0}^n x_i=\sum_{i=0}^\ell x_i$.  Since
  $x_i=\iota(X_i)-\iota(X_i')$ for $i\leq\ell$, it follows that
  $\sum_{i=0}^\ell\iota(X_i)=\sum_{i=0}^\ell\iota(X_i')$. But the
  latter is what it means to have $(X_i\mathbb{E}X_i')_{i=0}^\ell$.
  Since $X_i'\notin N(w)$ for $i\leq\ell$ by the definition of
  $\mathcal{B}_w$, it follows by (scott) that $X_0\notin N(w)$.
  But then $\iota(X'_0)-\iota(X_0)=-x_0\in\mathcal{B}_w \subseteq
  \mathcal{N}_w$, as desired.

  So we may apply Theorem~\ref{theorem:scott}: there exists a linear
  functional $f_w$ on $L(S_w)$ that realizes
  $\mathcal{N}_w$.  That is,
  \[
  \mathcal{N}_w=\{x\in\mathcal{X}_w\mid f_w(x)\geq
  0\}\enspace.
  \]
  Define $g_w:\wp(S_w)\to\mathbb{R}$ by the composition
  $g_w(X):=f_w(\iota(X))$.  This function satisfies a few
  important properties.
  \begin{enumerate}
  \item \label{prop:XinN} $X\in N(w)$ iff $g_w(X)>g_w(X')$.

    Suppose $X\in N(w)$. Then $X'\notin N(w)$ by (d).  Hence
    $\iota(X)-\iota(X')\in\mathcal{B}_w$ and
    $\iota(X')-\iota(X)\notin\mathcal{B}_w$.  Since $S_w\in
    N(w)$ by (n), it follows that $X\neq\emptyset=S_w'$.  But
    then the coordinates of $\iota(X')-\iota(X)$ contain at least one
    $1$ and at least one $-1$.  Since every $x\in\mathcal{A}_w$
    has coordinates that are $1$'s or $0$'s only, it follows that
    $\iota(X')-\iota(X)\notin\mathcal{N}_w$.  As
    $\iota(X)-\iota(X')\in\mathcal{B}_w\subseteq
    \mathcal{N}_w$ and $f_w$ is linear and realizes
    $\mathcal{N}_w$, it follows that $g_w(X)\geq g_w(X')$
    and $g_w(X')\ngeq g_w(X)$.  That is,
    $g_w(X)>g_w(X')$.

    Conversely, suppose $g_w(X)>g_w(X')$. Since $f_w$ is
    linear and realizes $\mathcal{N}_w$, it follows that
    $\iota(X')-\iota(X)\notin\mathcal{N}_w\supseteq
    \mathcal{B}_w$. Applying the definition of
    $\mathcal{B}_w$, we have $X\in N(w)$.

  \item \label{prop:emptyS} $g_w(S_w)>g_w(\emptyset)=0$.

    We have $g_w(\emptyset)=f_w(0)=0$ by the linearity of
    $f_w$.  Since $S_w\in N(w)$ by (n), it follows that
    $g_w(S_w)>g_w(\emptyset)$ by property \ref{prop:XinN}.

  \item \label{prop:zerotoS} If $0\leq g_w(X)\leq
    g_w(S_w)$.
    
    Since $\iota(X)\in\mathcal{A}_w\subseteq \mathcal{N}_w$
    and $f_w$ realizes $\mathcal{N}_w$, we have
    $g_w(X)\geq 0$.  So each $X\subseteq S_w$ satisfies
    $g_w(X)\geq 0$.  From this it follows by the linearity of
    $f_w$ that for each $X\subseteq S_w$, we have
    \[
    \textstyle g_w(X)=\sum_{v\in X}g_w(\{v\}) \leq\sum_{v\in
      S_w}g_w(\{v\}) =g_w(S_w)\enspace.
    \]

  \item \label{prop:additivity} If $X,Y\subseteq S_w$ and $X\cap
    Y=\emptyset$, then $g_w(X\cup Y)=g_w(X)+g_w(Y)$.
    
    By the linearity of $f_w$.

  \item \label{prop:fullsupport} $\emptyset\neq X\subseteq S_w$
    implies $g_w(X)>0$.

    Suppose $\emptyset\neq X\subseteq S_w$. By property
    \ref{prop:emptyS}, it suffices to prove the result for $X\neq
    S_w$. Toward a contradiction, assume $g_w(X)=0$ for
    $\emptyset\subsetneq X\subsetneq S_w$.  By property
    \ref{prop:additivity}, we have
    $g_w(S_w)=g_w(X)+g_w(X')=g_w(X')$.  Since
    $f_w$ is linear and realizes $\mathcal{N}_w$ and
    \[
    \iota(X')-\iota(S_w) = -(\iota(S_w)-\iota(X')) =
    -\iota(X)\in\mathcal{X}_w\enspace,
    \]
    we obtain $-\iota(X)\in\mathcal{N}_w$.  But
    $\emptyset\subsetneq X\subsetneq S_w$ implies that $-\iota(X)$
    has coordinates containing at least one $-1$ and at least one $0$.
    Since members of $\mathcal{A}_w$ have coordinates made up of
    $0$'s and $1$'s, members of $\mathcal{B}_w$ have coordinates
    made up of $-1$'s and $1$'s, and
    $\mathcal{N}_w=\mathcal{A}_w \cup \mathcal{B}_w$, it
    cannot be the case that $-\iota(X)\in\mathcal{N}_w$.
    Contradiction.  Conclusion: $g_w(X)>0$.
  \end{enumerate}

  Now take $v\in[w]$.  Since $N(v)=N(w)$ by (a), it follows that $g_w$
  also realizes $\mathcal{N}_{v}$.  So, letting $[W]$ be the set
  $\{[w]\mid w\in W\}$ of equivalence classes, let $h:[W]\to W$ be a
  choice function that selects for each class $[w]\in[W]$ a
  representative $h([w])\in[w]$. Using a notational overloading that ought
  to be harmless, we define a new function $h_w:\wp([w])\to\mathbb{R}$
  by setting $h_w(X):=g_{h([w])}(X)$.  Obviously, $v\in[w]$ implies
  $h_{v}=h_w$.  Finally, we define $P:\wp(W)\to[0,1]$ by
  \[
  P(X):= \sum_{[w]\in[W]} \frac {h_w(X\cap[w])}
  {h_w([w])}\enspace.
  \]
  Note that by property \ref{prop:emptyS}, the denominator
  $h_w([w])$ is always nonzero.

  We prove that $P$ is a probability measure on $\wp(W)$ satisfying
  full support. First, $P$ satisfies the Kolmogorov axioms over the
  finite algebra $\wp(W)$: we have $P(X)\geq 0$ by property
  \ref{prop:zerotoS}, $P(W)=1$ by property \ref{prop:emptyS} and the
  definition of $P$, and $P(X\cup Y)=P(X)+P(Y)$ for disjoint
  $X$ and $Y$ by property \ref{prop:additivity} and the definition of
  $P$.  Second, full support follows by property
  \ref{prop:fullsupport}.

  Finally, for $X\subseteq[w]$, we have by property \ref{prop:XinN}
  that $X\in N(w)$ iff $h_w(X)>h_w(X')$.  But the latter
  holds iff we have (making use of property \ref{prop:additivity})
  that
  \[
  2\cdot h_w(X)>h_w(X)+h_w(X')=h_w([w])\enspace.
  \]
  By property \ref{prop:emptyS}, the definition of $P$, and the fact
  that $X\subseteq[w]$, the above inequality holds iff
  \[
  P(X)=\frac{h_w(X)}{h_w([w])}>\textstyle \frac
  12\enspace.
  \]
\end{proof}

\begin{corollary}
  \label{corollary:lenzen}
  Let $\M=(W,R,V,N)$ be a mid-threshold epistemic neighborhood model.
  There exists an epistemic probability model $\N=(W,R,V,P)$ that
  agrees with $\M$ for threshold $\frac 12$.
\end{corollary}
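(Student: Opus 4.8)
The plan is to simply repackage the probability measure produced by Theorem~\ref{theorem:lenzen}. Given the mid-threshold epistemic neighborhood model $\M=(W,R,V,N)$, Theorem~\ref{theorem:lenzen} hands us a probability measure $P:\wp(W)\to[0,1]$ that has full support and satisfies, for every $w\in W$ and every $X\subseteq[w]$, the equivalence $X\in N(w)$ iff $P_w(X):=P(X\mid[w])>\frac12$. I would then set $\N:=(W,R,V,P)$ and check that this is an epistemic probability model in the sense of Definition~\ref{definition:epistemic-probability-model}: the triple $(W,R,V)$ is inherited unchanged from $\M$, so it is already a finite single-agent $\mathsf{S5}$ Kripke model, and $P$ is by construction a probability measure over the finite algebra $\wp(W)$ with the full-support property. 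Nothing new needs to be verified on this point.

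It then remains to observe that $\N$ agrees with $\M$ for threshold $\frac12$. By the definition of agreement between an epistemic probability model and an epistemic neighborhood model, this requires two things: first, that the underlying Kripke models coincide, i.e.\ $(W,R,V)=(W,R,V)$, which is immediate; and second, that $P$ agrees with $\M$ for threshold $\frac12$ in the sense of the definition of model agreement --- but this is precisely the conjunction of full support and the displayed neighborhood/probability equivalence delivered by Theorem~\ref{theorem:lenzen}. Hence $\N$ agrees with $\M$ for threshold $\frac12$, which is what the corollary asserts.

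There is essentially no obstacle here: all of the mathematical content --- the invocation of Scott's Theorem~\ref{theorem:scott}, the construction of the local linear functionals $f_w$ on each $L(S_w)$, the stitching of the associated $g_w$'s together across the equivalence classes of $R$ into a single global measure, and the verification of the Kolmogorov axioms together with full support --- has already been carried out in the proof of Theorem~\ref{theorem:lenzen}. The corollary is merely that result restated at the level of models rather than of bare probability measures, so the proof is a one-line appeal to Theorem~\ref{theorem:lenzen} followed by an unwinding of the definition of model agreement.
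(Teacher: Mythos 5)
Your proposal is correct and follows exactly the paper's own proof, which is the one-line "Let $P$ be the measure given by Theorem~\ref{theorem:lenzen}"; you have merely spelled out the routine unwinding of the definition of agreement that the paper leaves implicit.
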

\begin{proof} 
  Let $P$ be the measure given by Theorem~\ref{theorem:lenzen}.
\end{proof}

\subsection{Epistemic Neighborhood Models from Probability Measures}
\label{Section:BeliefBet}

In the last subsection, we investigated the question of whether an
epistemic neighborhood model gives rise to an agreeing epistemic
probability model.  In this section, we look at this question the
other way around: given an epistemic probability model and a threshold
$c$, is there an agreeing epistemic neighborhood model?  As we will
see, the answer is always ``yes.''

\begin{definition}
  Given an epistemic probability model $\M = (W,R,V,P)$ and a
  threshold $c\in[\frac 12,1)\cap\Rat$, we define the structure $\M^c
  := (W,R,V,N^c)$ by setting
  \[
  N^c(w) := \{X\subseteq[w]\mid P_w(X)>c\}\enspace.
  \]
\end{definition}

Intuitively, the agent believes a proposition $X$ at world $w$ (i.e.,
$X\in N^c(w)$) if and only if $X$ is epistemically possible (i.e.,
$X\subseteq[w]$) and the probability she assigns to $X$ at world $w$
exceeds the threshold (i.e., $P_w(X)>c$).

\begin{lemma}[Correctness]
  \label{lemma:correctness}
  Fix $c\in(0,1)\cap\Rat$.  If $\M$ is an epistemic probability model,
  then $\M^c$ is an epistemic neighborhood model.  Furthermore,
  $\M^{\frac 12}$ is a mid-threshold neighborhood model.
\end{lemma}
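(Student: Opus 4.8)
The plan is to verify, for an arbitrary epistemic probability model $\M=(W,R,V,P)$ and threshold $c\in(0,1)\cap\Rat$, that the structure $\M^c=(W,R,V,N^c)$ satisfies each of the five defining conditions of an epistemic neighborhood model, and then to check the three extra ``mid-threshold'' properties in the special case $c=\frac 12$. Since $(W,R,V)$ is already a finite single-agent $\mathsf{S5}$ Kripke model by hypothesis, the only work is in the neighborhood function $N^c$, and each check is a short calculation with conditional probabilities $P_w(X)=P(X\cap[w])/P([w])$ using full support to guarantee nonzero denominators.

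First I would dispatch the five neighborhood conditions. For (kbc): every $X\in N^c(w)$ is by definition a subset of $[w]$. For (kbf): $P_w(\emptyset)=0\not>c$, so $\emptyset\notin N^c(w)$. For (n): $P_w([w])=P([w]\cap[w])/P([w])=1>c$, so $[w]\in N^c(w)$. For (a): if $v\in[w]$ then $[v]=[w]$ since $R$ is an equivalence relation, and hence $P_v=P_w$ as functions on $\wp(W)$, so $N^c(v)=N^c(w)$. For (kbm): if $X\subseteq Y\subseteq[w]$ and $X\in N^c(w)$, then monotonicity of $P$ gives $P_w(Y)\geq P_w(X)>c$, so $Y\in N^c(w)$.

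Next I would turn to the mid-threshold properties for $c=\frac 12$. For (d): if $X\in N^{1/2}(w)$ then $P_w(X)>\frac 12$, so $P_w([w]-X)=1-P_w(X)<\frac 12$, whence $[w]-X\notin N^{1/2}(w)$. For (sc): suppose $[w]-X\notin N^{1/2}(w)$, i.e.\ $P_w([w]-X)\leq\frac 12$, equivalently $P_w(X)\geq\frac 12$; if moreover $X\subsetneq Y\subseteq[w]$, then $Y=X\cup(Y\setminus X)$ with $Y\setminus X$ nonempty, and by full support $P_w(Y\setminus X)>0$, so $P_w(Y)=P_w(X)+P_w(Y\setminus X)>\frac 12$, giving $Y\in N^{1/2}(w)$. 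For (scott): this is exactly the content of Theorem~\ref{theorem:belief}\eqref{item:B-Len} applied to the probability model $\M$ — indeed, given $X_1,\dots,X_m,Y_1,\dots,Y_m\subseteq[w]$ with $(X_i\mathbb{I}Y_i)_{i=1}^m$, with $X_1\in N^{1/2}(w)$ and with $[w]-X_i\notin N^{1/2}(w)$ for $i\geq 2$, one reads off (as in the proof of that item) that $\sum_i P_w(X_i)\leq\sum_i P_w(Y_i)$ from the multiplicity hypothesis, that $P_w(X_1)>\frac 12$ and $P_w(X_i)\geq\frac 12$ for $i\geq 2$, hence $\sum_i P_w(Y_i)>m/2$, so some $P_w(Y_j)>\frac 12$, i.e.\ $Y_j\in N^{1/2}(w)$. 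The main (mild) obstacle is just matching up the purely semantic statements of (d), (sc), (scott) in Definition~\ref{definition:extra-properties} with the syntactic/probabilistic versions already proved; since the arithmetic is identical, the cleanest route is to invoke the relevant clauses of Theorem~\ref{theorem:belief} directly rather than re-deriving them, and the only care needed is the trivial translation ``$[w]-X\notin N^{1/2}(w)$'' $\iff$ ``$P_w(X)\geq\frac12$'', which is Lemma~\ref{lemma:dual}.
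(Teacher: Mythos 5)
Your proposal is correct and follows essentially the same route as the paper's own proof: a condition-by-condition verification of (kbc), (kbf), (n), (a), (kbm) by direct computation with $P_w$, followed by (d), (sc), and (scott) via the arithmetic of Lemma~\ref{lemma:dual} and the summation argument from Theorem~\ref{theorem:belief}\eqref{item:B-Len}. The only (harmless) differences are that the paper states (d) and (scott) for the slightly more general ranges $c\in[\frac12,1)$ and $c\in(0,\frac12]$ respectively, and that you make the use of full support in the (sc) step explicit where the paper leaves it implicit.
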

\begin{proof}
  We verify that $N^c$ satisfies the required properties.
  \begin{itemize}
  \item For (kbc), $X\in N^c(w)$ implies $X\subseteq[w]$ by
    definition.

  \item For (kbf), $P_w(\emptyset)=0<c$, so $\emptyset\notin N^c(w)$.

  \item For (n), $P_w([w]) = 1 > c$, so
    $[w] \in N^c(w)$.

  \item For (a), suppose $X \in N^c(w)$ and $v\in [w]$.  Then
    $P_w(X)>c$.  Since $v\in [w]$ implies $[w] = [v]$, we
    have
    \[
    P_w(X) = 
    \frac{P(X\cap[w])}{P([w])} =
    \frac{P(X\cap[v])}{P([v])} =
    P_{v}(X) \enspace.
    \]
    Hence $P_{v}(X)>c$, so $X \in N^c(v)$.

  \item For (kbm), suppose $X \in N^c(w)$.  Then $P_w(X)>c$.
    Hence if $Y$ satisfies $X \subseteq Y \subseteq [w]$, we have
    $P_w(Y)>c$ and so $Y \in N^c(w)$.
  \end{itemize}
  So $\M^c$ is an epistemic neighborhood model.  We now show that
  $\M^{\frac 12}$ satisfies the additional required properties.
  \begin{itemize}
  \item For (d), assume $c\in[\frac 12,1)\cap\Rat$ and $X \in
    N^c(w)$.  Then $P_w(X) > c$, and therefore $P_w([w] -
    X) \leq 1-c\leq c$. Hence $[w] - X \notin N^c(w)$.

  \item For (sc), assume $X':=[\Gamma]-X\notin N^{\frac 12}(w)$
    and $X\subsetneq Y\subseteq[\Gamma]$.  From the first
    assumption, we have $P_w(X') \leq \frac 12$, and therefore
    that $P_w(X)\geq \frac 12$.  Applying the second assumption,
    $P_w(Y) > P_w(X)\geq \frac 12$, and hence $X\in N^{\frac
      12}(w)$.

  \item For (scott), we assume $c\in(0,\frac 12]\cap\Rat$ along with the
    following:
    \begin{eqnarray}
      &&
      (X_i\mathbb{I}Y_i)_{i=1}^m
      \label{eq:prop-l:E2} 
      \\ &&
      X_1\in N^c(w) 
      \label{eq:prop-l:X1} 
      \\ && 
      \forall i\in\{2,\dots,m\}: [\Gamma]-X_i\notin N^c(w) 
      \label{eq:prop-l:Xcs} 
    \end{eqnarray}
    From \eqref{eq:prop-l:E2} it follows that
    \begin{equation}
      P_w (X_1)+\cdots+P_w(X_m)\leq
      P_w(Y_1)+\cdots+ P_w(Y_m)
      \label{eq:sums-eq}
    \end{equation}
    The argument for this is similar to an argument for \eqref{eq:sum}
    in proof of Theorem~\ref{theorem:belief}\eqref{item:B-Len}.  From
    \eqref{eq:prop-l:X1}, we have $P_w(X_1)>c$.  From
    \eqref{eq:prop-l:Xcs}, we have for each $i\in\{2,\dots,m\}$ that
    $P_w([w]-X_i)\leq c$ and therefore that $P_w(X_i)\geq
    1-c\geq c$ since $c\in(0,\frac 12]\cap\Rat$.  Hence the left side
    of \eqref{eq:sums-eq} exceeds $mc$.  Since every summand on the
    right side of the inequality is positive and $mc>0$, it follows
    that at least one member of the right side of \eqref{eq:sums-eq}
    must exceed $c$.  That is, there exists $j\in\{1,\dots,m\}$ such
    that $P_w(Y_j) > c$ and hence $Y_j\in N^c(w)$.  \qedhere
  \end{itemize}
\end{proof}

\begin{theorem}
  Let $c\in(0,1)\cap\Rat$ and $\M=(W,R,V,P)$ be an epistemic
  probability model. The epistemic neighborhood model
  $\M^c=(W,R,V,N^c)$ agrees with $\M$ for threshold $c$.
\end{theorem}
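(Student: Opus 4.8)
The plan is to treat this theorem as essentially a bookkeeping exercise: the only genuine content — that $\M^c$ really is an epistemic neighborhood model — has already been discharged in Lemma~\ref{lemma:correctness}, so what remains is simply to check $\M$ and $\M^c$ against the three clauses in the definition of model agreement.

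First I would note that, by construction, $\M^c=(W,R,V,N^c)$ carries the very same underlying finite single-agent $\mathsf{S5}$ Kripke model $(W,R,V)$ as $\M=(W,R,V,P)$; in particular the equivalence classes $[w]$ are computed identically in the two structures, so there is no ambiguity in the notation $P_w(X)=P(X\mid[w])$ across them. By Lemma~\ref{lemma:correctness} (applicable since $c\in(0,1)\cap\Rat$), the function $N^c$ satisfies (kbc), (kbf), (n), (a), and (kbm), so $\M^c$ is a bona fide epistemic neighborhood model and the Model Agreement definition applies to it.

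Next I would verify the two substantive conditions. Full support holds for $P$ because $\M$ is, by hypothesis, an epistemic probability model and full support is part of that definition; nothing about $N^c$ is needed here. For the neighborhood condition, fix $w\in W$ and $X\subseteq[w]$: then $X\in N^c(w)$ holds, by the very definition $N^c(w):=\{X\subseteq[w]\mid P_w(X)>c\}$, if and only if $P_w(X)>c$, which is exactly the required biconditional $X\in N^c(w)\iff P_w(X):=P(X\mid[w])>c$. Assembling these three observations, $P$ agrees with $\M^c$ for threshold $c$ in the sense of the Model Agreement definition, and since the Kripke reduct of $\M$ coincides with that of $\M^c$, the probability model $\M$ and the neighborhood model $\M^c$ agree for threshold $c$, as claimed.

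I do not expect any real obstacle: the defining biconditional of agreement is built directly into $N^c$, and the one genuinely non-trivial point — that $N^c(w)$ obeys all the closure conditions demanded of a neighborhood function — was already handled in Lemma~\ref{lemma:correctness}. The only thing worth flagging is the reading of the phrase ``agrees with'': the formal definition takes the neighborhood model as its second argument, whereas here $\M^c$ (a neighborhood model) is said to agree with $\M$ (a probability model). This is harmless, since agreement is a symmetric relation between the two kinds of model, used in precisely this symmetric way in Theorem~\ref{theorem:agreement} and Corollary~\ref{corollary:lenzen}.
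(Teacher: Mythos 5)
Your proposal is correct and matches the paper's argument, which is simply ``By definition of $N^c$'': the agreement biconditional is built into the definition of $N^c$, full support is part of the definition of an epistemic probability model, and Lemma~\ref{lemma:correctness} supplies the fact that $\M^c$ is a genuine epistemic neighborhood model. Your observation about the direction of the phrase ``agrees with'' is a fair reading of the paper's (slightly asymmetric) definition and does not affect the result.
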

\begin{proof}
  By definition of $N^c$.
\end{proof}

\section{Calculi for Belief as Willingness to Bet}
\label{Section:Calculi}

We now consider an axiomatic link both with epistemic neighborhood
models and with epistemic probability models.  We study two calculi:
the calculus $\KB$ of epistemic neighborhood models, and the calculus
$\KBeq$ of mid-threshold neighborhood models.  Regarding the
probability interpretation, $\KB$ is sound for every threshold but not
complete for any threshold.  $\KBeq$ is both sound and complete for
the probability interpretation with threshold $c=\frac 12$.

$\KBeq$ is our modern reformulation of Lenzen's \cite{Lenzen1980:gwuw}
calculus for the logic of knowledge (i.e., Lenzen's ``acceptance'') as
probabilistic certainty and belief as probability exceeding threshold
$\frac 12$.  Lenzen's intended semantic structures are something like
epistemic probability models.  Our intended semantic structures are
our mid-threshold neighborhood models, though there is a natural link
with epistemic probability models via Theorem~\ref{theorem:lenzen}.
In fact, many of the main ideas of our proof of
Theorem~\ref{theorem:lenzen} are not doubt translations of Lenzen's
ideas into the language of our epistemic neighborhood models. Since we
have rewritten all proofs using our own approach and modern modal
notions, it is difficult to determine whether we have introduced novel
mathematical results on top of Lenzen's existing work, though we
suspect that anything new we may have added along these lines
(excluding of course epistemic neighborhood models themselves and all
related results except Theorem~\ref{theorem:lenzen}) may be slight at
best.  Therefore, we are happy to credit Professor Lenzen for the
probabilistic soundness and completeness of $\KBeq$ and for
Theorem~\ref{theorem:lenzen}.  Nevertheless, we do think that it is
worth our effort to provide this modern reformulation of his results.
In particular, we believe that in using semantic structures more
familiar to the modern modal logician, our modern reformulation of
Lenzen's results will make the mathematical details of Lenzen's work
more accessible to a modern English-language audience.  We also hope
that our use of the modal neighborhood structures will suggest
directions for further study of qualitative probability via tools from
modal logic.

\begin{definition}
  \label{definition:calculi}
  We define the following theories in the language $\Lang_\KB$.
  \begin{itemize}
  \item $\KB$ is defined in Table~\ref{table:KB}.

  \item $\KBeq$ is obtained from $\KB$ by adding (D), (SC), and (Scott)
    from Table~\ref{table:additional-schemes}.

  \item $\KBeqm$ is obtained from $\KBeq$ by omitting (BF) and (KBM).
  \end{itemize}
\end{definition}

We will see later in Theorem~\ref{theorem:KBminus} that $\KBeq$ and
$\KBeqm$ derive the same theorems.

\begin{table}[ht]
  \begin{center}
    \textsc{Axiom Schemes}\\[.4em]
    \renewcommand{\arraystretch}{1.3}
    \begin{tabular}[t]{cl}
      (CL) &
      Schemes of Classical Propositional Logic
      \\
      (KS5) &
      $\mathsf{S5}$ axiom schemes for each $K$
      \\
      (BF) &
      $\lnot B\bot$
      \\
      (N) &
      $B\top$
      \\
      (Ap) &
      $B\varphi\to KB\varphi$
      \\
      (An) &
      $\lnot B\varphi\to K\lnot B\varphi$
      \\
      (KBM) &
      $K(\varphi\to\psi)\to(B\varphi\to B\psi)$
    \end{tabular}
    \renewcommand{\arraystretch}{1.0}
    \\[1em]
    \textsc{Rules}\vspace{-.5em}
    \[
    \begin{array}{c}
      \varphi\to\psi \quad \varphi
      \\\hline
      \psi
    \end{array}\;\text{\footnotesize(MP)}
    \qquad
    \begin{array}{c}
      \varphi
      \\\hline
      K\varphi
    \end{array}\;\text{\footnotesize(MN)}
    \]
  \end{center}
  \caption{The theory $\KB$}
  \label{table:KB}
\end{table}

\begin{table}[ht]
  \begin{center}
    \renewcommand{\arraystretch}{1.3}
    \begin{tabular}[t]{cl}
      (D) &
      $B\varphi\to \check B\varphi$
      \\
      (SC) &
      $\check B\varphi \land 
      \check K(\lnot\varphi\land\psi) \to 
      B(\varphi\lor\psi)$
      \\
      (Scott) &
      $\textstyle [(\varphi_i\mathbb{I}\psi_i)_{i=1}^m
      \land B\varphi_1 \land \bigwedge_{i=2}^m \check B\varphi_i] \to
      \bigvee_{i=1}^m B\psi_i$
    \end{tabular}
  \end{center}
  \caption{Additional axiom schemes for the theory $\KBeq$}
  \label{table:additional-schemes}
\end{table}

\subsection{Results for the Basic Calculus $\KB$}

The following result shows that if we restrict attention to provable
statements whose only modality is single-agent belief $B\varphi$, then
$\KB$ is an extension of the minimal modal logic
$\mathsf{EMN45}+\lnot B\bot=\mathsf{EMN45}+(\text{BF})$ obtained by
adding $\mathsf{S5}$-knowledge and the knowledge-belief connection
principles (Ap), (An), and
(KBM).\footnote{$\mathsf{EMN45}+(\text{BF})$ is the logic of
  single-agent belief (without knowledge) having Schemes (CL)
  (Table~\ref{table:KB}), M
  (Theorem~\ref{theorem:KBgt-derivables}\eqref{derivables:Band-andB}),
  (N) (Table~\ref{table:KB}), 4
  (Theorem~\ref{theorem:KBgt-derivables}\eqref{derivables:pos-belief}),
  5
  (Theorem~\ref{theorem:KBgt-derivables}\eqref{derivables:neg-belief}),
  and (BF) (Table~\ref{table:KB}) along with Rules (MP)
  (Table~\ref{table:KB}) and RE
  (Theorem~\ref{theorem:KBgt-derivables}\eqref{derivables:RE}). This
  is a ``monotonic'' system of modal logic satisfying positive and
  negative belief introspection (4 and 5) and the property (BF) that
  falsehood $\bot$ is not believed. See \cite[Ch.~8]{Chellas:ml} for
  details on naming minimal modal logics.} The modal theory $\KBeq$,
which we will see is equivalent to $\KBeq^-$, is a knowledge-inclusive
extension of $\mathsf{EMND45}+(\text{Scott})$ that adds the additional
connection principle (SC).\footnote{$\mathsf{EMND45}+(\text{Scott})$
  is $\mathsf{EMN45}+(\text{BF})$ minus Scheme (BF) plus Schemes (D)
  and (Scott) from Table~\ref{table:additional-schemes}.}  In
Section~\ref{section:kbeq}, we will show that $\KBeq$ is the modal
logic for probabilistic belief with threshold $c=\frac 12$.

\begin{theorem}[$\KB$ Derivables]
  \label{theorem:KBgt-derivables}
  We have each of the following.
  \begin{enumerate}
  \item $\KB\vdash K\varphi\to B\varphi$.
    \label{derivables:KBC}

    ``Knowledge implies belief.''

  \item $\KB\vdash B(\varphi\land\psi)\to(B\varphi\land B\psi)$.
    \label{derivables:Band-andB}

    This is ``Scheme M'' \cite[Ch.~8]{Chellas:ml}.

  \item $\KB\vdash K\varphi\land B\psi\to B(\varphi\land\psi)$.
    \label{derivables:andB-Band}

    If the antecedent $K\varphi$ were replaced by $B\varphi$, then we
    would obtain ``Scheme C'' \cite[Ch.~8]{Chellas:ml}.  So we do not
    have Scheme C outright but instead a knowledge-weakened version:
    in order to conclude belief of a conjunction from belief of one of
    the conjuncts, the other conjunct must be known (and not merely
    believed, as is required by the stronger, non-$\KB$-provable
    Scheme C).

  \item $\KB\vdash K(\varphi\to\psi)\to(\check B\varphi\to\check B\psi)$.
    \label{derivables:check-M}

    This is the dual version of our (KBM).

    \item $\KB\vdash B\varphi\to BB\varphi$.
    \label{derivables:pos-belief}

    This is ``Scheme 4'' for belief \cite[Ch.~8]{Chellas:ml}.

  \item $\KB\vdash \lnot B\varphi\to B\lnot B\varphi$.
    \label{derivables:neg-belief}

    This is ``Scheme 5'' for belief \cite[Ch.~8]{Chellas:ml}.

  \item $\KB\vdash B\varphi\leftrightarrow KB\varphi$.
    \label{derivables:B-KB}

    This says that belief and knowledge of belief are equivalent.

  \item $\KB\vdash \lnot B\varphi\leftrightarrow K\lnot B\varphi$.
    \label{derivables:nB-KnB}

    This says that non-belief and knowledge of non-belief are equivalent.

  \item $\KB\vdash\varphi$ implies $\KB\vdash B\varphi$.
    \label{derivables:B-nec}

    This is the rule of Modus Ponens (or Modal Necessitation),
    sometimes called ``Rule RN'' \cite[Ch.~8]{Chellas:ml}.

  \item $\KB\vdash\varphi\to\psi$ implies $\KB\vdash B\varphi\to B\psi$.
    \label{derivables:Bimp}

    This is ``Rule RM'' \cite[Ch.~8]{Chellas:ml}.

  \item $\KB\vdash\varphi\to\psi$ implies $\KB\vdash\check
    B\varphi\to\check B\psi$.
    \label{derivables:check-Bimp}
    
    This is the dual version of RM.
    
  \item $\KB\vdash\varphi\leftrightarrow\psi$ implies
    $\KB\vdash B\varphi\leftrightarrow B\psi$.
    \label{derivables:RE}

    This is ``Rule RE'' \cite[Ch.~8]{Chellas:ml}.

  \item $\KB\vdash\varphi\to\bot$ implies $\KB\vdash\lnot
    B\varphi$. \label{derivables:GBF}

    This says that no self-contradictory sentence is believed.  This
    may be viewed as a certain generalization of (BF)
    (Table~\ref{table:KB}).
  \end{enumerate}
\end{theorem}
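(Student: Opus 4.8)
The plan is to run the whole theorem off two ``engines.'' The first is that $K$ is a normal modal operator: from (KS5), which includes the $\mathsf{K}$ distribution axiom $K(\varphi\to\psi)\to(K\varphi\to K\psi)$ and the $\mathsf{T}$ axiom $K\varphi\to\varphi$, together with the rule (MN), one obtains by the usual reasoning the derived rule $\mathsf{RM}_K$ (from $\vdash\varphi\to\psi$ infer $\vdash K\varphi\to K\psi$) and the conjunction law $K\varphi\land K\psi\to K(\varphi\land\psi)$. I would record these two facts at the outset, since they are invoked repeatedly and are entirely standard. The second engine is the knowledge--belief interaction, supplied by (KBM) together with (N), (Ap), and (An).

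The keystone is item~\ref{derivables:KBC}, ``knowledge implies belief.'' To derive $\KB\vdash K\varphi\to B\varphi$: by (CL) we have $\vdash\varphi\leftrightarrow(\top\to\varphi)$, so $\mathsf{RM}_K$ gives $\vdash K\varphi\to K(\top\to\varphi)$; instantiating (KBM) at $\top\to\varphi$ gives $\vdash K(\top\to\varphi)\to(B\top\to B\varphi)$; and (N) gives $\vdash B\top$. Chaining these with (CL) and (MP) yields $K\varphi\to B\varphi$. In parallel I would establish item~\ref{derivables:Bimp} (Rule $\mathsf{RM}_B$): from $\vdash\varphi\to\psi$, (MN) gives $\vdash K(\varphi\to\psi)$, and (KBM) then gives $\vdash B\varphi\to B\psi$. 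These two results are the ``hubs'' on which the rest hangs.

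The remaining items now fall out in a dependency-respecting order. Items~\ref{derivables:Band-andB} and~\ref{derivables:andB-Band} come from applying (KBM)/$\mathsf{RM}_B$ to the (CL)-tautologies $(\varphi\land\psi)\to\varphi$, $(\varphi\land\psi)\to\psi$, and $\varphi\to(\psi\to(\varphi\land\psi))$ (for item~\ref{derivables:andB-Band}, first push $K$ inside the last tautology using $\mathsf{K}$-distribution so that $K\varphi$ yields $K(\psi\to(\varphi\land\psi))$, then apply (KBM)). Item~\ref{derivables:check-M} is the (CL)-contrapositive repackaging of the instance of (KBM) obtained from $(\varphi\to\psi)\to(\lnot\psi\to\lnot\varphi)$ via $\mathsf{K}$-distribution. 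Items~\ref{derivables:pos-belief} and~\ref{derivables:neg-belief} combine item~\ref{derivables:KBC} with (Ap) and (An) respectively: $B\varphi\to KB\varphi\to BB\varphi$ and $\lnot B\varphi\to K\lnot B\varphi\to B\lnot B\varphi$. Items~\ref{derivables:B-KB} and~\ref{derivables:nB-KnB} pair (Ap) and (An) with the $\mathsf{T}$-instances $KB\varphi\to B\varphi$ and $K\lnot B\varphi\to\lnot B\varphi$. Item~\ref{derivables:B-nec} (Rule $\mathsf{RN}_B$): from $\vdash\varphi$, (MN) gives $\vdash K\varphi$, and item~\ref{derivables:KBC} plus (MP) gives $\vdash B\varphi$. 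Item~\ref{derivables:check-Bimp} applies $\mathsf{RM}_B$ to the theorem $\lnot\psi\to\lnot\varphi$ (itself from $\varphi\to\psi$ by (CL)) and contraposes; item~\ref{derivables:RE} is $\mathsf{RM}_B$ in both directions; and item~\ref{derivables:GBF} applies $\mathsf{RM}_B$ to $\varphi\to\bot$ to get $B\varphi\to B\bot$ and then discharges $B\bot$ using (BF).

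There is no deep obstacle here — the content is entirely bookkeeping — so the ``hard part'' is purely organizational: one must sequence the derivations so that item~\ref{derivables:KBC} precedes items~\ref{derivables:pos-belief}--\ref{derivables:nB-KnB} and~\ref{derivables:B-nec}, and item~\ref{derivables:Bimp} precedes items~\ref{derivables:check-Bimp}, \ref{derivables:RE}, and~\ref{derivables:GBF}. The one point of care is that $\mathsf{RM}_B$ is a derived \emph{rule}, so ``contraposing'' it (as in item~\ref{derivables:check-Bimp}) means applying it to the (CL)-contrapositive of the antecedent theorem rather than transforming a single implication; and one should flag the mild trick in item~\ref{derivables:KBC} of using $\vdash\varphi\leftrightarrow(\top\to\varphi)$ to feed a knowledge hypothesis about $\varphi$ into (KBM) with $B\top$ supplied by (N).
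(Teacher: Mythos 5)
Your proposal is correct and follows essentially the same route as the paper's proof: item~\ref{derivables:KBC} via (N) and (KBM) applied to $\top\to\varphi$, Rule RM for $B$ via (MN) plus (KBM), and the remaining items derived from these hubs together with (Ap), (An), (BF), and $\mathsf{S5}$ reasoning in exactly the dependency order the paper uses. No gaps.
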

\begin{proof}
  We reason in $\KB$. For \ref{derivables:KBC}, we have $K\varphi\to
  K(\top\to\varphi)$ by elementary modal reasoning.  But then from
  this, $B\top$ by (N), and $K(\top\to\varphi)\to(B\top\to
  B\varphi)$ by (KBM), it follows by classical reasoning that we have
  $K\varphi\to B\varphi$.

  For \ref{derivables:Band-andB}, we derive
  \begin{equation}
    K((\varphi\land\psi)\to\varphi)\to(B(\varphi\land\psi)\to B\varphi)
    \label{eq:derivables:Band-andB}
  \end{equation}
  by (KBM), and the antecedent of \eqref{eq:derivables:Band-andB} by
  (CL) and (MN).  Therefore, the consequent of
  \eqref{eq:derivables:Band-andB} is derivable by (MN).  By a similar
  argument, $B(\varphi\land\psi)\to B\psi$ is derivable.  By classical
  reasoning, \ref{derivables:Band-andB} is derivable.
  
  For \ref{derivables:andB-Band}, we derive
  \begin{eqnarray}
    &&
    K\varphi\to K(\psi\to(\varphi\land\psi)) \enspace\text{and}
    \label{eq:derivables:andB-Band1}
    \\
    &&
       K(\psi\to(\varphi\land\psi))\to(B\psi\to B(\varphi\land\psi)) \enspace\text{.}
    \label{eq:derivables:andB-Band2}
  \end{eqnarray}
  \eqref{eq:derivables:andB-Band1} follows by $\mathsf{S5}$ reasoning.
  \eqref{eq:derivables:andB-Band2} follows by (KBM).  Applying
  classical reasoning to \eqref{eq:derivables:andB-Band1} and
  \eqref{eq:derivables:andB-Band2}, we obtain
  \[
  K\varphi\to (B\psi\to B(\varphi\land\psi))\enspace,
  \]
  from which \ref{derivables:andB-Band} follows by classical
  reasoning.

  For \ref{derivables:check-M}, we derive
  \begin{eqnarray}
    &&
    K(\varphi\to\psi)\to K(\lnot\psi\to\lnot\varphi) \enspace\text{and}
    \label{eq:derivables:check-M1}
    \\
    &&
    K(\lnot\psi\to\lnot\varphi)\to(B\lnot\psi\to B\lnot\varphi) \enspace\text{.}
    \label{eq:derivables:check-M2}
  \end{eqnarray}
  \eqref{eq:derivables:check-M1} follows by $\mathsf{S5}$ reasoning.
  \eqref{eq:derivables:check-M2} follows by (KBM).  Applying classical
  reasoning to \eqref{eq:derivables:check-M1} and
  \eqref{eq:derivables:check-M2}, we obtain
  \[
  K(\varphi\to\psi)\to(B\lnot\psi\to B\lnot\varphi)\enspace,
  \]
  from which \ref{derivables:check-M} follows by classical reasoning
  (just contrapose the consequent).

  \ref{derivables:pos-belief} follows by (Ap) and
  \ref{derivables:KBC}.  \ref{derivables:neg-belief} follows by (An)
  and \ref{derivables:KBC}.  \ref{derivables:B-KB} follows by by (Ap)
  for the right-to-left and (KS5) for the left-to-right.
  \ref{derivables:nB-KnB} follows by (An) for the right-to-left and
  (KS5) for the left-to-right.  \ref{derivables:B-nec} follows by (MN)
  and \ref{derivables:KBC}.  \ref{derivables:Bimp} follows by (MN) and
  (KBM).  \ref{derivables:check-Bimp} follows by contraposition, (MN),
  (KBM), and contraposition.  \ref{derivables:RE} follows from
  \ref{derivables:Bimp} by classical reasoning.

  For \ref{derivables:GBF}, we have
  \begin{equation}
    K(\varphi\to\bot)\to(B\varphi\to B\bot)
    \label{eq:GBF}
  \end{equation}
  by (KBM).  Therefore, if $\varphi\to\bot$ is provable, it follows by
  (MN) that the antecedent of \eqref{eq:GBF} is as well.  By (MP), the
  consequent $B\varphi\to B\bot$ is provable.  Applying (BF) and
  classical reasoning, it follows by contraposition that $\lnot
  B\varphi$ is provable.
\end{proof}

\begin{theorem}[$\KB$ Neighborhood Soundness and Completeness]
  \label{theorem:KB-neighborhood-soundness}\label{theorem:KB-neighborhood-completeness}
  $\KB$ is sound and complete with respect to the class $\mathcal{C}$
  of epistemic neighborhood models:
  \[
  \forall\varphi\in\Lang_\KB:\quad\KB\vdash\varphi
  \quad\Leftrightarrow\quad
  \mathcal{C}\modelsn\varphi
  \enspace.
  \]
\end{theorem}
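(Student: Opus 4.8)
The plan is to prove soundness by a routine induction on derivations and completeness by a finite canonical-model (filtration-style) construction.

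For soundness I would check that each scheme of $\KB$ is valid on every epistemic neighborhood model and that (MP) and (MN) preserve validity. Scheme (CL) is immediate; (KS5) and (MN) hold because the clause ``$\M,w\modelsn K\varphi$ iff $[w]\subseteq\semn{\varphi}^{\M}$'', with $[w]$ an equivalence class, is exactly the standard $\mathsf{S5}$ reading. For (BF), $B\bot$ at $w$ would need $\emptyset=[w]\cap\semn{\bot}^{\M}\in N(w)$, contradicting (kbf); for (N), $[w]\cap\semn{\top}^{\M}=[w]\in N(w)$ by (n). For (Ap) and (An) I would use (a): if $v\in[w]$ then $[v]=[w]$ and $N(v)=N(w)$, so $B\varphi$ (resp.\ $\lnot B\varphi$) at $w$ propagates throughout $[w]$, which is $KB\varphi$ (resp.\ $K\lnot B\varphi$) at $w$. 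For (KBM), $[w]\subseteq\semn{\varphi\to\psi}^{\M}$ together with $[w]\cap\semn{\varphi}^{\M}\in N(w)$ gives $[w]\cap\semn{\varphi}^{\M}\subseteq[w]\cap\semn{\psi}^{\M}\subseteq[w]$, so (kbm) yields $[w]\cap\semn{\psi}^{\M}\in N(w)$. (MP) trivially preserves validity, and an induction on derivation length concludes $\KB\vdash\varphi\Rightarrow\mathcal{C}\modelsn\varphi$.

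For completeness, given $\KB\nvdash\varphi$ I would build a finite $\M\in\mathcal{C}$ falsifying $\varphi$ at some world. Let $\Sigma$ be the least set of $\Lang_\KB$-formulas containing $\varphi$ and closed under subformulas and single negations; it is finite. Call a $\KB$-consistent $A\subseteq\Sigma$ that decides every member of $\Sigma$ an \emph{atom}; equivalently $A=\Gamma\cap\Sigma$ for some maximal $\KB$-consistent $\Gamma$, and write $\hat A:=\bigwedge A$. Let $W$ be the set of atoms (finite, and nonempty since $\lnot\varphi$ is consistent), declare $A\,R\,B$ to mean that $A$ and $B$ contain the same $K$-formulas of $\Sigma$ and the same $B$-formulas of $\Sigma$ (an equivalence relation, so $(W,R)$ is a finite $\mathsf{S5}$ frame), put $V(A):=A\cap\Prop$, write $|\psi|:=\{B\in W\mid\psi\in B\}$ for $\psi\in\Sigma$, and set
\[
N(A):=\{[A]\}\cup\bigl\{X\subseteq[A]\mid \exists\psi:\ B\psi\in A\text{ and }[A]\cap|\psi|\subseteq X\bigr\}.
\]
I would then verify $\M=(W,R,V,N)\in\mathcal{C}$: (kbc) and (n) are built in, (kbm) holds since $N(A)$ is upward closed in $[A]$, (a) follows from the choice of $R$ together with (Ap)/(An) (which are what make agreement on $B$-formulas compatible with agreement on $K$-formulas), and (kbf) uses (BF): if $\emptyset\in N(A)$ is witnessed by $B\psi\in A$ with $[A]\cap|\psi|=\emptyset$, a disjunction-over-atoms / $\mathsf{S5}$-existence argument gives $\hat A\vdash K(\psi\to\bot)$, whence $\hat A\vdash B\bot$ by (KBM), contradicting $\KB\vdash\lnot B\bot$. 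Next I would prove the Truth Lemma, $\M,A\modelsn\psi\Leftrightarrow\psi\in A$ for all $\psi\in\Sigma$, by induction on $\psi$: the Boolean and $K$ cases are the usual $\mathsf{S5}$ filtration argument (using $K\theta\to\theta$, $K\theta\to KK\theta$, $\lnot K\theta\to K\lnot K\theta$), and for the $B$-case, if $B\psi\in A$ then $[A]\cap|\psi|\in N(A)$ directly and the induction hypothesis $\semn{\psi}^{\M}=|\psi|$ gives $\M,A\modelsn B\psi$, while conversely $\M,A\modelsn B\psi$ yields $[A]\cap|\psi|\in N(A)$, hence some $\chi$ with $B\chi\in A$ and $[A]\cap|\chi|\subseteq[A]\cap|\psi|$, i.e.\ $B\vdash\chi\to\psi$ for every atom $B$ with $A\,R\,B$; the disjunction-over-atoms argument then gives $\hat A\vdash K(\chi\to\psi)$, and (KBM) with $\hat A\vdash B\chi$ gives $\hat A\vdash B\psi$, so $B\psi\in A$. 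Finally, extending $\lnot\varphi$ to a maximal consistent $\Gamma$ and taking $A:=\Gamma\cap\Sigma$ gives $\M,A\not\modelsn\varphi$, so $\mathcal{C}\not\modelsn\varphi$, completing the contrapositive.

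The step I expect to be the main obstacle is the completeness construction, precisely because $\mathcal{C}$ contains only \emph{finite} models, so the naive (infinite) canonical model is unavailable: one must choose $\Sigma$ and $R$ so that the $\mathsf{S5}$ filtration for $K$ goes through while (a) is simultaneously respected, and then push the nontrivial direction of the $B$-case of the Truth Lemma through with only (KBM), (Ap), (An) and the $K$-existence lemma in hand --- in particular, converting the purely semantic inclusion $[A]\cap|\chi|\subseteq[A]\cap|\psi|$ into the \emph{provable} implication $K(\chi\to\psi)$. Verifying (kbf) from (BF) (and (KBM)) is the other place demanding some care.
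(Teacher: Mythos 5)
Your soundness half matches the paper's case-by-case verification exactly. For completeness you adopt the same broad strategy as the paper---a finite canonical model built over a finite closure set---but with genuinely different technical choices. The paper closes the subformulas of the target formula under \emph{all} Boolean operations (its $C_0$) and then under modalization ($C=\MCl(C_0)$), so that every subset $X$ of an equivalence class $[w]$ is definable by a formula $X^d$ with $BX^d\in C$; it can then take $N(w)$ to be exactly the definable sets $[\varphi]\cap[w]$ with $w\cap M\vdash_\KB B\varphi$, at the price of Identity and Definability Lemmas needed to keep $W$ finite and to name arbitrary neighborhoods. You instead keep the closure set $\Sigma$ minimal (subformulas plus single negations) and compensate by defining $N(A)$ as the upward closure within $[A]$ of the sets $[A]\cap|\psi|$ with $B\psi\in A$, which gives (kbm) for free and lets (KBM) absorb the resulting \emph{inclusion} (rather than the paper's equality of definable sets) in the converse $B$-case of the Truth Lemma. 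Both routes work for $\KB$; the paper's heavier closure is what later makes the same construction reusable for $\KBeq$, where (sc) and (scott) force one to name arbitrary candidate neighborhoods, so your leaner construction buys simplicity here but would not extend as directly.

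One step as written does not follow. You place $[A]$ into $N(A)$ unconditionally (to secure (n)), but in the converse $B$-case you infer from $[A]\cap|\psi|\in N(A)$ that there is ``some $\chi$ with $B\chi\in A$ and $[A]\cap|\chi|\subseteq[A]\cap|\psi|$.'' If $[A]\cap|\psi|=[A]$ and the membership is witnessed only by the $\{[A]\}$ clause---for instance when $A$ contains no formula of the form $B\chi$ at all---no such $\chi$ exists. The case is easily closed: $[A]\subseteq|\psi|$ gives $\hat A\vdash_\KB K\psi$ by the same existence-lemma argument you invoke for the $K$-case, and $K\psi\to B\psi$ is derivable from (N) and (KBM) (item~\ref{derivables:KBC} of Theorem~\ref{theorem:KBgt-derivables}), so $\hat A\vdash_\KB B\psi$ and hence $B\psi\in A$. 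With that case added, and the (kbf), (a), and $K$-case arguments spelled out as you indicate, the proof goes through.
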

\begin{proof}
  By induction on the length of derivation.  We first
  verify soundness of the axioms.
  \begin{itemize}
  \item Validity of (CL) immediate. Validity of (KS5) follows because
    the $R$'s are equivalence relations \cite{BlaRijVen:ml}.

  \item Scheme (BF) is valid: $\modelsn\lnot B\bot$.

    $\semn{\bot}=\emptyset\notin N(w)$ by (kbf).  Hence
    $\M,w\not\modelsn B\bot$.

  \item Scheme (N) is valid: $\modelsn B\top$.

    $\semn{\top}\cap[w]=[w]\in N(w)$ by (n).  Hence
    $\M,w\modelsn B\top$.

  \item Scheme (Ap) is valid: $\modelsn B\varphi\to K B\varphi$.

    Suppose $\M,w\modelsn B\varphi$. Then $[w]\cap\semn{\varphi}\in
    N(w)$.  Take $v\in[w]$.  We have $[v]=[w]$ because $R$
    is an equivalence relation, and we have $N(v)=N(w)$ by (a).
    Hence $[v]\cap\semn{\varphi}\in N(v)$; that is, $\M,v\modelsn
    B\varphi$.  Since $v\in [w]$ was chosen arbitrarily, we have
    shown that $[w]\subseteq\semn{B\varphi}$.  Hence $\M,w\modelsn
    K B\varphi$.

  \item Scheme (An) is valid: $\modelsn \lnot B\varphi\to K\lnot
    B\varphi$.

    Replace $B\varphi$ by $\lnot B\varphi$ and $\in$ by $\notin$ in the
    argument for the previous item.

  \item Scheme (KBM) is valid: $\modelsn
    K(\varphi\to\psi)\to(B\varphi\to B\psi)$.

    Suppose $\M,w\modelsn K(\varphi\to\psi)$ and $\M,w\modelsn
    B\varphi$.  This means $[w]\subseteq\semn{\varphi\to\psi}$ and
    $[w]\cap\semn{\varphi}\in N(w)$. But then
    \[
    [w]\cap\semn{\varphi}\subseteq
    [w]\cap\semn{\varphi}\cap\semn{\varphi\to\psi}\subseteq
    [w]\cap\semn{\psi}\enspace.
    \]
    Hence $[w]\cap\semn{\psi}\in N(w)$ by (kbm).  That is,
    $\M,w\modelsn B\psi$.
  \end{itemize}
  That validity is closed under applications of the rules MP and MN
  follows by the standard arguments \cite{BlaRijVen:ml}.  This
  completes the proof of soundness.

  Before we prove completeness, we first prove an important result
  that we will use tacitly throughout the completeness proof
  proper. Let $M$ be the set of all $\Lang_\KB$-formulas having one of
  the forms $K\varphi$, $\lnot K\varphi$, $B\varphi$, or
  $\lnot B\varphi$.  We prove the following \emph{Modal-Assumption
    Deduction Theorem\/}: for each finite $F\subseteq M$, we have
  \begin{center}
    $F\vdash_\KB\varphi$ \quad{}iff\quad $\vdash_\KB(\bigwedge
    F)\to\varphi$\enspace.
  \end{center}
  The right-to-left direction straightforward. The proof of the
  left-to-right direction is by induction on the length of derivation.
  All cases are standard except for the induction step in which (MN)
  is applied, so we focus on this case.  Suppose
  $F\vdash_\KB K\varphi$ is derived by (MP) from $\varphi$ such that
  $F\vdash_\KB\varphi$. By the induction hypothesis, we have
  $\vdash_\KB (\bigwedge_{\chi\in F}\chi)\to\varphi$.  By (MN) and
  $\mathsf{K}$ reasoning, we have
  $\vdash_\KB (\bigwedge_{\chi\in F}K\chi)\to K\varphi$.  However, it
  also follows by $\mathsf{S5}$ reasoning (using schemes $\mathsf{4}$
  and $\mathsf{5}$), scheme (Ap), scheme (An), and the fact that
  $F\subseteq M$ that we have $\vdash_\KB\chi\to K\chi$ for each
  $\chi\in F$.  Hence
  $\vdash_\KB(\bigwedge_{\chi\in F}\chi)\to(\bigwedge_{\chi\in
    F}K\chi)$,
  where $(\bigwedge_{\chi\in F}\chi)=\bigwedge F$.  Conclusion:
  $\vdash_\KB(\bigwedge F)\to K\varphi$.

  To prove completeness, it suffices to show that
  $\KB\nvdash\lnot\theta$ implies $\theta$ is satisfiable at a pointed
  epistemic neighborhood model. For two sets $F$ and $F'$ of
  $\Lang_\KB$-formulas, to say that $F$ is \emph{maxcons in $F'$}
  means that $F\subseteq F'$, the set $F$ is $\KB$-consistent (i.e.,
  for no finite $G\subseteq F$ do we have
  $\vdash_\KB (\bigwedge G)\to\bot$), and adding any formula
  $\psi\in F'$ not already in $F$ will produce a $\KB$-inconsistent
  (i.e., not $\KB$-consistent) set. 

  For a set $F$ of $\Lang_\KB$-formulas, we define the
  \emph{single-negation closure} $\pm F$ of $F$ and the \emph{modal
    closure} $\MCl(F)$ of $F$ to be the sets
  \begin{eqnarray*}
    \pm F &:=& 
               F\cup \{\lnot\varphi\mid\varphi\in F\} \cup
               \{\bot,\top\}
               \enspace,
    \\
    \MCl(F) &:=& 
                 F\cup
                 \{X\varphi\mid \varphi\in F\text{
                 and } X\in\{K,\lnot K,B,\lnot B\}\}\enspace.
  \end{eqnarray*}
  In particular, $\MCl(F)$ is obtained from $F$ by adding for each
  formula $\varphi\in F$ the additional formulas $K\varphi$,
  $\lnot K\varphi$, $B\varphi$, and $\lnot B\varphi$.  We say the each
  of the latter four formulas is a \emph{modalization} of $\varphi$.

  Let $S$ be the set of subformulas of $\theta$, including $\theta$
  itself.  Let $C_0$ be the Boolean closure of $S$; that is, $C_0$ is
  the smallest extension of $S$ that contains the propositional
  constants $\top$ (truth) and $\bot$ (falsehood), or their
  abbreviations in $\Lang_\KB$ if they are not primitive, and is
  closed under the Boolean connectives (e.g., negation, conjunction,
  implication, and disjunction) definable in the language.  Finally,
  define $C:=\MCl(C_0)$.

  We define the structure $\M=(W,R,V,N)$ as follows:
  \begin{eqnarray*}
    W &:=& 
    \{w\subseteq C \mid w \text{ is maxcons in $C$}\}, 
    \\{}
    [\varphi] &:=& \{w\in W\mid \varphi\in w\}
    \text{ for } \varphi\in C,
    \\
    R &:=& \{(w,v)\in W^2\mid w\cap M=v\cap M\},
    \\{}
    V(w) &:=& \Prop\cap w,
    \\
    N(w) &:=&
    \{ X\subseteq[w] \mid 
    \exists\varphi\in C:(X=[\varphi]\cap[w]
    \text{ and }
    w\cap M\vdash_\KB B\varphi)
    \}.
  \end{eqnarray*}

  We make use (often tacitly) of the following \emph{In-class Identity
    Lemma\/}: for each $u,v\in W$, if $[u]=[v]$ and
  $u\cap\pm S=v\cap\pm S$, then $u=v$.  So suppose $[u]=[v]$ and
  $u\cap S'=v\cap S'$.  Given $\varphi\in u$, we wish to show that
  $\varphi\in u$.  There are two cases to consider.
  \begin{itemize}
  \item Case: $\varphi\in u\cap C_0$.

    $\varphi$ is a Boolean combination of members of $S$ and is
    therefore $\KB$-provably equivalent to a formula $\varphi'$ that
    is a disjunction of conjunctions of maxcons subsets of $\pm S$.
    It follows by the maximal $\KB$-consistency of $u$ that
    $\varphi'\in u$ and hence one of the disjuncts $\varphi''$ of
    $\varphi'$ is a member of $u$.  Applying the maimal
    $\KB$-consistency of $u$, it follows from $\varphi''\in u$ that
    every conjunct of $\varphi''$ is a member of $u$.  But each
    conjunct of $\varphi''$ is a member of $\pm S\subseteq S'$ and
    hence each conjunct of $\varphi''$ is a member of $v$ by our
    assumption $u\cap S'=v\cap S'$.  Applying the maximal
    $\KB$-consistency of $v$, it follows that $\varphi''\in v$, hence
    $\varphi'\in v$, and hence $\varphi\in v$.

  \item Case: $\varphi\in u\cap(C-C_0)$.

    $\varphi$ is a modalization $X\psi$ of a Boolean combination of
    members of $S$. But $X\psi\in M$ and our assumption $[u]=[v]$
    implies $u\cap M=v\cap M$.  So $X\psi\in v$.
  \end{itemize}
  The converse is proved similarly.

  The In-class Identity Lemma gives rise to the following
  \emph{Identity Lemma\/}: for each $u,v\in W$, if
  $u\cap\MCl(\pm S)=v\cap\MCl(\pm S)$, then $u=v$. Indeed, suppose
  $u\cap\MCl(\pm S)=v\cap\MCl(\pm S)$.  If $[u]=[v]$, then it follows
  from $\pm S\subseteq\MCl(\pm S)$ that we have
  $u\cap\pm S=v\cap\pm S$, and therefore $u=v$ by the In-class
  Identity Lemma. So it suffices to prove that $[u]=[v]$; that is, we
  prove that $u\cap M=v\cap M$.  Proceeding, take
  $X\varphi\in u\cap M$.  If $X\varphi\in C_0$, then we have
  $X\varphi\in v\cap M$ by the argument in the first case of the
  In-class Identity Lemma.  So suppose $X\varphi\in C-C_0$ so that
  $X\varphi$ is a modalization of $\varphi\in C_0$. We have
  $\vdash_\KB\varphi\leftrightarrow\varphi'$, where $\varphi'$ is a
  disjunction of conjunctions of maxcons subsets of $\pm S$.  Applying
  (MN) and $\mathsf{K}$ reasoning, we obtain
  \begin{equation}
    \vdash_\KB K\varphi\leftrightarrow K\varphi' \quad\text{and}\quad
    \vdash_\KB \lnot K\varphi\leftrightarrow \lnot K\varphi'\enspace.
    \label{eq1:KB-completeness}
  \end{equation}
  Applying (KBM) and classical reasoning to
  \eqref{eq1:KB-completeness}, it follows that
  \begin{equation}
    \vdash_\KB B\varphi\leftrightarrow B\varphi' \quad\text{and}\quad
    \vdash_\KB \lnot B\varphi\leftrightarrow \lnot B\varphi'\enspace.
    \label{eq2:KB-completeness}
  \end{equation}
  Since $X\varphi\in u$, we have by \eqref{eq1:KB-completeness},
  \eqref{eq2:KB-completeness}, and the maximal $\KB$-consistency of
  $u$ that $X\varphi'\in u\cap\MCl(\pm S)$.  Since
  $u\cap\MCl(\pm S)=v\cap\MCl(\pm S)$, it follows that
  $X\varphi'\in v$, and hence $X\varphi\in v$ by the maximal
  $\KB$-consistency of $v$.  The converse is proved similarly.

  We may make use (often tacitly) of the following \emph{Definability
    Lemma\/}: for each $w\in W$ and each $X\subseteq[w]$, defining
  \[
  \textstyle X^d :=\bigvee_{v\in X}\bigwedge(v\cap\pm S)\enspace,
  \]
  it follows that $X^d\in C_0\subseteq C$ and $[X^d]\cap[w]=X$. For
  the proof, first note that $X^d\in C_0$ because $C_0$ is closed
  under Boolean operations and $\pm S\subseteq C_0\subseteq C$.  So
  assume $u\in[X^d]\cap[w]$, which implies $X^d\in u$ and $[u]=[w]$.
  Since $u$ is maxcons in $C\supseteq\pm S$, we have by the above
  definition of $X^d$ as a disjunction over $v\in X$ that there exists
  $v\in X$ such that $\bigwedge(v\cap\pm S)\in u$ and hence
  $v\cap\pm S\subseteq u$.  Since $u$ is maxcons in $C$ and hence
  maxcons in $\pm S$ and since $\pm S$ is closed under the operation
  ${\sim}:\Lang_\KB\to\Lang_\KB$ defined by
  \[
  {\sim}\varphi :=
  \begin{cases}
    \phantom{\lnot}\psi & \text{if }\varphi=\lnot\psi\\
    \lnot\varphi & \text{otherwise,} \\
  \end{cases}
  \]
  it follows that $u\cap\pm S=v\cap\pm S$.  So since $[u]=[v]$ and
  $u\cap\pm S=v\cap\pm S$, we have $u=v\in X$ by the Identity Lemma.
  Conversely, suppose $u\in X\subseteq[w]$.  By the definition of
  $X^d$, we have $\KB\vdash\bigwedge(u\cap\pm S)\to X^d$ and therefore
  $X^d\in u$ because $u$ is maxcons in $C$ and
  $\bigwedge(u\cap\pm S)\in u$. Hence $u\in[X^d]\cap[w]$ because
  $u\in X\subseteq[w]$.

  Our definitions above specify the structure $\M=(W,R,V,N)$.  $W$ is
  nonempty because $\theta$ is consistent and so may be extended to a
  maxcons $w_\theta\in W$.  Since $\MCl(\pm S)$ is finite, it follows
  by the Identity Lemma that $W$ is finite.  Further, $R$ is an
  equivalence relation. So to conclude that $M$ is an epistemic
  neighborhood model, all that remains is for us to show that $N$
  satisfies the neighborhood function properties.
  \begin{description}
  \item[(kbc)] $X\in N(w)$ implies $X\subseteq[w]$.

    By definition.

  \item[(bf)] $\emptyset\notin N(w)$.

    Choose $\varphi\in C$ satisfying $[\varphi]\cap[w]=\emptyset$.  It
    follows that $w\cap M\vdash_\KB\varphi\to\bot$, since otherwise we
    could extend $(w\cap M)\cup\{\varphi\}$ to some
    $v\in[\varphi]\cap[w]$, which would contradict
    $[\varphi]\cap[w]=\emptyset$.  So by (MN), we have
    $w\cap M\vdash_\KB K(\varphi\to\bot)$ and hence
    $w\cap M\vdash_\KB B\varphi\to B\bot$ by (KBM).  Since
    $w\cap M\vdash\lnot B\bot$ by (BF), it follows that
    $w\cap M\vdash_\KB \lnot B\varphi$.  So we have shown that
    $w\cap M\vdash_\KB \lnot B\varphi$ for each $\varphi\in C$
    satisfying $[\varphi]\cap[w]=\emptyset$.  $\KB$ is consistent
    (just apply soundness to any epistemic neighborhood model), and
    therefore we have $w\cap M\nvdash_\KB B\varphi$ for each
    $\varphi\in C$ satisfying
    $[\varphi]\cap[w]=\emptyset$. Conclusion: $\emptyset\notin N(w)$.

  \item[(n)] $[w]\in N(w)$.

    $w\cap M\vdash_\KB B\top$ by (N).  Hence
    $[\top]\cap[w]=[w]\in N(w)$.

  \item[(a)] $v\in[w]$ implies $N(v)=N(w)$.

    $v\in[w]$ implies $[v]=[w]$ and $v\cap M=w\cap M$.  Therefore for
    each $X\subseteq[v]=[w]$, we have $\varphi\in C$ satisfying
    $X=[\varphi]\cap[v]$ and $v\cap M\vdash_\KB B\varphi$ iff
    $X=[\varphi]\cap[w]$ and $w\cap M\vdash_\KB B\varphi$.  Hence
    $N(v)=N(w)$.

  \item[(kbm)] If $X\subseteq Y\subseteq[w]$ and $X\in N(w)$, then
    $Y\in N(w)$.

    Suppose $X\subseteq Y\subseteq[w]$ and $X\in N(w)$. Then there is
    $\varphi\in C$ satisfying $X=[\varphi]\cap[w]$ and
    $w\cap M\vdash_\KB B\varphi$.  Since $X\subseteq Y$, it follows
    that $[\varphi]\cap[w]\subseteq[Y^d]\cap[w]$. From this we obtain
    that $w\cap M\vdash_\KB\varphi\to Y^d$, since otherwise we could
    extend $(w\cap M)\cup\{\varphi,\lnot Y^d\}$ to some
    $v\in[\varphi]\cap[\lnot Y^d]\cap[w]$, which would contradict
    $[\varphi]\cap[w]\subseteq[Y^d]\cap[w]$.  Hence
    $w\cap M\vdash_\KB K(\varphi\to Y^d)$ by (MN) and so
    $w\cap M\vdash_\KB B\varphi\to B Y^d$ by (KBM).  Since
    $w\cap M\vdash_\KB B\varphi$, we have $w\cap M\vdash_\KB BY^d$.
    Hence $Y\in N(w)$.
  \end{description}
  So $M$ is indeed and epistemic neighborhood model. To complete our
  overall argument, it suffices to prove the \emph{Truth Lemma\/}: for
  each $\varphi\in C$ and $w\in W$, we have $\varphi\in w$ iff
  $\M,w\modelsn\varphi$. The argument is by induction on the
  construction of $\varphi\in C$.  Boolean cases are straightforward,
  so we restrict our attention to the modal cases: formulas $B\varphi$
  and $K\varphi$ in $C$.  Note that by the definition of $C$ as the
  Boolean closure of the set $S$ of subformulas of $\theta$, either of
  $B\varphi\in C$ or $K\varphi\in C$ implies $\varphi\in C$.

  Suppose $B\varphi\in w$. Then $w\cap M\vdash_\KB B\varphi$ and hence
  $[\varphi]\cap[w]\in N(w)$ by the definition of $N$ and the fact
  that $\varphi\in C$. Applying the induction hypothesis,
  $[\varphi]=\semn{\varphi}^\M$, so
  $\semn{\varphi}^\M\cap[w]\in N(w)$.  But this is what it means to
  have $\M,w\modelsn B\varphi$.

  Conversely, assume $\M,w\modelsn B\varphi$ for $B\varphi\in C$. This
  means $\semn{\varphi}^\M\cap[w]\in N(w)$.  By the induction
  hypothesis and the fact that $\varphi\in C$, we have
  $[\varphi]=\semn{\varphi}^\M$, so $[\varphi]\cap[w]\in N(w)$.  By
  the definition of $N$, there exists $\psi\in C$ such that
  $w\cap M\vdash_\KB B\psi$ and $[\varphi]\cap[w]=[\psi]\cap[w]$.  But
  then $w\cap M\vdash_\KB\psi\to\varphi$, for otherwise we could
  extend $(w\cap M)\cup\{\psi,\lnot\varphi\}$ to some $v\in[w]$ such
  that $v\in[\psi]\cap[w]$ and $v\notin[\varphi]\cap[w]$,
  contradicting $[\varphi]\cap[w]=[\psi]\cap[w]$.  Applying (MN), we
  have $w\cap M\vdash_\KB K(\psi\to\varphi)$ and hence
  $w\cap M\vdash_\KB B\psi\to B\varphi$ by (KBM).  Since
  $w\cap M\vdash_\KB B\psi$, it follows that
  $w\cap M\vdash_\KB B\varphi$.  And since $w$ is maxcons in $C$, we
  conclude that $B\varphi\in w$.

  Now suppose $K\varphi\in w$.  Then for each $v\in [w]$, we have
  that $K\varphi\in v$ and therefore $\varphi\in v$ by $\mathsf{S5}$
  reasoning (using scheme $\mathsf{T}$) and the fact that $v$ is
  maxcons in $C$.  But then we have shown that
  $[w]\subseteq[\varphi]$.  Since $\varphi\in C$, it follows by the
  induction hypothesis that $[w]\subseteq\semn{\varphi}^\M$, which
  is what it means to have $\M,w\modelsn K\varphi$.

  Conversely, assume $\M,w\modelsn K\varphi$ for $K\varphi\in C$. It
  follows that $[w]\subseteq\semn{\varphi}^\M$.  By the induction
  hypothesis, $[w]\subseteq[\varphi]$.  But then
  $w\cap M\vdash_\KB\varphi$, for otherwise we could extend
  $(w\cap M)\cup\{\lnot\varphi\}$ to some $v\in[w]$ satisfying
  $v\notin[\varphi]$, contradicting $[w]\subseteq[\varphi]$.  By (MN),
  we have $w\cap M\vdash_\KB K\varphi$.  Since $K\varphi\in C$ and $w$
  is maxcons in $C$, it follows that $K\varphi\in w$.
\end{proof}

Since $\KB$ is sound and complete with respect to the class of
epistemic neighborhood models, we would expect that in light of
Theorem~\ref{theorem:walleyfine} that $\KB$ is at most sound for the
probability interpretation.

\begin{theorem}[$\KB$ Probability Soundness]
  $\KB$ is sound for any threshold $c\in(0,1)\cap\Rat$ with respect to
  the class of epistemic probability models:
  \[
  \textstyle \forall c\in(0,1)
  \cap\Rat,\forall\varphi\in\Lang_\KB:\quad
  \KB\vdash\varphi
   \quad\Rightarrow\quad
  {}\modelsp\varphi^c \enspace.
  \]
\end{theorem}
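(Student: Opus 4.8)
The plan is to obtain this as an immediate consequence of the neighborhood soundness of $\KB$ (Theorem~\ref{theorem:KB-neighborhood-soundness}) together with the fact that every epistemic probability model induces an agreeing epistemic neighborhood model. Fix $c\in(0,1)\cap\Rat$ and assume $\KB\vdash\varphi$; the goal is to show $\modelsp\varphi^c$. Let $\M=(W,R,V,P)$ be an arbitrary epistemic probability model and $w\in W$. By Lemma~\ref{lemma:correctness}, the structure $\M^c=(W,R,V,N^c)$ is an epistemic neighborhood model, and it agrees with $\M$ for threshold $c$ by construction of $N^c$. From $\KB\vdash\varphi$ and Theorem~\ref{theorem:KB-neighborhood-soundness} we get $\M^c\modelsn\varphi$, hence $\M^c,w\modelsn\varphi$. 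Applying the Agreement theorem (Theorem~\ref{theorem:agreement}) to the agreeing pair $\M^c,\M$ then yields $\M,w\modelsp\varphi^c$. Since $w$ was arbitrary, $\M\modelsp\varphi^c$, and since $\M$ was arbitrary, $\modelsp\varphi^c$, as desired.

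An alternative, self-contained route would be to induct on the length of a $\KB$-derivation of $\varphi$, checking that the $c$-translation of each axiom scheme in Table~\ref{table:KB} is probabilistically valid and that the rules (MP) and (MN) preserve validity of translations. The point is that all of these facts are already available: the translations of (CL) and (KS5) are covered by Theorem~\ref{theorem:knowledge}, and the translations of (BF), (N), (Ap), (An), and (KBM) are precisely items \ref{item:B-B}, \ref{item:B-N}, \ref{item:B-Ap}, \ref{item:B-An}, and \ref{item:B-M} of Theorem~\ref{theorem:belief}; closure of validity under (MP) and (MN) is again part of Theorem~\ref{theorem:knowledge}. This uses only that $(\cdot)^c$ is a homomorphism for the Boolean connectives and sends $K\psi,B\psi$ to $P(\psi^c)=1$ and $P(\psi^c)>c$, both immediate from Definition~\ref{definition:translation}.

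I do not expect a genuine obstacle: the content of this theorem is the packaging of results already proved. The one thing to check is the range of the threshold---that Lemma~\ref{lemma:correctness}, the agreement of $\M^c$ with $\M$, and (for the second route) Theorem~\ref{theorem:belief} are all stated for every $c\in(0,1)\cap\Rat$ rather than merely for $c=\tfrac12$; since they are, there is no gap. I would present the short indirect proof as the main argument and note the inductive alternative in a sentence.
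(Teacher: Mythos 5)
Your proposal is correct, but your \emph{main} argument is not the one the paper uses: the paper's entire proof is the citation ``Theorems~\ref{theorem:knowledge} and \ref{theorem:belief}'', i.e.\ exactly the route you relegate to a one-sentence alternative. The paper checks directly that the $c$-translation of every axiom of Table~\ref{table:KB} is probabilistically valid and that (MP) and (MN) preserve validity, which is short, self-contained, and independent of the neighborhood semantics. Your primary route --- factoring through Theorem~\ref{theorem:KB-neighborhood-soundness}, the construction $\M^c$ of Lemma~\ref{lemma:correctness}, and the Agreement theorem (Theorem~\ref{theorem:agreement}) --- is also sound and non-circular, since all three ingredients appear earlier in the paper and none of them presupposes probability soundness; what it buys is a conceptual explanation (probability soundness is a formal consequence of neighborhood soundness plus the semantic correspondence) at the cost of invoking heavier machinery than necessary. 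One pedantic point on your threshold check: the definition of $\M^c$ is literally stated only for $c\in[\frac 12,1)\cap\Rat$, even though Lemma~\ref{lemma:correctness} and the subsequent agreement theorem quantify over $c\in(0,1)\cap\Rat$; this is a cosmetic mismatch in the paper rather than a gap in your argument, but if you lead with the indirect proof you should note that the construction of $N^c$ makes sense verbatim for all $c\in(0,1)\cap\Rat$. Given that the direct argument is what the paper intends and is strictly shorter, I would invert your presentation: give the axiom-by-axiom citation as the proof and mention the agreement-based derivation as the remark.
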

\begin{proof}
  Theorems~\ref{theorem:knowledge} and \ref{theorem:belief}.
\end{proof}

\begin{theorem}[$\KB$ Probability Incompleteness]
  \label{theorem:KB-probability-incompleteness}
  $\KB$ is incomplete for all thresholds $c\in(0,1)\cap\Rat$ with
  respect to the class of epistemic probability models:
  \[
  \textstyle 
  \exists\varphi\in\Lang_\KB,
  \forall c\in(0,1)
  \cap\Rat:\quad
  {}\modelsp\varphi^c
  \quad\text{and}\quad
  \KB\nvdash\varphi\enspace.
  \]
\end{theorem}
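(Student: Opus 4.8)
The plan is to produce a single $\Lang_\KB$-formula $\varphi$ that is refuted on the Walley--Fine epistemic neighborhood model $\M$ of Theorem~\ref{theorem:walleyfine} while $\varphi^c$ is valid on all epistemic probability models for every threshold $c\in(0,1)\cap\Rat$; by Theorem~\ref{theorem:KB-neighborhood-completeness} the former yields $\KB\nvdash\varphi$, which finishes the job. Recall that in that model $W=\Prop=\{a,b,c,d,e,f,g\}$, $R=W\times W$ (so $[w]=W$ throughout), $V(w)=\{w\}$, and $N(w)=\mathcal{N}$ for every $w$, where $\mathcal{X}$ and $\mathcal{Y}$ are as defined there, each world of $W$ belonging to exactly three members of $\mathcal{X}$ and exactly four members of $\mathcal{Y}$, and $\mathcal{Y}\cap\mathcal{N}=\emptyset$. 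Numbering $\mathcal{X}=\{X_1,\dots,X_7\}$ and $\mathcal{Y}=\{Y_1,\dots,Y_7\}$ and writing $\delta_Z:=\bigvee_{p\in Z}p$ for nonempty $Z\subseteq W$ (so that $\semn{\delta_Z}^\M=Z$, since each world satisfies precisely its own letter), I take
\[
\varphi\ :=\ \Bigl[(\delta_{X_i}\mathbb{I}\delta_{Y_i})_{i=1}^{7}\ \land\ \bigwedge_{i=1}^{7}B\delta_{X_i}\Bigr]\ \longrightarrow\ \bigvee_{j=1}^{7}B\delta_{Y_j}\enspace.
\]

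First I would verify $\M\not\modelsn\varphi$. Pick any $w$. The conjunct $(\delta_{X_i}\mathbb{I}\delta_{Y_i})_{i=1}^{7}$ holds at $w$ because at every $v\in[w]=W$ the number of true $\delta_{X_i}$'s is $|\{i: v\in X_i\}|=3\leq 4=|\{i: v\in Y_i\}|$, the number of true $\delta_{Y_i}$'s. Each $B\delta_{X_i}$ holds at $w$ because $\semn{\delta_{X_i}}^\M\cap[w]=X_i\in\mathcal{X}\subseteq\mathcal{N}=N(w)$, while no $B\delta_{Y_j}$ holds at $w$ because $\semn{\delta_{Y_j}}^\M\cap[w]=Y_j\notin\mathcal{N}=N(w)$, using $\mathcal{Y}\cap\mathcal{N}=\emptyset$. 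So the antecedent of $\varphi$ is true and the consequent false at $w$; hence $\mathcal{C}\not\modelsn\varphi$, and Theorem~\ref{theorem:KB-neighborhood-completeness} gives $\KB\nvdash\varphi$.

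Next I would show $\modelsp\varphi^c$ for an arbitrary $c\in(0,1)\cap\Rat$. Let $\M'$ be an epistemic probability model, $w$ a world, and suppose $\M',w\modelsp(\delta_{X_i}\mathbb{I}\delta_{Y_i})_{i=1}^{7}$ and $\M',w\modelsp B^c\delta_{X_i}$ for every $i$. The first assumption unwinds (using full support, as in the knowledge case of the proof of Theorem~\ref{theorem:agreement}) to: at every $v\in[w]$, the number of true $\delta_{X_i}$'s is at most the number of true $\delta_{Y_i}$'s. Assigning each $v\in[w]$ the weight $P_w(v)$ and summing --- exactly the ``weight'' computation used in the proof of Theorem~\ref{theorem:belief}\eqref{item:B-Len} --- gives $\sum_{i=1}^{7}P_w(\semp{\delta_{X_i}})\leq\sum_{i=1}^{7}P_w(\semp{\delta_{Y_i}})$. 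The second assumption gives $P_w(\semp{\delta_{X_i}})>c$ for each $i$, so the left-hand side exceeds $7c$, whence $\sum_{i=1}^{7}P_w(\semp{\delta_{Y_i}})>7c$; since this is a sum of seven non-negative reals, at least one summand exceeds $c$, i.e.\ $\M',w\modelsp B^c\delta_{Y_j}$ for some $j$. Thus $\M',w\modelsp\varphi^c$, and since $\M'$, $w$ and $c$ were arbitrary, $\modelsp\varphi^c$ for every threshold.

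Most of this is routine once the earlier machinery is in hand; the one genuinely load-bearing decision --- and the reason the argument covers \emph{all} thresholds and not merely $c=\tfrac12$ --- is putting the unweakened $\bigwedge_{i=1}^{7}B\delta_{X_i}$ in the antecedent rather than the $\check B$-weakened antecedent of the scheme (Scott). With the unweakened form, the validity argument needs only $\sum_i P_w(\semp{\delta_{X_i}})>7c$ together with the counting inequality $\sum_i P_w(\semp{\delta_{X_i}})\leq\sum_i P_w(\semp{\delta_{Y_i}})$, and this chain does not depend on the value of $c$ at all; the $\check B$-weakened version only yields $\sum_i P_w(\semp{\delta_{X_i}})>c+6(1-c)$, which forces some $P_w(\semp{\delta_{Y_j}})>c$ only when $c\leq\tfrac12$. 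One should also confirm that strengthening the scheme's antecedent to $\bigwedge_iB\delta_{X_i}$ does not break the refutation on $\M$, but this is immediate because every $X_i$ is a neighborhood in the Walley--Fine model.
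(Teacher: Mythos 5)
Your proof is correct, but it takes a genuinely different route from the paper's. The paper sets $\varphi:=\lnot\sigma$, where $\sigma$ is a complete modal description of the pointed Walley--Fine model $(\M,a)$ (which letter holds, which subsets of $W$ are believed, which are not); it gets $\KB\nvdash\lnot\sigma$ from neighborhood soundness because $\sigma$ is satisfiable in $\M$, and it gets $\modelsp\lnot\sigma^c$ \emph{indirectly}, by arguing that any probability model satisfying $\sigma^c$ would induce a probability measure agreeing with $\M$, contradicting Theorem~\ref{theorem:walleyfine}. You instead take $\varphi$ to be an explicit instance of the scheme the paper later calls (WS) in its conclusion, built from the disjunctions $\delta_{X_i},\delta_{Y_j}$, refute it directly on the same model (using the $3$-versus-$4$ occurrence count to verify the $\mathbb{I}$-conjunct and $\mathcal{Y}\cap\mathcal{N}=\emptyset$ to kill the consequent), and prove $\modelsp\varphi^c$ directly by the Segerberg weight-counting argument from the proof of Theorem~\ref{theorem:belief}\eqref{item:B-Len}. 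Your observation that keeping the unweakened $\bigwedge_i B\delta_{X_i}$ in the antecedent is what makes the validity argument threshold-independent is exactly right, and it is the same reason the paper's conclusion can assert that (WS) is sound without being threshold-specific. What each approach buys: the paper's argument is generic---it would work verbatim for any finite neighborhood model with no agreeing measure---but leans on an ``easily formalizable'' description formula and on transferring the measure-nonexistence argument to an arbitrary satisfying probability model, both of which require some unstated bookkeeping; your argument is fully self-contained, avoids the description formula entirely, and has the bonus of exhibiting a concrete, reusable scheme witnessing the incompleteness (indeed, a candidate axiom for the open problem of axiomatizing thresholds $c\neq\frac12$). One pedantic note: the step from $\mathcal{C}\not\modelsn\varphi$ to $\KB\nvdash\varphi$ uses the soundness half of Theorem~\ref{theorem:KB-neighborhood-completeness}, not the completeness half, but since the theorem is stated as a biconditional the citation is fine.
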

\begin{proof}
  Take $\M$ as in the proof of Theorem~\ref{theorem:walleyfine}.  Let
  $\sigma$ be the modal formula describing $(\M,a)$: informally (and
  easily formalizable),
  \[
  \textstyle \sigma \quad:=\quad a\bar{b}\cdots\bar{g}\land
  KW\land(\bigwedge_{Z\in N(a)} BZ)\land
  (\bigwedge_{Z'\in\wp(W)-N(a)}\lnot BZ')\enspace.
  \]
  We have $\M,w\modelsn\sigma$ so that $\not\modelsn\lnot\sigma$ and
  therefore $\KB\nvdash\lnot\sigma$ by
  Theorem~\ref{theorem:KB-neighborhood-completeness}.  By the proof of
  Theorem~\ref{theorem:walleyfine}, there is no probability measure
  agreeing with $\M$ for any threshold.  Hence
  $\modelsp\lnot\sigma^c$. So $\varphi:=\lnot\sigma$ gives us the desired
  formula.
\end{proof}

\subsection{Results for the Mid-Threshold Calculus $\KBeq$}
\label{section:kbeq}

We first show that the $\KB$ schemes (BF) and (KBM) are redundant in
the theory $\KBeq$.

\begin{theorem}
  \label{theorem:KBminus}
  $\KBeqm$ and $\KBeq$ derive the same theorems:
  \[
  \forall\varphi\in\Lang_\KB:\quad
  \KBeqm\vdash\varphi \quad\Leftrightarrow\quad \KBeq\vdash\varphi\enspace.
  \]
\end{theorem}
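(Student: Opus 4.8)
The plan is to establish the nontrivial inclusion, that $\KBeq\vdash\varphi$ implies $\KBeqm\vdash\varphi$; the reverse inclusion is immediate, since every axiom scheme and rule of $\KBeqm$ is also one of $\KBeq$. Now, $\KBeqm$ is obtained from $\KBeq$ by deleting exactly the two schemes (BF) and (KBM), while retaining (CL), (KS5), (N), (Ap), (An), (D), (SC), (Scott) and the rules (MP), (MN). Hence it suffices to show that $\KBeqm$ \emph{derives} both (BF) and (KBM); then every axiom of $\KBeq$ is a theorem of $\KBeqm$, and closure under (MP), (MN) finishes the job.

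First I would derive (KBM) from the scheme (Scott) at $m=1$. In that case $\bigwedge_{i=2}^{m}\check B\varphi_i$ abbreviates $\top$ and $\bigvee_{i=1}^{m}B\psi_i$ is $B\psi_1$, so the instance reads $(\varphi\mathbb{I}\psi)\land B\varphi\to B\psi$. It remains to check that $\KBeqm\vdash(\varphi\mathbb{I}\psi)\leftrightarrow K(\varphi\to\psi)$. Unfolding Definition~\ref{definition:segerberg-notation} at $m=1$, we have $(\varphi\mathbb{I}\psi)=K(F_0\lor F_1)$, where the empty string must occur \emph{exactly} $i$ times among the single $d$ and \emph{at least} $i$ times among the single $e$: for $i=0$ this forces $d=\lnot$ and allows $e\in\{\text{empty},\lnot\}$, giving $F_0=(\lnot\varphi\land\psi)\lor(\lnot\varphi\land\lnot\psi)$; for $i=1$ it forces both to be empty, giving $F_1=\varphi\land\psi$. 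Since $F_0\lor F_1\leftrightarrow(\varphi\to\psi)$ is an instance of (CL), applying (MN) and then the $\mathsf{K}$-reasoning available from (KS5) yields $\vdash_{\KBeqm}K(F_0\lor F_1)\leftrightarrow K(\varphi\to\psi)$, i.e. $\vdash_{\KBeqm}(\varphi\mathbb{I}\psi)\leftrightarrow K(\varphi\to\psi)$. Combining this with the $m=1$ instance of (Scott) gives $\KBeqm\vdash K(\varphi\to\psi)\land B\varphi\to B\psi$, which is (KBM).

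Once (KBM) is in hand, every step in the proof of Theorem~\ref{theorem:KBgt-derivables} that does not appeal to (BF) goes through verbatim in $\KBeqm$; in particular $\KBeqm$ admits Rule RE (Theorem~\ref{theorem:KBgt-derivables}\eqref{derivables:RE}), whose derivation uses only (MN) and (KBM) via Rule RM. I would then derive (BF): instantiating (D) with $\varphi:=\bot$ gives $B\bot\to\lnot B\lnot\bot$; since $\lnot\bot\leftrightarrow\top$ by (CL), Rule RE gives $B\lnot\bot\leftrightarrow B\top$ and hence $\lnot B\lnot\bot\leftrightarrow\lnot B\top$, so $B\bot\to\lnot B\top$; together with (N), i.e. $B\top$, propositional reasoning yields $\lnot B\bot$, which is (BF). This establishes both deleted schemes inside $\KBeqm$ and completes the proof.

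The only step requiring genuine care is the unfolding of the $m=1$ Segerberg notation and the verification that $F_0\lor F_1$ is propositionally equivalent to $\varphi\to\psi$; the remainder is routine bookkeeping with the $\mathsf{S5}$ operator $K$. The one subtlety to watch is order of operations: Rule RE may only be cited \emph{after} (KBM) has been derived, since RE depends on (KBM) through RM, so (KBM) must genuinely be obtained first (from (Scott), which does not need RE), and only then may RE be used to derive (BF).
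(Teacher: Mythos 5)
Your proposal is correct and follows essentially the same route as the paper: both directions reduce to deriving (KBM) from the $m=1$ instance of (Scott) together with the provable equivalence $(\varphi\mathbb{I}\psi)\leftrightarrow K(\varphi\to\psi)$, and then deriving (BF) from (N) and (D) using the newly available (KBM)-style reasoning. The only cosmetic difference is in the (BF) step, where you instantiate (D) at $\bot$ and invoke Rule RE while the paper instantiates (D) at $\top$ and uses (KBM) on $K(\bot\to\lnot\top)$; your explicit care about deriving (KBM) before citing RE is exactly the right dependency order.
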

\begin{proof}
  It suffices to prove that the schemes (BF) and (KBM) are derivable
  in $\KBeqm$. For (KBM), we have by
  Definition~\ref{definition:segerberg-notation} that the formula
  $\varphi\mathbb{I}\psi$ is just
  \begin{equation}
    K\bigl(\;
    \underbrace{(\lnot\varphi\land\lnot\psi) \lor (\lnot\varphi\land\psi)}_{F_0} \lor
    \underbrace{(\varphi\land\psi)}_{F_1}
    \;\bigr)\enspace,
    \label{eq:KaPhitoPsi}
  \end{equation}
  where we have explicitly indicated the subformulas $F_0$ and $F_1$
  used in the notation of
  Definition~\ref{definition:segerberg-notation}.  Semantically,
  \eqref{eq:KaPhitoPsi} says that in each of the agent's accessible
  worlds, $\psi$ is true whenever $\varphi$ is true.  Now reasoning
  within $\KBeqm$, it follows that $K(\varphi\to\psi)$ is provably
  equivalent to $\varphi\mathbb{I}\psi$.  But then from $K(\varphi\to\psi)$
  and $B\varphi$, we may derive $\varphi\mathbb{I}\psi$ and $B\varphi$, from
  which we may derive $B\psi$ by (Scott). Hence (KBM) is derivable.

  We now consider (BF).  The formula $\bot\to\lnot\top$ is a classical
  tautology and hence $K(\bot\to\lnot\top)$ follows by (MN).  Hence
  by an instance of (KBM), which can be defined away in terms of
  axioms other than (BF) as above, it follows that $B\bot\to
  B\lnot\top$ and therefore that $\lnot B\lnot\top\to\lnot
  B\bot$.  Also by (N), (D), and (MP), we may derive $\lnot
  B\lnot\top$.  That is, (BF) is derivable.
\end{proof}

\begin{theorem}[$\KBeq$ Neighborhood Soundness and Completeness]
  \label{theorem:KBeq}
  $\KBeq$ is sound and complete with respect to the class $\Ceq$ of
  mid-threshold neighborhood models:
  \[
  \forall\varphi\in\Lang_\KB:\quad
  \KBeq\vdash\varphi \quad\Leftrightarrow\quad \Ceq\modelsn\varphi \enspace.
  \]
\end{theorem}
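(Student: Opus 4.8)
The plan is to get soundness by reduction to Theorem~\ref{theorem:KB-neighborhood-soundness} and to get completeness by rerunning the canonical-model construction from the proof of Theorem~\ref{theorem:KB-neighborhood-completeness} with $\KBeq$ in place of $\KB$, then checking the three extra neighborhood properties for the resulting model.

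For soundness I would first note that $\Ceq\subseteq\mathcal{C}$, so by Theorem~\ref{theorem:KB-neighborhood-soundness} every $\KB$-axiom is $\modelsn$-valid on $\Ceq$ and the rules (MP) and (MN) preserve validity there; it only remains to validate the three schemes of Table~\ref{table:additional-schemes}. Fix $\M=(W,R,V,N)\in\Ceq$ and $w\in W$ and write $X:=[w]\cap\semn{\varphi}^{\M}$, $Y:=[w]\cap\semn{\psi}^{\M}$, with $X_i,Y_i$ defined analogously. The truth clause for $B$ together with the identity $[w]\cap\semn{\lnot\chi}^{\M}=[w]-([w]\cap\semn{\chi}^{\M})$ makes (D) literally an instance of property (d) at $X$; makes (SC) an instance of (sc) at $X,Y$ (the conjunct $\check K(\lnot\varphi\land\psi)$ supplies a world of $Y$ lying outside $X$, so $X\subsetneq Y$); and makes (Scott) an instance of (scott) at the $X_i,Y_i$, because the semantic content of $(\varphi_i\mathbb{I}\psi_i)_{i=1}^m$ at $w$ is exactly the relation $(X_i\mathbb{I}Y_i)_{i=1}^m$ (the $X_i,Y_i$ being contained in $[w]$, so the condition on worlds outside $[w]$ is vacuous). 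Hence $\KBeq$ is sound for $\Ceq$.

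For completeness it suffices to show $\KBeq\nvdash\lnot\theta$ implies $\theta$ is satisfied at some pointed model in $\Ceq$. I would rebuild the canonical model $\M=(W,R,V,N)$ exactly as in the proof of Theorem~\ref{theorem:KB-neighborhood-completeness}, reading $\vdash_{\KBeq}$ for $\vdash_{\KB}$ throughout; since $\KBeq\supseteq\KB$, the Modal-Assumption Deduction Theorem, the In-class Identity Lemma, the Identity Lemma, the Definability Lemma, the verification that $N$ satisfies (kbc), (kbf), (n), (a), (kbm), and the Truth Lemma all transcribe verbatim (their proofs invoke only (CL), (KS5), (Ap), (An), (KBM), (MN), and consistency of the ambient theory, the last of which now follows from the soundness half just proved). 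The new work is to check (d), (sc), (scott) for the canonical $N$. For (d): if $X=[\varphi]\cap[w]\in N(w)$ with $w\cap M\vdash_{\KBeq}B\varphi$, then $[w]-X=[\lnot\varphi]\cap[w]$, and were this in $N(w)$ via some $\psi$ with $w\cap M\vdash_{\KBeq}B\psi$, the equality $[\psi]\cap[w]=[\lnot\varphi]\cap[w]$ would give $w\cap M\vdash_{\KBeq}K(\psi\leftrightarrow\lnot\varphi)$ (otherwise one extends $w\cap M$ inside $[w]$ to separate the two formulas) and hence $w\cap M\vdash_{\KBeq}B\lnot\varphi$ by (KBM), contradicting (D) together with the consistency of $w\cap M$. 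For (sc) and (scott) I would use the Definability Lemma to name the given sets $X,Y$ (resp.\ $X_1,\dots,X_m,Y_1,\dots,Y_m$) by formulas $X^d,Y^d\in C$ (resp.\ $X_i^d,Y_i^d\in C$) with $[X^d]\cap[w]=X$, and so on; standard canonical reasoning about $[w]$ then turns the hypothesis $[w]-X_i\notin N(w)$ into $\check B X_i^d\in w\cap M$, turns $X\subsetneq Y$ into $w\cap M\vdash_{\KBeq}\check K(\lnot X^d\land Y^d)$, turns $(X_i\mathbb{I}Y_i)_{i=1}^m$ into $w\cap M\vdash_{\KBeq}(X_i^d\mathbb{I}Y_i^d)_{i=1}^m$, and turns $X_1\in N(w)$ into $w\cap M\vdash_{\KBeq}B X_1^d$. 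Feeding these into the scheme (SC) (resp.\ (Scott)) yields $w\cap M\vdash_{\KBeq}B(X^d\lor Y^d)$ (resp.\ $w\cap M\vdash_{\KBeq}\bigvee_{j}B Y_j^d$, whence $w\cap M\vdash_{\KBeq}B Y_j^d$ for some $j$ by maximal consistency of $w$ in $C$), which by the definition of $N$ means precisely $Y=[X^d\lor Y^d]\cap[w]\in N(w)$ (resp.\ $Y_j=[Y_j^d]\cap[w]\in N(w)$). With $\M\in\Ceq$ in hand and $\theta$ extended to a maxcons $w_\theta\in W$ containing $\theta$, the Truth Lemma gives $\M,w_\theta\modelsn\theta$, completing the argument.

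The hard part will be the (scott) verification in the completeness direction: one has to see that the purely combinatorial semantic relation $(X_i\mathbb{I}Y_i)_{i=1}^m$ among subsets of $[w]$, once those subsets are named by the formulas supplied by the Definability Lemma, coincides with $\KBeq$-derivability from $w\cap M$ of the Segerberg formula $(\varphi_i\mathbb{I}\psi_i)_{i=1}^m=K(F_0\lor\cdots\lor F_m)$, so that the syntactic scheme (Scott) is actually applicable. Everything else is a careful but routine transcription of the $\KB$ proof, with the (d), (sc), (scott) checks following the pattern used there for (kbm).
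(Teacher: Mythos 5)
Your proposal is correct and follows essentially the same route as the paper: soundness by validating (D), (SC), (Scott) against the semantic properties (d), (sc), (scott) on top of the $\KB$ soundness result, and completeness by rerunning the canonical construction from the $\KB$ proof under $\vdash_{\KBeq}$ and verifying the three extra neighborhood properties via the Definability Lemma, with the $(X_i\mathbb{I}Y_i)_{i=1}^m$-to-$(X_i^d\mathbb{I}Y_i^d)_{i=1}^m$ transfer being exactly the delicate point the paper also works through. The only cosmetic slip is in the soundness check of (SC), where the instance of (sc) should be taken at $X$ and $[w]\cap\semn{\varphi\lor\psi}^{\M}$ rather than at $X$ and $[w]\cap\semn{\psi}^{\M}$, but your witness argument establishes the required proper inclusion all the same.
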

\begin{proof}
  Soundness is by induction on the length of derivation.  Most cases
  are as in the proof of
  Theorem~\ref{theorem:KB-neighborhood-completeness}.  We only need
  consider the remaining axiom schemes.
  \begin{itemize}
  \item Scheme (D) is valid: $\modelsn B\varphi\to \check B\varphi$.

    Suppose $\M,w\modelsn B\varphi$. This means
    $[w]\cap\semn\varphi\in N(w)$.  By (d), 
    \[
    [w]\cap\semn{\lnot\psi}=[w]-\semn\varphi =
    [w]-([w]\cap\semn\varphi)\notin N(w)\enspace.
    \]
    But this is what it means to have $\M,w\modelsn\check B\varphi$.

  \item Scheme (SC) is valid: $\modelsn \check B\varphi \land \check
    K(\lnot\varphi\land\psi) \to B(\varphi\lor\psi)$.

    Suppose $\M,w\modelsn\check B\varphi$ and $\M,w\modelsn \check
    K(\lnot\varphi\land\psi)$.  It follows that
    \[
    [w]-([w]\cap\semn{\varphi})=[w]\cap\semn{\lnot\varphi}\notin
    N(w)
    \]
    and that there exists $v\in[w]$ satisfying
    $\M,v\models\lnot\varphi\land\psi$.  But then
    $[w]\cap\semn{\varphi\lor\psi}\supsetneq[w]\cap\semn{\varphi}$ and
    therefore $[w]\cap\semn{\varphi\lor\psi}\in N(w)$ by (sc). Hence
    $\M,w\models B(\varphi\lor\psi)$.

  \item Scheme (Scott) is valid:
    \[
    \modelsn \textstyle [(\varphi_i\mathbb{I}\psi_i)_{i=1}^m
    \land B\varphi_1 \land \bigwedge_{i=2}^m \check B\varphi_i] \to
    \bigvee_{i=1}^m B\psi_i\enspace.
    \]

    Suppose $(\M,w)$ satisfies the antecedent of scheme (Scott).  It
    follows that each $v\in[w]$ satisfies at least as many
    $\varphi_i$'s as $\psi_i$'s, that $[w]\cap\semn{\psi_1}\in N(w)$,
    and that $[w]-\semn{\varphi_k}\notin N(w)$ for each
    $k\in\{2,\dots,m\}$.  Hence
    \[
    [w]\cap\semn{\varphi_1},\dots,[w]\cap\semn{\varphi_m}\mathbb{I}
    [w]\cap\semn{\psi_1},\dots, [w]\cap\semn{\psi_m}\enspace,
    \]
    from which it follows by (scott) that $[w]\cap\semn{\psi_j}\in
    N(w)$ for some $j\in\{1,\dots,m\}$.  Hence $\M,w\modelsn
    B\psi_j$, and thus $\M,w\modelsn\bigvee_{i=1}^m B\psi_i$.
  \end{itemize}
  Soundness has been proved.  

  For completeness, it suffices to show that the model $\M$ defined as
  in the proof of
  Theorem~\ref{theorem:KB-neighborhood-completeness}---except that now
  derivability is always taken with respect to $\KBeq$---is a
  mid-threshold neighborhood model; the rest of the argument is as in
  that proof, \emph{mutatis mutandis}.  Most of the properties of $\M$
  are shown in that proof.  What remains is for us to show that $\M$
  also satisfies (d), (sc), and (scott).
  \begin{description}
  \item[(d)] $X\in N(w)$ implies $X'\notin N(w)$, where
    $X':=[w]-X$.

    Suppose $X\in N(w)$.  Then we have $\varphi\in C$ such that
    $X=[\varphi]\cap[w]$ and $w\cap M\vdash_\KBeq B\varphi$.  By (D),
    it follows that $w\cap M\vdash_\KBeq\check B\varphi$.  Choose
    $\psi\in C$ satisfying $X'=[\psi]\cap[w]$.  We have
    $w\cap M\vdash_\KBeq\psi\to\lnot\varphi$, since otherwise we could
    extend $(w\cap M)\cup\{\psi,\varphi\}$ to a $v\in[w]$ such that
    $v\in[\psi]\cap[w]=X'$ and $v\in[\varphi]\cap[w]=X$, contradicting
    $X'\cap X=\emptyset$.  By (MP), we have
    $w\cap M\vdash_\KBeq K(\psi\to\lnot\varphi)$ and therefore
    $w\cap M\vdash_\KBeq B\psi\to B\lnot\varphi$ by (KBM).  So since
    $\check B\varphi=\lnot B\lnot\varphi$, it follows by classical
    reasoning that
    $w\cap M\vdash_\KBeq \check B\varphi\to\lnot B\psi$.  Since
    $w\cap M\vdash_\KBeq\check B\varphi$, it follows that
    $w\cap M\vdash_\KBeq\lnot B\psi$.  By the consistency of $\KBeq$
    (which follows by applying soundness to any mid-threshold
    epistemic neighborhood model), we have $w\cap M\nvdash_\KB B\psi$.
    So we have shown that $w\cap M\nvdash_\KBeq B\psi$ for each
    $\psi\in C$ satisfying $X'=[\psi]\cap[w]$.  Conclusion:
    $X'\notin N(w)$.

  \item[(sc)] If $X'\notin N(w)$ and $X\subsetneq Y\subseteq[w]$,
    then $Y\in N(w)$.

    Assume $X'\notin N(w)$ and $X\subsetneq Y\subseteq[w]$.  It
    follows from $X'\notin N(w)$ that we have
    $w\cap M\nvdash_\KBeq B(X')^d$.  Since $(X')^d\in C_0$, it follows
    that $B(X')^d\in C=\MCl(C_0)$ and therefore
    $\lnot B(X')^d\in w\cap M$ because $w$ is maxcons in $C$.  Hence
    $w\cap M\vdash_\KBeq \lnot B(X')^d$.  Now we have
    $w\cap M\vdash_\KBeq \lnot X^d\to (X')^d$, for otherwise we could
    extend $(w\cap M)\cup\{\lnot X^d,\lnot(X')^d\}$ to some $v\in[w]$
    such that $v\in[\lnot X^d]\cap[w]=X'$ and
    $v\in[\lnot(X')^d]\cap[w]=X$, contradicting $X'\cap X=\emptyset$.
    By (MP), we have $w\cap M\vdash_\KBeq K(\lnot X^d\to (X')^d)$ and
    therefore $w\cap M\vdash_\KBeq B\lnot X^d\to B(X')^d$ by (KBM).
    Since $w\cap M\vdash_\KBeq \lnot B(X')^d$, it follows by classical
    reasoning that $w\cap M\vdash_\KBeq\lnot B\lnot X^d$.  That is,
    $w\cap M\vdash_\KBeq\check B X^d$.

    Further, since $X\subsetneq Y\subseteq[w]$, it follows that there
    exists $y\in Y-X$ satisfying
    $y\in[Y^d]-[X^d]=[Y^d\land\lnot X^d]$.  Since
    $\lnot(Y^d\land\lnot X^d)\in C_0$, it follows that
    $\lnot K\lnot(Y^d\land\lnot X^d)\in C=\MCl(C_0)$.  But then
    $K\lnot(Y^d\land\lnot X^d)\notin w$, for otherwise it would follow
    from $y\in[w]$ that $w\cap M=y\cap M$ and hence
    $K\lnot(Y^d\land\lnot X^d)\in y$, from which it would follow by
    $\mathsf{T}$ and the fact that $y$ is maxcons in $C$ that
    $\lnot(Y^d\land\lnot X^d)\in y$, contradicting
    $y\in [Y^d\land\lnot X^d]$.  So since
    $K\lnot(Y^d\land\lnot X^d)\notin w$, we have by the fact that
    $\lnot K\lnot(Y^d\land\lnot X^d)\in C$ and the maximal
    $\KBeq$-consistency of $w$ that
    $\lnot K\lnot (Y^d\land\lnot X^d)= \check K(Y^d\land\lnot X^d)\in
    w$.
    Hence $w\cap M\vdash_\KBeq \check K(Y^d\land\lnot X^d)$. As
    $w\cap M\vdash_\KBeq\check B X^d$ as well, it follows by (SC) that
    $w\cap M\vdash_\KBeq B(Y^d\lor X^d)$.  But $[Y^d\lor X^d]=Y$ by
    our assumption $X\subsetneq Y$ and therefore we have shown that
    $Y\in N(w)$.

  \item[(scott)] If $X_1,\dots,X_m,Y_1,\dots,Y_m\subseteq[w]$,
    $(X_i\mathbb{I}Y_i)_{i=1}^m$, $X_1\in N(w)$, and
    $X_i':=[w]-X_i\notin N(w)$ for all $i\in\{2,\dots,m\}$, then there
    exists $j\in\{1,\dots,m\}$ such that $Y_j\in N(w)$.

    Assume we have the above-stated antecedent of the (scott)
    property.  It follows from $X_1\in N(w)$ that
    $w\cap M\vdash_\KBeq BX_1^d$.  For $i\in\{2,\dots,m\}$, it follows
    from $X_i'\notin N(w)$ by an argument as in the above proof for
    (sc) that $w\cap M\vdash\check BX_i^d$. If we can prove that
    $w\cap M\vdash_\KBeq (X_i^d\mathbb{I}Y_i^d)_{i=1}^m$ as well, then
    we would have by (Scott) that
    $w\cap M\vdash_\KBeq\bigvee_{j=1}^m BY_j^d$. But then since
    $BY_j^d\in C$ for each $j\in\{1,\dots,m\}$, we would have
    $BY_k^d\in w$ for some $k\in\{1,\dots,m\}$ by the maximal
    $\KBeq$-consistency of $w$, hence $w\cap M\vdash_\KBeq BY_k^d$,
    and hence $Y_k=[Y_k^d]\in N(w)$.

    So it suffices for us to prove that
    $w\cap M\vdash_\KBeq (X_i^d\mathbb{I}Y_i^d)_{i=1}^m$. Proceeding,
    we recall that $(X_i^d\mathbb{I}Y_i^d)_{i=1}^m$ abbreviates the
    formula $K(F_0\lor\cdots\lor F_m)$, where $F_k$ is the disjunction
    of all conjunctions
    \begin{equation}
      d_1X_1^d\land\cdots\land d_mX_m^d\land e_1Y_1^d\land\cdots\land
      e_mY_m^d\enspace,
      \label{eq1:KBeq-completeness}
    \end{equation}
    satisfying the property that exactly $k$ of the $d_i$'s are the
    empty string, at least $k$ of the $e_i$'s are the empty string,
    and the rest of the $d_i$'s and $e_i$'s are the negation sign
    $\lnot$. Since each of the $X_i$'s and $Y_i$'s is a member of
    $C_0$ and $C_0$ is closed under Boolean operations, each
    conjunction \eqref{eq1:KBeq-completeness} is a member of $C_0$,
    and hence so is the disjunction $F_0\lor\cdots\lor F_m$.  But then
    $K(F_0\lor\cdots\lor F_m)\in C=\MCl(C_0)$. We make use of these
    facts tacitly in what follows.  Now we have by our assumption
    $(X_i\mathbb{I}Y_i)_{i=1}^m$ and the fact that the $X_i$'s and
    $Y_i$'s are subsets of $[w]$ that every world in $[w]$ is
    contained in at least as many of the $X_i$'s as in the $Y_i$'s.
    Every world in $[w]$ therefore contains at least one of the
    $F_i$'s, for otherwise it would follow by maximal
    $\KBeq$-consistency that we could find a world $v\in[w]$ that is
    not contained in at least as many of the $X_i$'s as in the
    $Y_i$'s, a contradiction.  By maximal $\KBeq$-consistency, every
    world in $[w]$ thereby contains the disjunction
    $F_0\lor\cdots\lor F_m$.  But then it follows by maximal
    $\KBeq$-consistency and $\mathsf{T}$-reasoning that
    $K(F_0\lor\cdots\lor F_m)\in w$, and hence
    $w\cap M\vdash_\KBeq (X_i^d\mathbb{I}Y_i^d)_{i=1}^m$.  \qedhere
  \end{description}
\end{proof}

Since $\KBeq$ is sound and complete with respect to mid-threshold
neighborhood models, we would expect from
Corollary~\ref{corollary:lenzen} that $\KBeq$ is sound and complete
with respect to the probability interpretation for threshold $c=\frac
12$.

\begin{theorem}[Due to \cite{Lenzen1980:gwuw}; $\KBeq$
  Probability Soundness and Completeness]
  $\KBeq$ is sound and complete for threshold $\frac12$ with respect
  to the class of epistemic probability models:
  \[
  \textstyle 
  \forall\varphi\in\Lang_\KB:\quad
  \KBeq\vdash\varphi
   \quad\Leftrightarrow\quad
  {}\modelsp\varphi^{\frac 12} \enspace.
  \]
\end{theorem}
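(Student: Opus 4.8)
The plan is to derive both directions by chaining the neighborhood-semantics results for $\KBeq$ (Theorem~\ref{theorem:KBeq}) with the two directions of model agreement, rather than proving anything new directly. For soundness, suppose $\KBeq\vdash\varphi$ and let $\M'=(W,R,V,P)$ be an arbitrary epistemic probability model. By Lemma~\ref{lemma:correctness}, the derived structure $\M'^{\frac 12}=(W,R,V,N^{\frac 12})$ is a mid-threshold neighborhood model, and it agrees with $\M'$ for threshold $\frac 12$ (immediate from the definition of $N^{\frac 12}$ together with full support of $P$). Since $\KBeq$ is neighborhood-sound for the class $\Ceq$ of mid-threshold models (Theorem~\ref{theorem:KBeq}), we get $\M'^{\frac 12}\modelsn\varphi$, hence $\M'^{\frac 12},w\modelsn\varphi$ for every $w\in W$; the Agreement Theorem (Theorem~\ref{theorem:agreement}) then yields $\M',w\modelsp\varphi^{\frac 12}$ for every $w\in W$, that is, $\M'\modelsp\varphi^{\frac 12}$. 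As $\M'$ was arbitrary, $\modelsp\varphi^{\frac 12}$. (Equivalently, one could run a direct induction on $\KBeq$-derivations, using Theorem~\ref{theorem:knowledge} for the $\mathsf{S5}$ $K$-axioms, the relevant parts of Theorem~\ref{theorem:belief} for (BF), (N), (Ap), (An), (KBM), and items~\ref{item:B-D}, \ref{item:B-SC}, and~\ref{item:B-Len} for the three extra schemes (D), (SC), (Scott); but the route through agreement is shorter.)

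For completeness I argue contrapositively. Assume $\KBeq\nvdash\varphi$. By the completeness half of Theorem~\ref{theorem:KBeq}, $\Ceq\not\modelsn\varphi$, so there is a mid-threshold neighborhood model $\M$ and a world $w$ with $\M,w\not\modelsn\varphi$, equivalently $\M,w\modelsn\lnot\varphi$. By Corollary~\ref{corollary:lenzen} --- hence ultimately by Theorem~\ref{theorem:lenzen} --- there is an epistemic probability model $\M'=(W,R,V,P)$ agreeing with $\M$ for threshold $\frac 12$. Applying the Agreement Theorem (Theorem~\ref{theorem:agreement}) to $\lnot\varphi$ gives $\M',w\modelsp(\lnot\varphi)^{\frac 12}$; since $(\lnot\varphi)^{\frac 12}=\lnot(\varphi^{\frac 12})$ by Definition~\ref{definition:translation}, this means $\M',w\not\modelsp\varphi^{\frac 12}$, so $\not\modelsp\varphi^{\frac 12}$. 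Contraposing, $\modelsp\varphi^{\frac 12}$ implies $\KBeq\vdash\varphi$, completing the argument.

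The only points needing attention are that the model witnessing $\KBeq\nvdash\varphi$ is genuinely \emph{mid-threshold} --- which is precisely what the canonical-model construction in the proof of Theorem~\ref{theorem:KBeq} secures by verifying (d), (sc), and (scott) --- and that the translation $\cdot^{\frac 12}$ commutes with the Boolean connectives, so that falsifying $\varphi$ on the probability side corresponds to satisfying $\lnot\varphi$ on the neighborhood side; both are immediate from the cited statements. So there is no real new obstacle here: the substantive content lies upstream, and the completeness direction in particular rests entirely on Theorem~\ref{theorem:lenzen}, whose proof is where Scott's linear-algebraic theorem (Theorem~\ref{theorem:scott}) is used to manufacture the required full-support probability measure --- satisfying ``$X\in N(w)$ iff $P_w(X)>\frac 12$'' --- out of the neighborhood function of a mid-threshold model.
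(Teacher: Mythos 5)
Your proposal is correct and matches the paper's (very terse) official proof: the completeness half is obtained exactly as you describe, by chaining Theorem~\ref{theorem:KBeq} with Corollary~\ref{corollary:lenzen} and the Agreement Theorem, with the contrapositive bookkeeping you spell out. The only divergence is in the soundness half, where the paper cites Theorems~\ref{theorem:knowledge} and~\ref{theorem:belief} to verify the axioms directly against the probability semantics (the route you mention parenthetically), whereas your primary route detours through $\M'^{\frac 12}$, Lemma~\ref{lemma:correctness}, and agreement --- both are valid, and yours has the mild advantage of not re-verifying the schemes probabilistically.
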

\begin{proof}
  Soundness is by Theorems~\ref{theorem:knowledge} and
  \ref{theorem:belief}.  Completeness is by Theorem~\ref{theorem:KBeq}
  and Corollary~\ref{corollary:lenzen}.
\end{proof}

\section{Conclusion} 
\label{SectionFRW}

\paragraph{Summary}

We have provided a study of unary modal logics of high probability.
We introduced epistemic neighborhood models and studied their
connection to traditional epistemic probability models by way of a
natural notion of ``agreement.''  We listed the Lenzen-derivative
properties of epistemic neighborhood models that guarantee the
existence of an agreeing probability measure for threshold
$c=\frac 12$.  The list of properties required to guarantee the
existence of an agreeing probability measure for other thresholds is
unknown.  We also presented our study from a proof theoretic point of
view by introducing a probabilistically sound but incomplete logic
$\KB$ and our version of Lenzen's probabilistically sound and complete
logic $\KBeq$ for threshold $c=\frac 12$.  It is open as to the
principles one must add to $\KB$ in order to obtain probabilistic
completeness for other thresholds.  We also proved soundness and
completeness of $\KB$ and of $\KBeq$ with respect to a corresponding
class of epistemic neighborhood models.  The result for $\KBeq$ along
with our Theorem~\ref{theorem:lenzen}, a theorem we credit to Lenzen,
shows that $\KBeq$ is the logic of probabilistic certainty and of
probability exceeding $c=\frac 12$.  It is our hope that our
repackaging of Professor Lenzen's result will make his work more
accessible to a broad audience of modern modal logicians.  We also
hope that the connection we have made with neighborhood semantics will
prove useful in future work on modal logics of qualitative
probability.

\paragraph{Open Questions for Future Work}

\begin{enumerate}
\item The main open question is the following: given a
  ``high-threshold'' $c\in(\frac 12,1)\cap\Rat$, find the exact
  extension $\KB^c$ of $\KB$ that is probabilistically sound and
  complete for threshold $c$ with respect to the class of epistemic
  probability models, in the sense that we would have:
  \[
  \textstyle \forall\varphi\in\Lang_\KB:\quad \KB^c\vdash\varphi
  \quad\Leftrightarrow\quad \modelsp\varphi^c\enspace.
  \]
  Observing that (SC) and (Scott) are not valid for high-thresholds
  $c>\frac 12$, we conjecture that what is required are
  threshold-specific variants of (SC) and (Scott) that will together
  guarantee probability soundness and completeness.  Toward this end,
  we suggest the following schemes as a starting point:
\begin{center}
  \renewcommand{\arraystretch}{1.3}
  \begin{tabular}[t]{cl}
    (SC$_0^s$) &
    $\textstyle(\check K\varphi_0\land
    \bigwedge_{i=1}^s\check B\varphi_i\land
    \bigwedge_{i\neq j=0}^s K(\varphi_i\to\lnot \varphi_j))\to 
    B(\bigvee_{i=0}^s \varphi_i)$
    \\
    (SC$_1^s$) &
    $\textstyle(\bigwedge_{i=1}^s\check B\varphi_i\land
    \bigwedge_{i\neq j=1}^s K(\varphi_i\to\lnot \varphi_j))
    \to B(\bigvee_{i=1}^s \varphi_i)$
    \\
    (WS) &
    $\textstyle [(\varphi_i\mathbb{I}\psi_i)_{i=1}^m
    \land \bigwedge_{i=1}^m B\varphi_i] \to
    \bigvee_{i=1}^m B\psi_i$
  \end{tabular}
\end{center}
Observe that (SC) is just (SC$_0^1$).  Further, if we define
$s':=c/(1-c)$ and $s:=\text{ceiling}(s')$, then scheme (SC$_0^s$) is
probabilistically sound if $s=s'$ and scheme (SC$_1^s$) is
probabilistically sound if $s\neq s'$.  The reasoning for this is as
follows: $s'$ tells us the number of $(1-c)$'s that divide $c$.  In
particular, recall from Lemma~\ref{lemma:dual} that the probabilistic
interpretation of $\check B\varphi$ is that $\varphi$ is assigned
probability at least $1-c$.  Therefore, if we have $s$ disjoint
propositions that each have probability at least $1-c$, then the
probability of their disjunction will have probability
$s\cdot(1-c)\geq c$.  This inequality is strict if $s\neq s'$ and is
in fact an equality if $s=s'$.  Therefore, in the case $s\neq s'$,
scheme (SC$_1^s$) is sound: $s$ disjoint propositions each having
probability $1-c$ together sum to a probability exceeding the
threshold $c$.  And in case $s=s'$, scheme (SC$_0^s$) is sound: $s$
disjoint propositions each having probability $1-c$ together sum to a
probability that equals $c$, so adding some additional probability
from another disjoint proposition $\varphi_0$ will yield a disjunction
whose probability again exceeds $c$.  In either case, exceeding
probability $c$ is what we equate with belief, so soundness is proved.
We note that scheme (WS) can be shown to be sound by adapting the
proof Theorem~\ref{theorem:belief}\eqref{item:B-Len}.  The epistemic
neighborhood model versions of (SC$_0^s$), (SC$_1^s$), and (WS) are:
\begin{description}
\item[(sc$_0^s$)] $\forall X_1,\dots,X_s,Y\subseteq[w]$:
  if $[w]-X_1,\dots,[w]-X_s\notin N(w)$,
  the $X_i$'s are pairwise disjoint, and $Y\supsetneq\bigcup_{i=1}^sX_i$,
  then $Y\in N(w)$.

\item[(sc$_1^s$)] $\forall X_1,\dots,X_s\subseteq[w]$: if
  $[w]-X_1,\dots,[w]-X_s\notin N(w)$ and the $X_i$'s are
  pairwise disjoint, then $\bigcup_{i=1}^sX_i\in N(w)$.
 
\item[(ws)] $\forall m\in\Int^+,\forall
  X_1,\dots,X_m,Y_1,\dots,Y_m\subseteq[w]:$
  \[
  \renewcommand{\arraystretch}{1.3}
  \begin{array}{ll}
    \text{if }
    &
    \begin{array}[t]{l}
      X_1,\dots,X_m\mathbb{I}Y_1,\dots,Y_m\quad\text{and}
      \\
      \forall i\in\{1,\dots,m\}:
      X_i\in N(w) \enspace\text{,}
    \end{array}
    \\
    \text{then }
    &
    \exists j\in\{1,\dots,m\}: Y_j\in N(w)\enspace\text{.}
  \end{array}
  \]
\end{description}
If $M$ is an epistemic neighborhood model, then a slight modification
of the proof of property (scott) in Lemma~\ref{lemma:correctness}
shows that $M^c$ satisfies (ws).  We presume that an adaptation of the
proof for the proof of property (sc) in the same lemma will show that
$M^c$ satisfies (sc$_0^s$) if $s=s'$ and (sc$_1^s$) if $s\neq s'$.

We remark that (WS) is not threshold-specific, though it is sound for
all high-thresholds $c>\frac 12$.  We suspect that a
threshold-specific variant may be required in order to adapt Lenzen's
proof of $\KBeq$ probability soundness and completeness for threshold
$c=\frac 12$ (Theorem~\ref{theorem:lenzen}).

\item Another open question is the exact relationship between
  Segerberg's comparitive operator $\varphi\preceq\psi$ (``$\varphi$
  is no more probable than $\psi$'')
  \cite{Gardenfors75,Segerberg1971:qpiams} and our unary operators $K$ and
  $B$.  The formula $B\varphi$ is equivalent to
  $\lnot\varphi\prec\varphi$.  However, it is not clear how the logics
  of these operators are related. Also, we suspect that a language
  with $\preceq$ is strictly more expressive.

\item Yet another direction is the extension of our work to Bayesian
  updating.  Given a pointed epistemic probability model $(\M,w)$
  satisfying $\varphi$, let
  \[
  \M[\varphi]=(W[\varphi],R[\varphi],V[\varphi],P[\varphi])
  \]
  be defined by
\begin{eqnarray*}
  W[\varphi] &:=& \semp{\varphi}^\M \\
  R[\varphi] &:=& R \cap (W[\varphi]\times W[\varphi]) \\
  V[\varphi](w) &:=& V(w) \text{ for } w\in W[\varphi] \\
  P[\varphi](w) &:=&
    \frac{ P(w) }{ P(\semp{\varphi}^\M) }
\end{eqnarray*}
It is not difficult to see that $\M[\varphi]$ is an epistemic probability model and
\[
P[\varphi](X) =
\frac{ P(X\cap\semp{\varphi}^\M) }{ P(\semp{\varphi}^\M) } = 
P[\varphi](X|\semp{\varphi}^\M)\enspace,
\]
where the value on the right is the probability of $X$ conditional on
$\semp{\varphi}^\M$.  It would be interesting to investigate the analog
of this operation in epistemic neighborhood models.  The operation may
also have a close relationship with the study of updates in
Probabilistic Dynamic Epistemic Logic
\cite{BenGerKoo09:SL,BalSme08:Synt}.

\item Finally, we have only considered a single-agent version of our
  logics $\KB$ and $\KBeq$.  The reason for this is that obtaining
  completeness for $\KBeq$ with respect to the class of finite
  mid-threshold neighborhood models requires us to construct a finite
  countermodel satisfying (sc), as we did in the completeness portion
  of the proof of Theorem~\ref{theorem:KBeq}.  However, this property
  has an antecedent that includes the negative condition
  $X'\notin N(w)$ and from this and $X\subsetneq Y\subseteq[w]$, we
  are to conclude the positive condition $Y\in N(w)$.  Referring the
  reader to the completeness portions of the proofs of
  Theorems~\ref{theorem:KB-neighborhood-completeness} and
  \ref{theorem:KBeq} for definitions and terminology, the trick to
  making things work in the single-agent case is to prove the
  Definability Lemma using a particular closure construction that
  ensures every potential neighborhood $X\subseteq[w]$ is definable by
  a formula $X^d$ such that $BX^d$ is a member of the closure set $C$.
  This makes crucial use of the In-class Identity Lemma.  However, our
  proof of this lemma depends on the assumption that maxcons sets
  $u,v\in[w]$ differ only in non-modal formulas.  In the
  straightforward multi-agent version of our setting, we would have an
  equivalence class $[w]_a$ consisting of all maxcons sets that agree
  on modal formulas $K_a\varphi$ and $B_a\psi$ for a given agent $a$.
  But then two worlds $u,v\in[w]_a$ could disagree on modal formulas
  $K_b\varphi$ or $B_b\psi$ for some agent $b\neq a$, which leads to a
  breakdown in the current proof of the In-class Identity Lemma and
  therefore presents problems for guaranteeing definability of
  potential neighborhoods satisfying the desired membership property.
  Remedying this in a multi-agent version of a finite mid-threshold
  neighborhood model is not straightforward because it is difficult to
  simultaneously satisfy (sc), all other properties of finite
  mid-threshold neighborhood models, and the
  definability-with-membership property. We therefore leave for future
  work the matter of proving completeness of multi-agent $\KBeq$ with
  respect to the class of finite multi-agent mid-threshold
  neighborhood models.  We note that multi-agent $\KBeq$ is obtained
  from our existing axiomatization by simply adding a subscript to all
  occurrences of a modal operator $K$ or $B$ in our present
  axiomatization. Multi-agent $\KB$ is obtained similarly, though
  completeness for multi-agent $\KB$ with respect to the full class of
  finite multi-agent epistemic neighborhood models can be shown
  without much difficulty because the problematic property (sc) need
  not be satisfied.
\end{enumerate}

\paragraph{Acknowledgements} 

Thanks to Alexandru Baltag, Johan van Benthem, Jim Delgrande, Peter
van Emde Boas, Andreas Herzig, Thomas Icard, Sonja Smets, and Rineke
Verbrugge for helpful comments and pointers to the literature.

\bibliographystyle{alpha}
\bibliography{vER-BEPL}  

\begin{thebibliography}{vBGK09}

\bibitem[BdRV01]{BlaRijVen:ml}
P.~Blackburn, M.~de~Rijke, and Y.~Venema.
\newblock {\em Modal Logic}.
\newblock Cambridge Tracts in Theoretical Computer Science. Cambridge
  University Press, 2001.

\bibitem[BS08]{BalSme08:Synt}
Alexandru Baltag and Sonja Smets.
\newblock Probabilistic dynamic belief revision.
\newblock {\em Synthese}, 165(2):179--202, 2008.

\bibitem[Che80]{Chellas:ml}
B.F. Chellas.
\newblock {\em Modal Logic: An Introduction}.
\newblock Cambridge University Press, 1980.

\bibitem[dF37]{deFinetti37}
Bruno de~Finetti.
\newblock La pr{\'e}vision: ses lois logiques, ses sources subjectives.
\newblock {\em Annals de l'Institut Henri Poincar{\'e}}, 7:1--68, 1937.
\newblock Translated into English in H.E.~Kyburg, Jr. and H.E.~Smokler (eds.),
  \emph{Studies in Subjective Probability}, 1964, pp.~93--158.

\bibitem[dF51]{deFinetti51}
Bruno de~Finetti.
\newblock La ``logica del plausibile'' secondo la concezione di polya.
\newblock In {\em Atti della {XLII} Riunione, Societa Italiana per il Progresso
  delle Scienze}, pages 227--236, 1951.

\bibitem[Eij13]{Eijck2013:lap}
Jan~van Eijck.
\newblock Learning about probability.
\newblock available from http://homepages.cwi.nl:~/jve/software/prodemo, 2013.

\bibitem[EoS14]{EijckSchwarzentruber2014:epls}
Jan~van Eijck and Fran\c ois Schwarzentruber.
\newblock Epistemic probability logic simplified.
\newblock In Rajeev Gor{\'e}, Barteld Kooi, and Agi Kurucz, editors, {\em
  Advances in Modal Logic}, volume~10, pages 158--177, 2014.

\bibitem[G{\"a}r75]{Gardenfors75}
Peter G{\"a}rdenfors.
\newblock Qualitative probability as an intensional logic.
\newblock {\em Journal of Philosophical Logic}, 4(2):171--185, 1975.

\bibitem[Hal03]{Halpern2003:rau}
J.~Halpern.
\newblock {\em Reasoning About Uncertainty}.
\newblock {MIT} Press, 2003.

\bibitem[Her03]{Herzig2003:mpbaa}
A.~Herzig.
\newblock Modal probability, belief, and actions.
\newblock {\em Fundamenta Informaticae}, 57:323--344, 2003.

\bibitem[HI13]{HollidayIcard2013:msaqs}
Wesley~H. Holliday and Thomas~F. Icard.
\newblock Measure semantics and qualitative semantics for epistemic modals.
\newblock In {\em Proceedings of {SALT}}, volume~23, pages 514---534, 2013.

\bibitem[Jef04]{Jeffrey2004:sptrt}
Richard Jeffrey.
\newblock {\em Subjective Probability --- The Real Thing}.
\newblock Cambridge University Press, 2004.

\bibitem[K{\"o}r08]{koerner2008naive}
T.W. K{\"o}rner.
\newblock {\em Naive Decision Making: Mathematics Applied to the Social World}.
\newblock Cambridge University Press, 2008.

\bibitem[KPS59]{KraPraSei59:AMS}
Charles~H. Kraft, John~W. Pratt, and A.~Seidenberg.
\newblock Intuitive probability on finite sets.
\newblock {\em The Annals of Mathematical Statistics}, 30(2):408--419, 1959.

\bibitem[KT12]{KyburgTeng2012:tlorkr}
Henry~E. Kyburg and Choh~Man Teng.
\newblock The logic of risky knowledge, reprised.
\newblock {\em International Journal of Approximate Reasoning}, 53:274--285,
  2012.

\bibitem[Kyb61]{Kyburg1961:patlorb}
H.E. Kyburg.
\newblock {\em Probability and the Logic of Rational Belief}.
\newblock Wesleyan University Press, Middletown, {CT}, 1961.

\bibitem[Len80]{Lenzen1980:gwuw}
Wolfgang Lenzen.
\newblock {\em Glauben, {W}issen und {W}ahrscheinlichkeit --- {S}ystene der
  epistemischen {L}ogik}.
\newblock Springer Verlag, Wien \&\ New York, 1980.

\bibitem[Len03]{Lenzen2003:kbasp}
Wolfgang Lenzen.
\newblock Knowledge, belief, and subjective probability --- outlines of a
  unified theory of epistemic/doxastic logic.
\newblock In V.F. Hendricks, K.F. Jorgensen, and S.A. Pedersen, editors, {\em
  Knowledge Contributors}, number 322 in Synthese Library, pages 17--31.
  Kluwer, 2003.

\bibitem[Sco64]{Sco64:JMP}
Dana Scott.
\newblock Measurement structures and linear equalities.
\newblock {\em Journal of Mathematical Psychology}, 1:233--247, 1964.

\bibitem[Seg71]{Segerberg1971:qpiams}
K.~Segerberg.
\newblock Qualitative probability in a modal setting.
\newblock In J.~Fenstad, editor, {\em Proceedings of the 2nd {S}candinavian
  Logic Symposium}, pages 341--352, Amsterdam, 1971. North Holland.

\bibitem[vBGK09]{BenGerKoo09:SL}
Johan van Benthem, Jelle Gerbrandy, and Barteld Kooi.
\newblock Dynamic update with probabilities.
\newblock {\em Studia Logica}, 93(1):67--96, 2009.

\bibitem[WF79]{WalleyFine1979:vomacp}
P.~Walley and T.L. Fine.
\newblock Varieties of modal (classificatory) and comparative probability.
\newblock {\em Synthese}, 41:321--374, 1979.

\end{thebibliography}

\end{document}